\newtheorem{theorem}{Theorem}
\newtheorem{lemma}{Lemma}
\newtheorem{proposition}[theorem]{Proposition}
\newtheorem{corollary}[theorem]{Corollary}
\def\T{{ \mathrm{\scriptscriptstyle T} }}
\def\mr{\mathrm}
\newcommand{\bige} {\mbox{\boldmath ${\bm \epsilon}$}}
\newcommand{\norm} {\|}
\newcommand{\bk}{{\bf k}}
\newcommand{\bt}{{\bf t}}
\newcommand{\be} {\begin{eqnarray*}}
\newcommand{\ee} {\end{eqnarray*}}
\begin{document}


\title{Scalable Bayesian Variable Selection Using Nonlocal Prior Densities in Ultrahigh-dimensional Settings}

\author{Minsuk Shin%
\thanks{Electronic address: \texttt{minsuk@stat.tamu.edu}}}
\author{Anirban Bhattacharya%
\thanks{Electronic address: \texttt{anirbanb@stat.tamu.edu}}}
\author{Valen E. Johnson%
\thanks{Electronic address: \texttt{vjohnson@stat.tamu.edu}}}
  
\affil{Department of Statistics, Texas A\&M University, Texas, U.S.A}
\date{}

\maketitle

\begin{abstract}
 Bayesian model selection procedures based on nonlocal alternative prior densities are extended to ultrahigh dimensional settings and compared to other variable selection procedures using precision-recall curves.  Variable selection procedures included in these comparisons include methods based on $g$-priors, reciprocal lasso, adaptive lasso, scad, and minimax concave penalty criteria.  The use of precision-recall curves eliminates the sensitivity of our conclusions to the choice of tuning parameters.  We find that Bayesian variable selection procedures based on nonlocal priors are competitive to all other procedures in a range of simulation scenarios, and we subsequently explain this favorable performance through a theoretical examination of their consistency properties.  When certain regularity conditions apply, we demonstrate that the nonlocal procedures are consistent for linear models even when the number of covariates $p$ increases sub-exponentially with the sample size $n$.  A model selection procedure based on Zellner's $g$-prior is also found to be competitive with penalized likelihood methods in identifying the true model, but the posterior distribution on the model space induced by this method is much more dispersed than the posterior distribution induced on the model space by the nonlocal prior methods. We investigate the asymptotic form of the marginal likelihood based on the nonlocal priors and show that it attains a unique term that cannot be derived from the other Bayesian model selection procedures. We also propose a scalable and efficient algorithm called Simplified Shotgun Stochastic Search with Screening (S5) to explore the enormous model space, and we show that S5 dramatically reduces the computing time without losing the capacity to search the interesting region in the model space, at least in the simulation settings considered.  The S5 algorithm is available in an \verb R ~package {\it BayesS5} on \texttt{CRAN}.\end{abstract}

 {\em Key words: Bayesian variable selection; Nonlocal prior; Precision-recall curve; Strong model consistency;  Ultrahigh-dimensional data.}

\section{Introduction}
In the context of hypothesis testing, \citet{johnson2010use} defined nonlocal (alternative) priors as densities that are exactly zero whenever a model parameter equals its null value. Nonlocal priors were extended to model selection problems in \citet{Johnson2012}, where product moment (pMoM) prior and product inverse moment (piMoM) prior densities were introduced as priors on a vector of regression coefficients.  In $p\leq n$ settings, model selection procedures based on these priors were demonstrated to have a strong model selection property: the posterior probability of the true model converges to 1 as the sample size $n$ increases.  
More recently, \citet{rossell2013high} and \citet{rossell2015non} proposed product exponential moment (peMoM) prior densities that have similar behavior to piMoM densities near the origin. However, the behavior of nonlocal priors in $p \gg n$ settings remains understudied to date  (particularly in comparison to other commonly used variables selection procedures), which serves as the motivation for this article.

We undertook a detailed simulation study to compare the performance of nonlocal priors in $p \gg n$ settings under sparsity with a 
host of penalization methods including the least absolute shrinkage and selection operator (lasso; \citet{tibshirani1996regression}), smoothly clipped absolute deviation (scad; \cite{fan2001variable}), adaptive lasso \citep{zou2006adaptive}, minimum convex penalty (mcp; \cite{zhang2010nearly}), and the reciprocal lasso (rlasso), recently been proposed by \citet{song2015high}. The penalty function of the rlasso is equivalent to the negative log-kernel of nonlocal prior densities; further connections are described in Section \ref{sec:rlasso}. As a natural Bayesian competitor, we also considered the widely used $g$-prior \citep{zellner1986,liang2008mixtures}, which is a local prior in the sense of \citet{johnson2010use}. We used precision-recall curves \citep{davis2006relationship} as a basis for comparison  between methods. These curves eliminate the effect of the choice of tuning parameters for each method so that the comparison across different methods can be transparent. It has been argued \citep{davis2006relationship} that in cases where only a tiny proportion of variables are significant, precision-recall curves are more appropriate { tools for comparison than are the more widely used receiver operating characteristic curves. While the ROC curves present a trade-off between the type I error and the power of a decision procedure, precision-recall curves examine the trade-off between the power and the false discovery rate.

{ Our studies indicate that Bayesian procedures based on nonlocal priors and the $g$-prior perform better than penalized likelihood approaches in a sense that they achieve a lower false discovery rate, while maintaining the same power of the decision procedure. Posterior distributions on the model space based on nonlocal priors were found to be more tightly concentrated around the  maximum a posteriori model than the posterior based on $g$-priors, implying that they had a faster rate of posterior concentration. We also identified the oracle hyperparameter that maximizes the posterior probability of the true model for the Bayesian procedures. The growth-rate of these oracle hyperparameters with $p$ also offers an interesting contrast between nonlocal and local priors. In the case of $g$-priors, the oracle value of $g$ varied between $7.83\times 10^8$ and $4.29\times 10^{13}$ as $p$ ranged between $1000$ and $20000$. For the same range of $p$, the oracle value of $\tau$ varied between $1.97$ and $3.60$, where $\tau$ is the tuning parameter for nonlocal priors described in Section 2.} \cite{george2000calibration} argued from a minimax perspective that the $g$ parameter should satisfy $g \asymp p^2$, which explains the large values of the optimal $g$. 
However, using asymptotic arguments to obtain default hyperparameters is difficult because the constant of proportionality is typically unknown. Moreover, when $g$ is very large, the $g$-prior assigns negligible prior mass at the origin, essentially resulting in a nonlocal like prior. A similar point can be made about the recently proposed Bayesian shrinking and diffusing (BASAD) priors \citep{Narisetty2014}. On the other hand, the optimal hyperparameter value for the nonlocal priors is stable with increasing $p$, growing at a very slow rate. 

Motivated by this empirical finding, we studied properties of two classes of nonlocal priors allowing the hyperparameter $\tau$ to scale with $p$. Using a fixed value of $\tau$, it seems that strong model selection consistency is possible only when $p \leq n$ \citep{Johnson2012}. In this article, we establish that nonlocal priors can achieve strong model selection consistency even when the number of variables $p$ increases sub-exponentially in the sample size $n$, provided that the hyperparameter $\tau$ is asymptotically larger than $\log p$.  This theoretical result is consistent with our empirical finding.

\section{Nonlocal prior densities for regression coefficients}
We consider the standard setup of a Gaussian linear regression model with a univariate response and $p$ candidate predictors. Let $y = (y_1, \ldots, y_n)^{\T}$ denote a vector of responses for $n$ individuals and ${ X}$ an $n\times p$ matrix of covariates. We denote a model by $\bk= 
\{ k_1,\ldots, k_{|\bk|} \}$, with $1 \leq k_1 < \ldots < k_{|\bk|} \leq p$. Given a model $\bk$, let ${ X_\bk}$ denote the design matrix formed from the columns of ${X}_n$ corresponding to model $\bk$ and $\beta_\bk=(\beta_{k,1}, \ldots ,\beta_{k,{|\bk|}})^{\T}$ the regression coefficient for model $\bk$.
Under each model \bk, the linear regression model for the data is 
\begin{eqnarray}\label{eq:like}
y  = {X}_{\bk} {\beta}_\bk+\epsilon,
\end{eqnarray}
where $\epsilon \sim N_n(0,\sigma^2\mr{I}_n)$.
Let $\bt$ denote the true, or data-generating model and let ${\beta}_\bt^0$ be the true regression coefficient under model $\bt$. We assume that the true model is fixed but unknown. 

Given a model $\bk$, the product exponential moment (peMoM) prior density \citep{rossell2013high,rossell2015non} for the vector of regression coefficients $\beta_{\bk}$ is defined as 
\begin{align}\label{eq:general_nonlocal}
\pi(\beta_{\bk} \mid \sigma^2, \tau, \bk) = C^{-|\bk|} \prod_{j=1}^{|\bk|}\exp\{-\beta_{\bk,j}^2 /(2\sigma^2\tau) -\tau/\beta_{\bk,j}^2 \}.
\end{align}
The normalizing constant $C$ can be explicitly calculated as 
\begin{align}\label{eq:normalizer}
C = \int_{ -\infty}^{\infty} \exp\{-t^2/(2 \sigma^2 \tau)-\tau/t^2\} dt= (2\pi\sigma^2\tau)^{1/2} \exp \{ - (2/\sigma^2)^{1/2} \}, 
\end{align}
since $\int \exp\{-\mu/t^2-\zeta t^2\}dt=(\pi/\zeta)^{1/2}\exp\{ -2(\mu\zeta)^{1/2} \}$.

Second, for a fixed positive integer $r$, the product inverse-moment (piMoM) prior density \citep{Johnson2012} for $\beta_{\bk}$ is given by 
\begin{align}\label{eq:mod_piMoM}
\pi(\beta_{\bk} \mid \sigma^2, \tau, \bk)=C^{*-|\bk|}\prod_{j=1}^{|\bk|}[(\beta_{\bk,j})^{-2r}\exp\{ -\tau/\beta_{\bk,j}^2 \} ],
\end{align}
where $C^{*}=\tau^{-r+1/2}\Gamma(r-1/2)$ for $r > 1/2$, where $\Gamma(\cdot)$ is the gamma function.

\begin{figure}
\centering
\includegraphics[height=6cm, width=9cm]{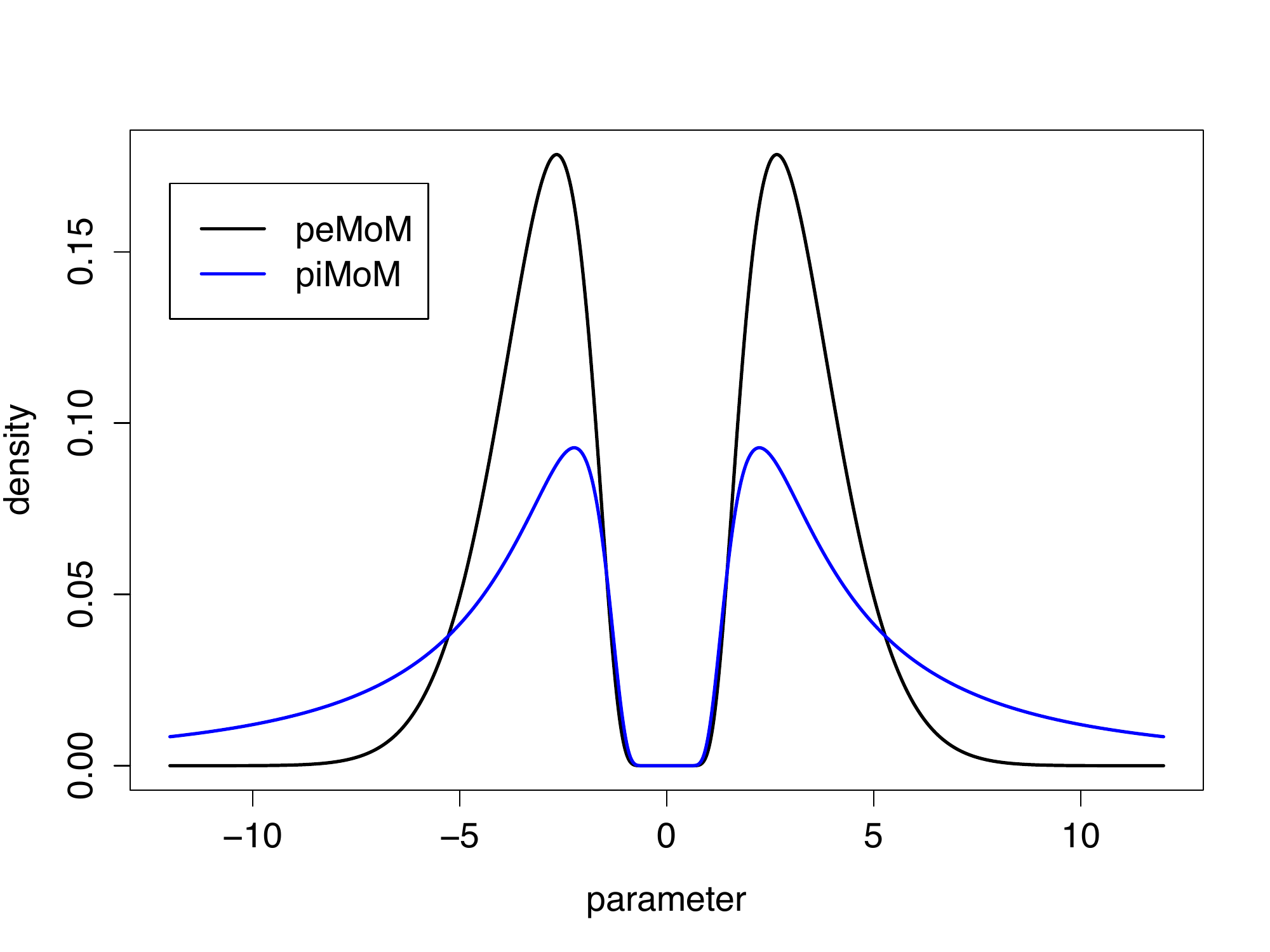}
\caption{  Nonlocal prior density functions for a single regression coefficient with $\tau=5$; for the piMoM prior, $r=1$. }
 \label{fig:nonlocal}
\end{figure}

The piMoM and peMoM prior densities are nonlocal in the sense that  the density value at the origin is exactly zero. This feature of the densities for a single regression coefficient is illustrated in Figure \ref{fig:nonlocal}. Since the piMoM prior densities and the peMoM prior densities have the same term $\exp\{-\tau/\beta^2\}$ that controls the behavior of the density function around the origin, they attain almost the same shape of the density function at the origin, which yields the same theoretical results in an asymptotic sense. Further details regarding this point are discussed in Section \ref{sec:theory}. 

We focus on these two classes of nonlocal priors in the sequel. Note that in both \eqref{eq:general_nonlocal} and \eqref{eq:mod_piMoM}, $\pi(\beta_{\bk}) = 0$ when $\beta_{\bk} = { 0}$; a defining feature of nonlocal priors. 
The distinction between the peMoM and the piMoM priors mainly involves their tail behavior. Whereas peMoM priors possess Gaussian tails, the piMoM prior densities have inverse polynomial tails. For example, piMoM densities with $r=1$ have Cauchy-like tails, which has implications for their finite sample consistency and asymptotic bias in posterior mean estimates of regression coefficients. Since similar conditions are later imposed on the hyperparameter $\tau$ appearing in \eqref{eq:general_nonlocal} and \eqref{eq:mod_piMoM}, at the risk of some ambiguity we use the same notation for the two hyperparameters in these equations.

In addition to imposing priors on the regression parameters given a model, we need to place a prior on the space of models to complete the prior specification. We consider a uniform prior on the model space restricted to models having size less than or equal to $q_n$, with $q_n < n$, i.e.,
\begin{align}\label{eq:model_pr}
\pi(\bk) \propto I(|\bk | \le q_n),
\end{align}
where $I(\cdot)$ denotes the indicator function and with a slight abuse of notation, we denote the prior on the space of models by $\pi$ as well. 
Similar priors have been considered in the literature by \citet{Jiang2007} and \citet{Liang2013}. Since the peMoM and piMoM priors already induce a strong penalty on the size of the model space (see Section \ref{sec:theory}), we do not need to additionally penalize larger models using, for example, model space priors of the type discussed in \citet{scott2010bayes}.

%
%

Under a peMoM prior \eqref{eq:general_nonlocal} on the regression coefficients, the marginal likelihood $m_{\bk}(y)$ under model $\bk$ given $\sigma^2$ can be obtained by integrating out $\beta_{\bk}$, resulting in \\
$m_{\bk}(y) 
= (2\pi \sigma^2)^{-\frac{n}{2}} \, C^{-|\bk|} \, Q_{\bk} \, \exp\{-\widetilde{R}_{\bk}/(2\sigma^2) \}$, 

where
\begin{eqnarray}
\widetilde{R}_{\bk} &=&y^T(\mr{I}_n-\widetilde{\mr{P}}_\bk)y, \quad \widetilde{\mr{P}}_\bk = X_\bk(X_\bk^\T X_\bk+1/\tau\mr{I}_{\bk})^{-1}X_\bk^\T, \label{eq:R_k}\nonumber\\ 
Q_\bk&=&\int \exp\{-(\beta_\bk-\widetilde{\beta}_\bk)^\T\widetilde{\Sigma}_\bk^{-1}(\beta_\bk-\widetilde{\beta}_\bk)/(2\sigma^2) - \sum_{j=1}^{|\bk|} \tau/\beta_{\bk,j}^2 \} d\beta_\bk, \label{eq:Q_k}\\
\widetilde{\beta}_\bk&=&(X_\bk^\T X_\bk+1/\tau\mr{I}_\bk)^{-1}X_\bk^\T y, \quad \widetilde{\Sigma}_\bk = (X_\bk^\T X_\bk+1/\tau\mr{I}_\bk)^{-1}. \label{eq:B_k}\nonumber
\end{eqnarray} 

Similarly, the marginal likelihood using the piMoM prior densities \eqref{eq:mod_piMoM} can be expressed as 
$m_{\bk}(y) = (2\pi\sigma^2)^{-\frac{n}{2}} \, C^{*-|\bk|} \, Q_{\bk}^* \, \exp\{-{R}_{\bk}^*/(2\sigma^2) \} $,
where
\begin{eqnarray}
{R}_{\bk}^* &=&y^\T(\mr{I}_n-{\mr{P}}_\bk)y, \quad {\mr{P}}_\bk = X_\bk\left(X_\bk^\T X_\bk\right)^{-1}X_\bk^\T, \label{eq:R_k2}\nonumber\\ 
Q_\bk^*&=&\int \prod_{j=1}^{|\bk|}\beta_{\bk,j}^{-2r}\exp\{-(\beta_\bk-\widehat{\beta}_\bk)^\T{\Sigma}_\bk^{*-1}(\beta_\bk-\widehat{\beta}_\bk)/(2\sigma^2) - \sum_{j=1}^{|\bk|} \tau/\beta_{\bk,j}^2 \} d\beta_\bk, \label{eq:Q_k2}\\
\widehat{\beta}_\bk&=&(X_\bk^\T X_\bk)^{-1}X_\bk^\T y, \quad {\Sigma}_\bk^* = (X_\bk^\T X_\bk)^{-1}. \label{eq:B_k2}\nonumber
\end{eqnarray} 
The integrals for $Q_{\bk}$ and $Q_{\bk}^*$ cannot be obtained in closed forms, so for computational purposes we make Laplace approximations to $m_{\bk}(y)$. { The expressions for the marginal likelihood derived here is nevertheless important for our theoretical study in Section \ref{sec:theory}.}

\section{Numerical results} \label{sec:num}
\subsection{Simulation studies using precision-recall curves}\label{sim}
To illustrate the performance of nonlocal priors in ultrahigh-dimensional settings  and to compare their performance with other methods, we calculated precision-recall curves \citep{davis2006relationship} for all selection procedures.  A precision-recall curve plots the precision = TP/(TP + FP), versus recall (or sensitivity) = TP/(TP + FN), where TP, FP and FN respectively denote the number of true positives, false positives, and false negatives, as the tuning parameter is varied. The efficacy of a procedure can be measured by the area under the precision-recall curve; the greater the area, the more accurate the method. Since both precision and recall take values in $[0, 1]$, the area under the curve for an ideal precision-recall curve is 1. We used two $(n, p)$ combinations, namely $(n,p)=(400,10000)$ and $(n,p)=(400,20000)$, and plotted the average of the precision-recall curves obtained from $100$ independent replicates of each procedure. 

We compared the performance of peMoM and piMoM priors to a number of frequentist penalized likelihood methods: lasso \citep{tibshirani1996regression}, adaptive lasso \citep{zou2006adaptive}, scad \citep{fan2001variable}, and minimax concave penalty  \citep{zhang2010nearly}. We used the \verb R ~package {\it ncvreg} to fit these penalized likelihood methods. We also included reciprocal lasso  in our simulation studies.  However, due to computational constraints involved in implementing the full rlasso procedure, we followed the recommendation in \citet{song2015high} and instead implemented the reduced rlasso.  The reduced rlasso procedure is a simplified version of rlasso that uses the least square estimators of $\beta$ when minimizing the rlasso objective function. 

We considered Zellner's $g$-prior \citep{zellner1986, liang2008mixtures} as a competing Bayesian method, with $\beta_{\bk} \mid \bk, \sigma^2 \sim N(0, g \sigma^2 (X_{\bk}^{\T} X_{\bk})^{-1})$ and $g$ is the tuning parameter. With the prior $\pi(\sigma^2)\propto 1/\sigma^2$, the marginal likelihood $m_{\bk}(y) \propto (1+g)^{-|\bk|/2}\{1+g(1-D_\bk^2)\}^{-(n-1)/2}$ can be obtained in a closed form; see for example, \citet[pp 412]{liang2008mixtures}, where $D_\bk^2$ is the ordinary coefficient of determination for the model $\bk$.

A uniform model prior \eqref{eq:model_pr} was considered for all Bayesian procedures.  This prior was chosen for several reasons.  First, construction of the PR curves requires maximization over model hyperparameters, which is most easily achieved if there is only one unknown hyperparameter.   We also wished to avoid providing an advantage to the Bayesian methods by introducing additional tuning parameters into these methods that were not present in the penalized likelihood methods.  Furthermore, the use of non-uniform priors on the model space introduces (at least) one more degree of freedom into the comparisons between methods, and our intent was to compare the effects of the penalties imposed on regression coefficients by both penalized likelihood and Bayesian methods.  At first blush, this might appear to put Bayesian methods like those based on the $g$-prior at a disadvantage, since such methods do not yield consistent variable selection even in $p<n$ settings without prior sparsity penalties on the model space (when $g$ is held fixed as $n$ increases).  However, in the construction of our PR curves, we allowed prior hyperparameters to increase with $n$, which effectively allowed the Bayesian methods to impose additional sparseness penalties through the introduction of large hyperparameter values.

We arbitrarily fixed $r = 1$ for the piMoM prior \eqref{eq:mod_piMoM} and used  an inverse-gamma prior on $\sigma^2$ with parameters $(0.1,0.1)$ for the peMoM, piMoM priors, and $g$-priors. Posterior computations for the peMoM, piMoM and $g$-priors were implemented using the Simplified Shotgun Stochastic Search with Screening (S5) algorithm described in Section \ref{sec:comp}. The maximum a posteriori model was used in each case to summarize the model selection performance. The precision-recall curves are drawn by varying the hyperparameters ($\tau$ for the nonlocal priors and $g$ for the $g$-priors), so the comparison between the model selection based on the nonlocal priors and the $g$-prior is free of the choice of hyperparameters. 
Because of their high computational burden, we could not include BASAD \citep{Narisetty2014} in the comparisons.

For each simulation setting, we simulated data according to a Gaussian linear model as in \eqref{eq:like} with the fixed true model $\bt=\{1,2,3,4,5\}$ with the true regression coefficient $\beta_\bt^0 = (0.50, 0.75, 1.00, 1.25, 1.50)^\T$ and $\sigma=1.5$. Also, the signs of the true regression coefficients were randomly determined with probability one-half.  Each row of $X$ was independently generated from a $N(0,\Sigma)$ distribution with one of the following covariance structures:
\medskip

\noindent { Case (1): } compound symmetry design; $\Sigma_{j j'}=0.5$, if $j \neq j'$ and $\Sigma_{j j}=1$, $1 \le j, j' \le p$. 

\noindent { Case (2):} autoregressive correlated design; $\Sigma_{j j'}=0.5^{|j - j'|}$, $1 \le j, j' \le p$.

\noindent { Case (3):} isotropic design; $\Sigma = \mr I_p$.

\medskip

 Figure \ref{fig:prc} plots the precision-recall curves averaged over 100 simulation replicates for the different methods across the two ($n$,$p$) pairs and the three covariate designs. From Figure \ref{fig:prc}, it is evident that the precision-recall curves for the peMoM and piMoM priors have an overall better performance than the penalized likelihood methods lasso, adaptive lasso, scad, and mcp. For decision procedures having the same power, this implies that the nonlocal priors achieve lower false discovery rates.   As discussed in Section \ref{sec:rlasso}, since the reduced rlasso shares the same nonlocal kernel as the nonlocal priors, it has a similar selection performance. The figure also shows that Zellner's $g$-prior attains comparable performance with the nonlocal priors in terms of the precision-recall curves. 
 
\begin{figure}
\centering
\includegraphics[height=21cm, width=14.5cm]{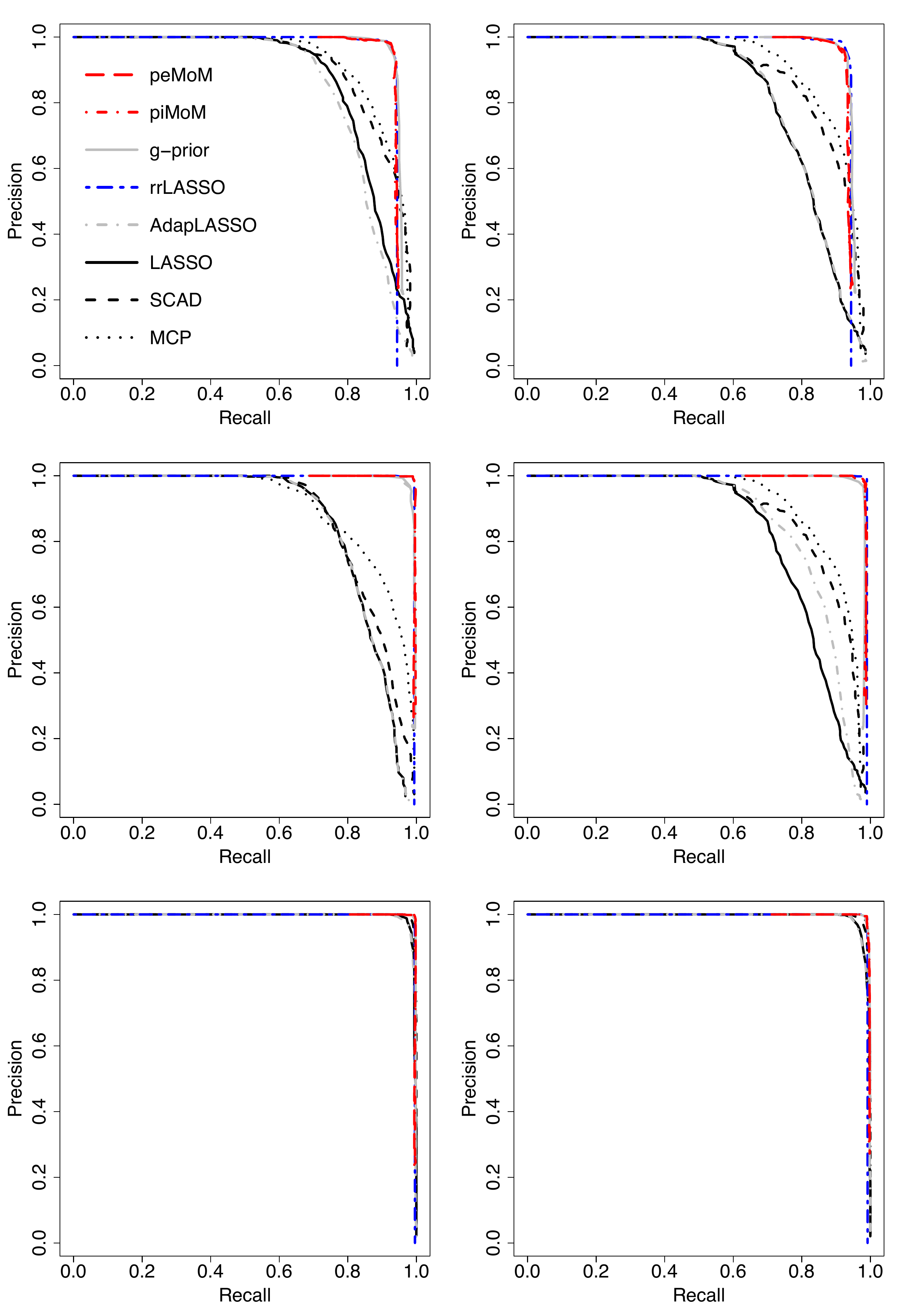}
\caption{  Plot of the mean precision-precision curves over 100 datasets with $(n,p)=(400,10000)$(first column) and $(n,p)=(400,20000)$(second column). Top: case (1); middle: case (2); bottom: case (3).}
 \label{fig:prc}
\end{figure}
\begin{figure} 
\centering
\includegraphics[height=12cm, width=14cm]{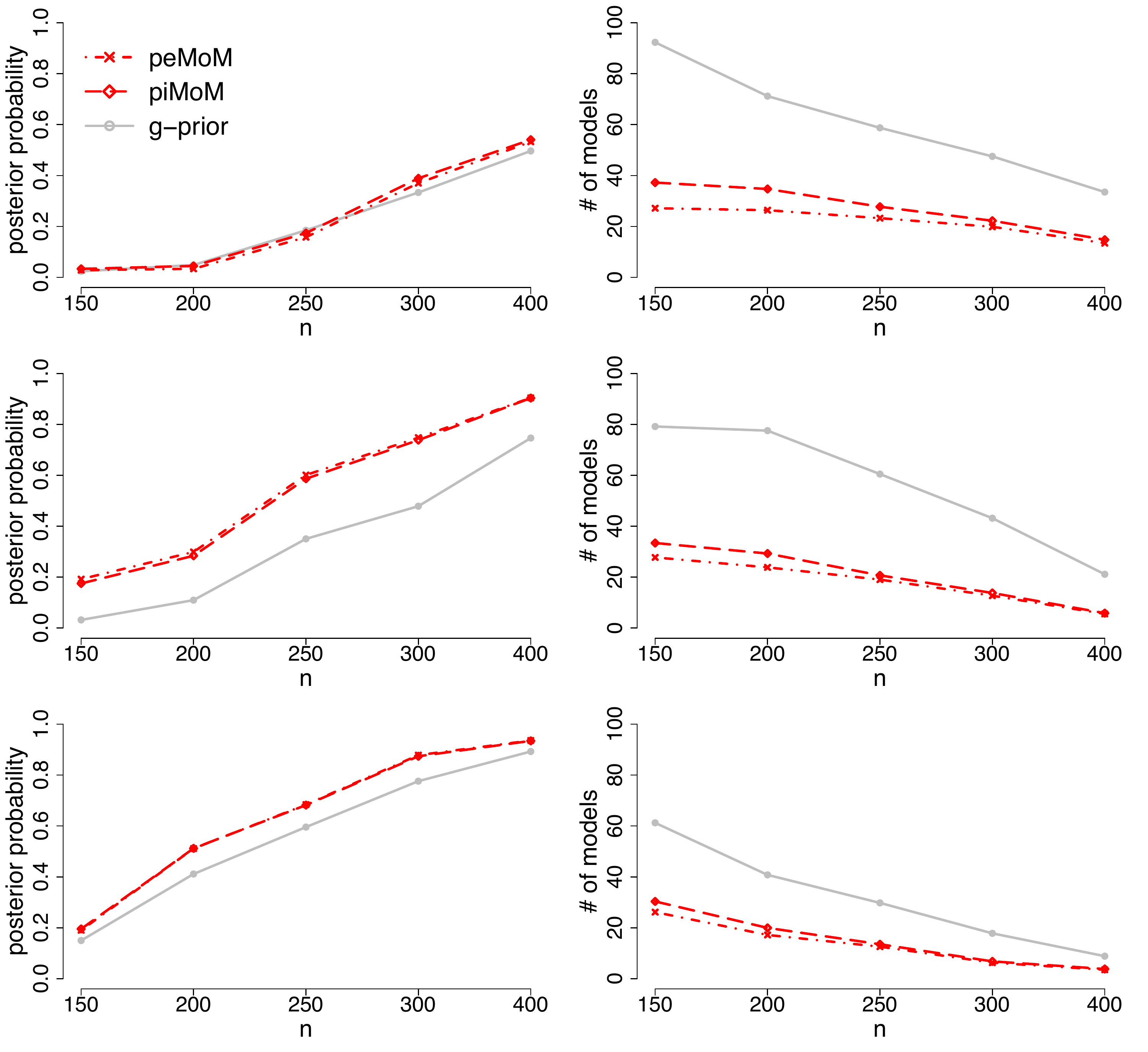}
\caption{ Averaged posterior true model probability and the number of models which attain the posterior odds ratio, with respect to the maximum a posteriori model, larger than $0.001$ with the fixed $p=20000$ and varying $n$. Top: case (1); middle: case (2); bottom: case (3).} 
\label{fig:post}
\end{figure}

\subsection{Further comparison with Zellner's $g$-prior}\label{zellner}
 
The similarity of the performances of the $g$-prior and the nonlocal priors in terms of precision-recall curves begs for closer comparisons of these procedures. For this reason, we also investigated the concentration of the posterior densities around their maximum models. To this end, we fixed $p = 20,000$ and varied $n$ from $150$ to $400$; the data generating mechanism was exactly the same as in Section \ref{sim}. The left column of Figure~\ref{fig:post} displays the posterior probability of the true model under the peMoM, piMoM and $g$-prior models versus $n$ for the three covariate designs in Section \ref{sim}. The plot shows that the posterior probability of the true model increases with $n$ for all three methods, with the peMoM and piMoM priors almost uniformly dominating the $g$-prior, implying a higher concentration of the posterior around the true model for the nonlocal priors. 

This tendency is confirmed in  the right panel of Figure \ref{fig:post}, where we plot the number of models $\bk$ which achieve a posterior odds ratio $\pi(\bk \mid y)/\pi(\widehat\bk\mid y) > 0.001$, where $\widehat\bk$ is the maximum a posteriori model. This plot clearly shows that the posterior distribution on the model space from the $g$-priors is more diffuse than those obtained using the nonlocal prior methods.{
These comparisons were based on fitting the hyperparameters $g$ and $\tau$ at their  oracle value, i.e., the value which maximized the posterior probability of the true model for a given value of $n$.  

The magnitudes of the oracle hyperparameters under each model also present an interesting contrast between the local and nonlocal priors. We observed that the oracle value of $g$ increased rapidly with $p$, whereas the oracle value of $\tau$ was much more stable. This phenomenon is illustrated in  
Table \ref{tab:hyper} that shows the oracle hyperparameter value averaged over 100 replicates for the three different covariate designs in Section \ref{sim}. For this comparison, we fixed $n = 400$ and varied $p$ between $1000$ and $20,000$; five representative values are displayed. The oracle values for $g$ are on a completely different scale from the oracle values $\tau$, and they vary more with p.  This table confirms the recommendations in \cite{george2000calibration} for setting $g = p^2$ based on minimax arguments.  However, the finite sample behavior of the optimal choice of $g$ is unclear, which means that the large variance of the optimal hyperparameter value is likely to hinder the selection of $g$ in real applications. Finally, we note that such large values of $g$ effectively convert the local $g$-priors into nonlocal priors by effectively collapsing the $g$-prior density to 0 at the origin.

\begin{table}
\caption{Optimal hyperparameters for Bayesian model selection methods}{%
\begin{tabular}{ccccccc}
 \\
 &&\multicolumn{5}{c}{The number of predictors}\\
 &  &$p=1000$&$p=2000$&$p=5000$&$p=10000$&$p=20000$\\ 
  \hline
 Case (1)& peMoM & 2.24 & 2.72 & 2.88 & 3.32  & 3.60 \\ 
             & piMoM & 2.16 & 2.59 & 2.70 & 3.04  & 3.26 \\ 
             & $g$-prior & $7.83 \times 10^8$ & $2.87 \times 10^9$ & $3.05 \times 10^9$ & $9.66 \times 10^9$  & $1. 70 \times 10^{10}$ \\ 
\hline
 Case (2)& peMoM &  $1. 97$ &  $2. 29$ &  $2. 34$ &  $2. 75$  & $3.00$ \\ 
             & piMoM &  $1.97$ & $2. 20$ & $2 .32$ &  $2.66$  & $2.86$ \\ 
             & $g$-prior & $8.56 \times 10^9$ & $2.55  \times 10^{10}$ & $2.62 \times 10^{10}$ & $6.58 \times 10^{10}$  & $1.25 \times 10^{11}$ \\ 
\hline
 Case (3)& peMoM & $2.66$ & $3.00$ & $3.00$ & $3.10$  & $3.60$ \\ 
             & piMoM & $2.61$ & $2.94$ & $2.94$ &  $2.94$ & $3.46$ \\ 
             & $g$-prior & $1.26\times 10^{12}$ & $8.84\times 10^{12}$ & $9.67\times 10^{12}$ & $6.81\times 10^{12}$  & $4.29\times 10^{13}$ \\ 
\hline
\end{tabular}}
\label{tab:hyper}
\end{table}

\section{Model selection consistency}\label{sec:theory}

{The empirical performance of the peMoM and piMoM priors suggests that the hyperparameter $\tau$ should be increased slowly with $p$. } While \cite{Johnson2012} were able to show strong selection consistency with a fixed value of $\tau$, it is not clear whether their proof can be extended to $p \gg n$ cases. Motivated by the empirical findings of the last section, we next investigated the strong consistency properties of peMoM and piMoM priors when $\tau$ was allowed to grow at a logarithmic rate in $p$. We found that in such cases, both peMoM and piMoM priors achieve strong model selection consistency under standard regularity assumptions when $p$ increases sub-exponentially with $n$, i.e., $\log p = O( n^{\alpha} )$ for $\alpha  \in (0, 1)$. 

{ Henceforth, we use $\tau_{n,p}$ instead of $\tau$ to denote the hyperparameter in the peMoM and piMoM priors  in \eqref{eq:general_nonlocal} and \eqref{eq:mod_piMoM} respectively. The normalizing constants for these priors is now denoted by $C_{n,p}$ and $C_{n,p}^*$ respectively. Before providing our theoretical results, we first state a number of regularity conditions.} Let $\nu_j(A)$ denote the $j$-th largest nonzero eigenvalue of an arbitrary matrix $A$, and let 
\begin{eqnarray}
\nu_{\bf k *}= \min_{1 \leq j \leq min(n,|\bk|)} \nu_j(X_\bk^\T X_\bk/n), \quad 
\nu_\bk^*=\max_{1 \leq j \leq min(n,|\bk|)} \nu_j(X_\bk^\T X_\bk/n). \label{eq:nukstar}
\end{eqnarray}

For sequences $a_n$ and $b_n$, $a_n\succeq b_n $ indicates $b_n=O(a_n)$, and $a_n\succ b_n $ indicates $b_n=o(a_n)$. With this notation, we assume that the following regularity conditions apply.


\bigskip
{\em Assumption 1.} There exists $\alpha \in (0, 1)$ such that $\log p = O(n^{\alpha})$. 

\bigskip
{\em Assumption 2.} $\log p \prec \tau_{n, p}\prec n$. 

\bigskip
{\em Assumption 3.} $|\bk|\leq {q_n}$, where $ {q_n}\prec \frac{\tau_{n,p}}{\log p }$.

\bigskip
{\em Assumption 4.} $\smash{\displaystyle\min_{{\bf k:|k|}\leq q_n}}\nu_{\bf k *} \succ\frac{\tau_{n, p}}{n}$.

\bigskip
{\em Assumption 5.} $C_1<\nu_{\bt*}\leq \nu_{\bt}^*<C_2$ for some positive constants $C_1$ and $C_2$.\\

Several comments regarding these conditions are worth making. Assumption 1 allows $p$ to grow sub-exponentially with $n$. Our theoretical results continue to hold when $p$ grows polynomially in $n$, i.e., at the rate $O(n^\gamma)$ for some $\gamma>1$. Assumption 2 reflects our empirical findings about the oracle $\tau \equiv \tau_{n,p}$ in Section \ref{sim}, which was observed to grow slowly with $p$. We need the bound on $q_n$ in Assumption 3 to ensure that the least square estimator of a model is consistent when a model contains the true model. In the $p \leq n$ setting, \citet{Johnson2012} assumed that all eigenvalues of the Gram matrix $(X_\bk^\T X_\bk)/n$ are bounded above and below by global constants for all $\bk$. 
However, this assumption is no longer viable when $p\gg n$ and we replace that by Assumption 4, where the minimum of the minimum eigenvalue of $(X_\bk^\T X_\bk)/n$ over all submodels $\bk$ with $|\bk| \le q_n$ is allowed to decrease with increasing $n$ and $p$.
Assumption 4 is called the sparse Riesz condition and is also used in \citet{Chen2008} and \citet{Kim2012}. \citet{Narisetty2014} showed that Assumption 4 holds with overwhelmingly large probability when the rows of the design matrix are independent with an isotropic sub-Gaussian distribution. Even though the assumption of sub-Gaussian tails on the covariates is difficult to verify, the results in \citet{Narisetty2014} show that Assumption 4 can be satisfied for some sequence of design matrices. 

 We now state a Theorem that demonstrates that model selection procedures based on the peMoM and piMoM nonlocal prior densities achieve strong consistency under the proposed regularity conditions. A proof of the Theorem is provided in the Supplemental Materials.
 \begin{theorem}\label{theo1}
 Suppose $\sigma^2$ is known and that Assumptions 1 {\textendash} 5 hold. Let $\pi(\bt\mid{\bf y})$ denote the posterior probability of the true model obtained under a peMoM prior \eqref{eq:general_nonlocal}. Also, assume a uniform prior on all models of size less than or equal to $q_n$, i.e., $\pi(\bk)\propto I(|\bk|\leq q_n)$. Then, $\pi(\bt\mid{\bf y})$ converges to one in probability as $n$ goes to $\infty$.
 \end{theorem}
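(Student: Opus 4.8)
The plan is to show that the posterior odds $\pi(\bk \mid y)/\pi(\bt \mid y) = m_{\bk}(y)/m_{\bt}(y)$ (the model priors cancel on the set $\{|\bk| \le q_n\}$) sum to something that converges to zero in probability over all $\bk \neq \bt$ with $|\bk| \le q_n$. Since $\pi(\bt \mid y) = \bigl(1 + \sum_{\bk \neq \bt} m_{\bk}(y)/m_{\bt}(y)\bigr)^{-1}$, this suffices. I would partition the competing models into two regimes: the \emph{overfitted} models $\bk \supsetneq \bt$ (and more generally $\bk \supseteq \bt$, $\bk \neq \bt$), and the \emph{non-overfitted} models $\bk \not\supseteq \bt$ that miss at least one true covariate. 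For each regime I would produce a uniform upper bound on the Bayes factor $m_{\bk}(y)/m_{\bt}(y)$ that decays fast enough to beat the $\binom{p}{|\bk|} \le p^{q_n}$ models of each size, using Assumption 1 to control $p^{q_n}$ via $\log p$.

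\textbf{Key steps.} First I would write out the Bayes factor explicitly using the marginal likelihood formula for the peMoM prior: $m_{\bk}(y)/m_{\bt}(y)$ involves the ratio $C_{n,p}^{-(|\bk|-|\bt|)}$, the ratio $Q_{\bk}/Q_{\bt}$ of the nonlocal integrals, and the exponential factor $\exp\{-(\widetilde R_{\bk} - \widetilde R_{\bt})/(2\sigma^2)\}$. The key is that $C_{n,p} = (2\pi\sigma^2\tau_{n,p})^{1/2}\exp\{-(2/\sigma^2)^{1/2}\}$ grows like $\tau_{n,p}^{1/2}$, so each extra covariate contributes a factor of order $\tau_{n,p}^{-1/2}$ from the normalizing constant; since $\tau_{n,p} \succ \log p$ (Assumption 2), this alone gives $e^{-\frac12 \log\tau_{n,p}}$ per covariate, which dominates the combinatorial $p^{q_n}$ as long as $q_n \log p \prec q_n \log\tau_{n,p}$ — this is essentially where Assumption 3 ($q_n \prec \tau_{n,p}/\log p$) enters. \emph{For overfitted models}, the residual-sum-of-squares difference $\widetilde R_{\bk} - \widetilde R_{\bt}$ is of stochastic order $O_P(\sigma^2 (|\bk| - |\bt|))$ (a chi-square type fluctuation, controlled uniformly over $|\bk| \le q_n$ by a union/tail bound using Assumption 4 to keep the ridge-adjusted projections well-conditioned), so the exponential factor is $e^{O_P(|\bk|-|\bt|)}$ and cannot overwhelm the $\tau_{n,p}^{-(|\bk|-|\bt|)/2}$ gain; the nonlocal integral $Q_{\bk}$ contributes at most a polynomial-in-$\tau_{n,p}$ factor per added coordinate (the $\exp\{-\tau_{n,p}/\beta_j^2\}$ penalty is actually what makes $Q_{\bk}$ small when $\widetilde\beta_{\bk,j}$ is near zero, which is the crucial extra sharpening), and everything combines to give a summable bound. \emph{For non-overfitted models}, the key quantity is the large positive bias $\widetilde R_{\bk} - \widetilde R_{\bt} \asymp n \, \lambda_{\min}(\cdots)\, \|\beta^0_{\bt \setminus \bk}\|^2$, which is of order $n$ times a constant (using Assumption 5 for the true model's eigenvalues and Assumption 4 for the enlarged model $\bk \cup \bt$); since $\tau_{n,p} \prec n$ (Assumption 2) and $q_n \prec \tau_{n,p}/\log p \prec n/\log p$, this exponential beats both the combinatorial factor $p^{q_n}$ and the (now possibly unfavorable) $\tau_{n,p}^{+|\bt \setminus \bk|/2}$ factor from having fewer normalizing constants.

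\textbf{Main obstacle.} I expect the hardest part to be the \emph{uniform} control of the nonlocal integrals $Q_{\bk}$ and of the residual fluctuations over the exponentially many models $\bk$ with $|\bk| \le q_n$ — in particular, establishing a clean per-coordinate bound on $Q_{\bk}$ that (i) exploits the $\exp\{-\tau_{n,p}/\beta_j^2\}$ term to gain when $\widetilde\beta_{\bk,j}$ is small (overfitted case) yet (ii) does not lose more than a controlled power of $\tau_{n,p}$ when $\widetilde\beta_{\bk,j}$ is moderate, all while the covariance $\widetilde\Sigma_{\bk}$ ranges over a large family of matrices whose conditioning is only guaranteed through Assumption 4. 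A Laplace-type analysis of $Q_{\bk}$ around the maximizer of the integrand, combined with the sparse Riesz condition and a maximal inequality for the quadratic forms $y^\T(\widetilde P_{\bk} - \widetilde P_{\bt})y$, should deliver the needed uniform bounds; the equivalence of the peMoM and piMoM conclusions then follows because both priors share the $\exp\{-\tau_{n,p}/\beta^2\}$ behavior near the origin that drives step (i), as noted in the discussion preceding the theorem.
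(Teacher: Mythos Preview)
Your overall architecture matches the paper's proof: write $\pi(\bt\mid y)=(1+H_{1n}+H_{2n})^{-1}$ with $H_{1n}$ summing Bayes factors over overfitted models $\bt\subsetneq\bk$ and $H_{2n}$ over models with $\bt\not\subseteq\bk$, then control each sum via union bounds stratified by model size. The paper does exactly this (Lemmas~\ref{Lemma:over} and~\ref{Lemma:under}), and---as you anticipate in your ``main obstacle'' paragraph---the heavy lifting is a per-coordinate sandwich bound on $Q_\bk$ (the paper's Lemma~\ref{LemmaQ}) combined with chi-square tail bounds on the residual differences. Your non-overfitted sketch (residual gap of order $n\nu_{\bf u*}\|\beta_\bt^0\|^2$ beating $q_n\log p\prec\tau_{n,p}\prec n$) is in the right spirit.

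There is, however, a genuine gap in your overfitted-case argument. You claim that $C_{n,p}\asymp\tau_{n,p}^{1/2}$ gives a factor $e^{-\frac12\log\tau_{n,p}}$ per extra covariate, and that this ``dominates the combinatorial $p^{q_n}$ as long as $q_n\log p\prec q_n\log\tau_{n,p}$.'' But Assumption~2 says $\log p\prec\tau_{n,p}$, \emph{not} $\log p\prec\log\tau_{n,p}$; the latter is typically false (e.g.\ $\tau_{n,p}=(\log p)^{2}$ satisfies Assumption~2 yet $\log\tau_{n,p}=2\log\log p\ll\log p$), and Assumption~3 does not rescue it. So the normalizing-constant contribution $C_{n,p}^{-d}\asymp\tau_{n,p}^{-d/2}$ is nowhere near strong enough to kill the $\binom{p-|\bt|}{d}\le p^d$ competing models. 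The penalty that actually does the job is the one you mention only parenthetically and then undersell as ``polynomial-in-$\tau_{n,p}$'': for $j\in\bk\setminus\bt$ the ridge estimate $\widetilde\beta_{\bk,j}$ concentrates near zero, and the per-coordinate upper bound from Lemma~\ref{LemmaQ},
\[
Q_{\bk,j}^U\;\lesssim\;(n\nu_{\bk*})^{-1/2}\exp\bigl\{-\tau_{n,p}/(|\widetilde\beta_{\bk,j}|+\widetilde\epsilon_n)^{2}\bigr\},\qquad \widetilde\epsilon_n\asymp(n\nu_{\bk*}/\tau_{n,p})^{-1/4},
\]
yields an \emph{exponential} contribution of order $\exp\{-c(n\nu_{\bk*}\tau_{n,p})^{1/2}\}$ per extra coordinate (the paper records the sufficient bound $\exp\{-\tau_{n,p}^{2/3}(n\nu_{\bk*})^{1/3}\}$ in Lemma~\ref{lemma:Qratio}). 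Since $n\nu_{\bk*}\succ\tau_{n,p}\succ\log p$ by Assumptions~2 and~4, this exponent is $\succ\log p$, and \emph{that} is what beats $p^d$. In short, $Q_\bk$ is not a polynomial correction to the $C_{n,p}^{-d}$ term; it is the load-bearing exponential penalty in the overfitted regime, and your argument as written would fail without making this quantitative.
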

\begin{corollary}\label{cor1}
Assume the conditions of the preceding Theorem apply. Let $\pi(\bt\mid{\bf y})$ denote the posterior probability of the true model obtained under a piMoM prior density \eqref{eq:mod_piMoM}.  Then, $\pi(\bt\mid{\bf y})$ converges to one in probability as $n$ goes to $\infty$.
\end{corollary}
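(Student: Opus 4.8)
The strategy is to reduce the posterior consistency claim to a comparison of marginal likelihoods $m_{\bk}(y)$ across models, and to show that the true model $\bt$ dominates all competitors. Since the prior on the model space is uniform over $\{|\bk| \le q_n\}$, we have
\begin{eqnarray*}
\pi(\bt \mid y) = \frac{m_{\bt}(y)}{\sum_{|\bk| \le q_n} m_{\bk}(y)} = \left( 1 + \sum_{\bk \ne \bt} \frac{m_{\bk}(y)}{m_{\bt}(y)} \right)^{-1},
\end{eqnarray*}
so it suffices to show $\sum_{\bk \ne \bt} m_{\bk}(y)/m_{\bt}(y) \to 0$ in probability. Using the explicit form of the peMoM marginal likelihood, $m_{\bk}(y) = (2\pi\sigma^2)^{-n/2} C_{n,p}^{-|\bk|} Q_{\bk} \exp\{-\widetilde{R}_{\bk}/(2\sigma^2)\}$, the ratio decomposes into three factors: a dimension penalty $C_{n,p}^{|\bt|-|\bk|}$, the ratio of the (intractable) integrals $Q_{\bk}/Q_{\bt}$, and the exponential residual term $\exp\{(\widetilde{R}_{\bt} - \widetilde{R}_{\bk})/(2\sigma^2)\}$. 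The plan is to control each of these and then sum over model classes.

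I would partition the competing models into two regimes and bound each separately. First, the \emph{overfitted/supermodels} case, $\bk \supseteq \bt$ with $\bk \ne \bt$: here the residual term $\widetilde{R}_{\bt} - \widetilde{R}_{\bk}$ is stochastically bounded (essentially a chi-square on $|\bk|-|\bt|$ degrees of freedom, up to the ridge perturbation $1/\tau_{n,p} \mr I$ which is negligible by Assumption 2), so the key is that the normalizing-constant penalty $C_{n,p}^{-1} \asymp (\tau_{n,p})^{-1/2}$ per extra variable, combined with the Laplace-type behavior of $Q_{\bk}$ which contributes a factor growing like $\tau_{n,p}^{1/2}$ per coordinate near the data-dependent mode but is suppressed by the $\exp\{-\tau_{n,p}/\beta_{\bk,j}^2\}$ penalty — net, one obtains a penalty of order $(p \tau_{n,p})^{-c}$ or better per spurious variable, and summing the $\binom{p - |\bt|}{j}$ supermodels of each excess size $j$ converges because $\log p \prec \tau_{n,p}$ (Assumption 2). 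Second, the \emph{underfitted or non-nested models} case, $\bt \not\subseteq \bk$: here the residual sum of squares $\widetilde{R}_{\bk}$ exceeds $\widetilde{R}_{\bt}$ by an amount of order $n$ (proportional to the signal strength of the omitted true coefficients, controlled below using Assumption 4 and the sparse-Riesz-type lower eigenvalue bound), so $\exp\{(\widetilde{R}_{\bt} - \widetilde{R}_{\bk})/(2\sigma^2)\}$ decays exponentially in $n$; this must dominate the entropy $\log \binom{p}{q_n} \lesssim q_n \log p$, which holds by Assumptions 1 and 3 since $q_n \log p \prec \tau_{n,p} \cdot \frac{\log p}{\log p} = \tau_{n,p} \prec n$. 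Bounding $Q_{\bk}$ from above uniformly — by dropping the nonlocal penalty to get a Gaussian integral, hence $Q_{\bk} \le (2\pi\sigma^2)^{|\bk|/2} |\widetilde{\Sigma}_{\bk}|^{1/2}$ — and from below at $\bt$ by a Laplace argument around a point bounded away from the coordinate axes (using Assumption 5 to keep $\widehat\beta_{\bt}$ near $\beta_{\bt}^0$) gives workable control of the $Q$-ratio; the ridge term in $\widetilde\Sigma_\bk$ only helps by shrinking the determinant.

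The main obstacle I anticipate is the careful two-sided control of the intractable integrals $Q_{\bk}$ in the non-nested/underfitted regime, where $\widehat\beta_{\bk}$ (or $\widetilde\beta_{\bk}$) need not be bounded away from zero in some coordinates — precisely the situation the nonlocal penalty $\exp\{-\tau_{n,p}/\beta_{\bk,j}^2\}$ is designed to punish, but making this rigorous requires splitting the integration region into a neighborhood of the mode and its complement and showing the mass near any axis is super-polynomially small, uniformly over the exponentially many models, which is where the growth condition $\tau_{n,p} \succ \log p$ is genuinely needed. A secondary technical point is handling the $1/\tau_{n,p}\,\mr I$ ridge regularization in $\widetilde{\mr P}_\bk$ versus the projection $\mr P_\bk$: one shows $\widetilde R_\bk = R_\bk^* + O(\|X_\bk^\T y\|^2/(n\tau_{n,p}))$ is a lower-order correction under Assumptions 2 and 4, which is exactly why the peMoM result (Theorem \ref{theo1}) and the piMoM result (Corollary \ref{cor1}) coincide asymptotically — the two priors share the $\exp\{-\tau_{n,p}/\beta^2\}$ factor that drives all the estimates above, and the only differences (Gaussian ridge tail versus inverse-polynomial $\beta^{-2r}$ tail) contribute factors that are absorbed into the constants. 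For the Corollary, I would simply re-run the same decomposition with $C_{n,p}^*$, $R_{\bk}^*$, $Q_{\bk}^*$ in place of their peMoM analogues and note that $C_{n,p}^* \asymp \tau_{n,p}^{-r+1/2}\Gamma(r-1/2)$ still grows polynomially in $\tau_{n,p}$, that $R_\bk^*$ is the exact projection residual (so the ridge correction above is not even needed), and that the extra $\prod_j \beta_{\bk,j}^{-2r}$ in $Q_\bk^*$ is controlled by the same axis-avoidance argument, yielding the identical conclusion.
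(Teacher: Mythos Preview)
Your proposal is essentially correct and follows the same overall architecture as the paper's proof of Theorem~\ref{theo1}: decompose $\pi(\bt\mid y)^{-1}-1$ into the over- and non-nested sums $H_{1n}$, $H_{2n}$, control the residual ratios via chi-square tail bounds, and control the $Q$-ratios by two-sided Laplace-type bounds on the intractable integral (the paper's Lemma~\ref{LemmaQ}). Your identification of the shared kernel $\exp\{-\tau_{n,p}/\beta^2\}$ as the driver, and of the axis-avoidance estimate as the crux, matches the paper exactly.

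Where the paper differs from your plan for the Corollary is in economy: rather than ``re-run the same decomposition'' with $C_{n,p}^*,\ R_\bk^*,\ Q_\bk^*$, the paper reduces Corollary~\ref{cor1} to Theorem~\ref{theo1} in one step by showing that $Q_\bk^*$ satisfies the \emph{same} upper and lower bounds as $Q_\bk$ (those of Lemma~\ref{LemmaQ}), up to replacing $\tau_{n,p}$ by $(1-\epsilon)\tau_{n,p}$ for an arbitrarily small $\epsilon>0$. The device is to write the extra piMoM factor as $\beta_{\bk,j}^{-2r}=\exp\{-r\log\beta_{\bk,j}^2\}$ and observe that $\exp\{-\epsilon\tau_{n,p}/\beta_{\bk,j}^2 - r\log\beta_{\bk,j}^2\}$ is uniformly bounded above in $\beta_{\bk,j}$; this absorbs the polynomial singularity into a small fraction of the nonlocal kernel, leaving an integral of exactly the peMoM form with $(1-\epsilon)\tau_{n,p}$. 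Since the conditions $\log p \prec \tau_{n,p}\prec n$ are insensitive to a constant multiple of $\tau_{n,p}$, Theorem~\ref{theo1}'s conclusion transfers directly. Your claim that the $\beta^{-2r}$ factor ``is absorbed into the constants'' is morally right but slightly imprecise: it is absorbed into the exponential kernel at the cost of an $\epsilon$-fraction of $\tau_{n,p}$, not into a fixed multiplicative constant. Making this explicit is all that separates your sketch from the paper's argument.
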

We note that these results apply also if  a beta-Bernoulli prior is imposed on the model space  as in \citet{scott2010bayes}, because the effect of that prior is asymptotically negligible when $|\bk|\leq {q_n} \prec n$. 
 
In most applications, $\sigma^2$ is unknown, and it is thus necessary to specify a prior density on it. By imposing a proper inverse gamma prior density on $\sigma^2$, we can obtain the strong model consistency result stated in the Theorem below.  The proof is again deferred to the Supplemental Materials. 
\begin{theorem}\label{theo2}
Suppose $\sigma^2$ is unknown and a proper inverse gamma density with parameters $(a_0,b_0)$ is assumed for $\sigma^2$. Also, let $\pi(\bt\mid{\bf y})$ denote the posterior probability of the true model evaluated using peMoM priors. Then if Assumptions 1 {\textendash} 5 are satisfied, $\pi(\bt\mid{\bf y})$ converges to one in probability as $n$ goes to $\infty$.
\end{theorem}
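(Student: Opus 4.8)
The plan is to reduce Theorem~\ref{theo2} to the fixed-$\sigma^2$ analysis behind Theorem~\ref{theo1}, by integrating $\sigma^2$ out. Since the model prior is uniform, $\pi(\bt\mid y)=\big[1+\sum_{\bk\neq\bt}m_\bk(y)/m_\bt(y)\big]^{-1}$, so it is enough to prove $\sum_{\bk\neq\bt}m_\bk(y)/m_\bt(y)\to 0$ in probability, where now $m_\bk(y)=\int_0^\infty m_\bk(y\mid\sigma^2)\,\mathrm{IG}(d\sigma^2)$, $m_\bk(y\mid\sigma^2)$ is the conditional marginal likelihood displayed in Section~2, and the inverse-gamma prior has density $\propto(\sigma^2)^{-a_0-1}e^{-b_0/\sigma^2}$. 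As a function of $\sigma^2$, the bulk of $m_\bk(y\mid\sigma^2)\,\mathrm{IG}(\sigma^2)$ comes from the Gamma-type kernel $(\sigma^2)^{-(n+|\bk|)/2-a_0-1}\exp\{-(\widetilde R_\bk/2+b_0)/\sigma^2\}$, which concentrates in an $n^{-1/2}$-neighborhood of $\widehat\sigma_\bk^2:=(\widetilde R_\bk+2b_0)/(n+|\bk|+2a_0)$; the remaining $\sigma^2$-dependent pieces, namely $Q_\bk(\sigma^2)$ and the factor $e^{|\bk|\sqrt{2/\sigma^2}}$ inside $C_{n,p}(\sigma^2)^{-|\bk|}$, are slowly varying because $|\bk|\le q_n\prec n$ and do not shift this peak. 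The first step is thus to localize every such $\sigma^2$-integral to a common interval $I_n=[\sigma_0^2(1-\delta_n),\sigma_0^2(1+\delta_n)]$ around the data-generating variance $\sigma_0^2$, with $\delta_n\to 0$ chosen slightly larger than both $n^{-1/2}$ and $q_n/n$: concentration of $\widetilde R_\bt=y^\T(\mathrm{I}_n-\widetilde{\mr{P}}_\bt)y$ around $n\sigma_0^2$ (a near-$\chi^2$ quantity, the ridge bias negligible by Assumption~2) gives $\widehat\sigma_\bt^2\to\sigma_0^2$, and a union bound over the at most $\binom{p}{q_n}$ overfitted models, using Assumption~3 to control $\widetilde R_\bt-\widetilde R_\bk$ uniformly, shows $\widehat\sigma_\bk^2\in I_n$ for all $\bk\supseteq\bt$ with probability tending to one, so that $\int_{I_n^c}m_\bk(y\mid\sigma^2)\,\mathrm{IG}(d\sigma^2)$ is, for every such $\bk$ and for $\bk=\bt$, an exponentially small fraction of $\int_{I_n}m_\bk(y\mid\sigma^2)\,\mathrm{IG}(d\sigma^2)$.

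On $\sigma^2\in I_n$, I would run the argument of Theorem~\ref{theo1} with $\sigma^2$ treated as a parameter ranging over $I_n$, checking that each inequality is uniform in $\sigma^2$; this is immediate because the estimates involve $\sigma^2$ only through the smooth factors $1/\sigma^2$, $\sqrt{1/\sigma^2}$ and $(\sigma^2)^{-|\bk|/2}$, which move by $1+o(1)$ over a shrinking neighborhood of $\sigma_0^2$. Splitting $\{\bk\neq\bt\}$ into overfitted models $\bk\supsetneq\bt$ and the rest: for an overfitted $\bk$ with $s:=|\bk|-|\bt|$ extra indices, each spurious coordinate $j$ has $\widehat\beta_{\bk,j}=O_P(\tau_{n,p}^{-1/2})$ uniformly over $|\bk|\le q_n$ (consistency of least squares on size-$\le q_n$ models, using Assumptions~3--4 plus a union bound), so the Laplace evaluation of $Q_\bk(\sigma^2)$ picks up a factor $\asymp n^{-1/2}\exp\{-c\sqrt{\tau_{n,p}n}\}$ per such coordinate, from $\int\exp\{-a\beta^2-\tau_{n,p}/\beta^2\}\,d\beta\asymp\sqrt{\pi/a}\,\exp\{-2\sqrt{a\tau_{n,p}}\}$ with $a\asymp n$; multiplied by the $C_{n,p}^{-1}\asymp\tau_{n,p}^{-1/2}$ per extra coordinate, this makes $m_\bk(y)/m_\bt(y)\lesssim[\tau_{n,p}^{-1/2}n^{-1/2}\exp\{-c\sqrt{\tau_{n,p}n}\}]^{s}\exp\{(\widetilde R_\bt-\widetilde R_\bk)/(2\sigma^2)\}$, and since $\widetilde R_\bt-\widetilde R_\bk=O(q_n\log p)$ uniformly while Assumptions~2--3 give $\sqrt{\tau_{n,p}n}\succ q_n\log p\succeq\log\binom{p}{q_n}$, summing over the $\le\binom{p}{q_n}$ overfitted models is $o_P(1)$. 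For $\bk\not\supseteq\bt$, Assumption~5 and the fixed nonzero true coefficients force $\widetilde R_\bk-\widetilde R_\bt\succeq n$, so $\exp\{-(\widetilde R_\bk-\widetilde R_\bt)/(2\sigma^2)\}\le e^{-c'n}$ uniformly on $I_n$, which dominates both the model count $\exp\{q_n\log p\}$ (Assumptions~1 and 3 give $q_n\log p\prec n$) and the at-most-$\tau_{n,p}^{|\bt|/2}$ growth of the remaining constants (Assumption~2 gives $\tau_{n,p}\prec n$); summing over those models is again $o_P(1)$.

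Assembling the two regions, I would bound $m_\bt(y)\ge\int_{I_n}m_\bt(y\mid\sigma^2)\,\mathrm{IG}(d\sigma^2)$ from below, split each numerator $m_\bk(y)$ into its $I_n$ and $I_n^c$ parts, handle the $I_n$-part by the previous paragraph, and control the $I_n^c$-part using the unimodality of $m_\bk(y\mid\sigma^2)$ in $\sigma^2$ together with the crude global bounds $c\,n\le\widetilde R_\bk\le y^\T y$ valid for every $|\bk|\le q_n<n$ (a rank-$q_n$ projection cannot annihilate the full residual), which force the common interval $I_n$ to carry all but an exponentially small fraction of every $m_\bk(y\mid\cdot)$, so that the $\binom{p}{q_n}$ models cannot compensate. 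I expect the main obstacle to be the uniformity in the middle step: making the many estimates from the proof of Theorem~\ref{theo1} hold simultaneously over $\sigma^2\in I_n$ and over the exponentially many models $\bk$, and in particular controlling the interaction among the $(2\pi\sigma^2)^{-n/2}$ factor, the inverse-gamma tail, and the model-dependent $\sigma^2$-factors $C_{n,p}(\sigma^2)^{-|\bk|}$ and $Q_\bk(\sigma^2)$; once that uniformity is secured, the remainder repeats the known-variance bookkeeping essentially verbatim.
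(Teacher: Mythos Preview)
Your localization strategy diverges from the paper's approach and has a genuine gap in the non-nested case. The paper does not restrict the $\sigma^2$-integral to any interval; it integrates $\sigma^2$ out analytically. Using Lemma~\ref{LemmaQ} to bound $Q_\bk$ by $\sigma^2$-free quantities, the remaining $\sigma^2$-integral is a pure Gamma integral, giving
\[
m_\bk(y)\ \lesssim\ (n\nu_{\bk*}\tau_{n,p}+1)^{-|\bk|/2}\exp\Big\{-\textstyle\sum_j \tau_{n,p}/(|\widetilde\beta_{\bk,j}|+\widetilde\epsilon_n)^2\Big\}\,(\widetilde R_\bk+2b_0)^{-(n+2a_0)/2},
\]
with a matching lower bound for $m_\bt(y)$. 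The paper then shows $\{(\widetilde R_\bk+2b_0)/(\widetilde R_\bt+2b_0)\}^{-n/2-a_0}\le\exp\{-(\widetilde R_\bk-\widetilde R_\bt)/[2\sigma_0^2(1+u_n)]\}$ for a random $u_n$ that concentrates around a bounded value, and feeds this directly into the Theorem~\ref{theo1} bounds. No uniformity over a $\sigma^2$-window is ever needed.

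The step that fails in your sketch is the claim that the common interval $I_n$ ``carries all but an exponentially small fraction of every $m_\bk(y\mid\cdot)$''. For $\bk\nsupseteq\bt$, the $\sigma^2$-integrand peaks near $\widehat\sigma_\bk^2\approx\widetilde R_\bk/n$, and since missing a true covariate forces $\widetilde R_\bk\ge\widetilde R_\bt+c\|\beta_\bt^0\|_2^2\,n\nu_{\bf u*}$, the peak $\widehat\sigma_\bk^2$ sits a fixed positive distance above $\sigma_0^2$, hence outside any shrinking $I_n$. Thus $\int_{I_n^c}m_\bk(y\mid\sigma^2)\pi(\sigma^2)\,d\sigma^2$ is not a small remainder---it is essentially the whole integral---and your $I_n^c$ argument collapses for precisely the models you most need to control. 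The cure is to drop localization for such $\bk$ and evaluate the Gamma integral exactly, which produces the factor $(\widetilde R_\bk+2b_0)^{-(n+2a_0)/2}$; the ratio with $m_\bt$ then gives $[(\widetilde R_\bk+2b_0)/(\widetilde R_\bt+2b_0)]^{-n/2-a_0}\le e^{-cn}$, i.e.\ the paper's route. Two smaller inaccuracies: for spurious $j$ one has $\widehat\beta_{\bk,j}=O_P((n\nu_{\bk*})^{-1/2})$, not $O_P(\tau_{n,p}^{-1/2})$; and after the union bound required for $p^{-d(1+\delta)}$-level control, the per-coordinate penalty from $Q_\bk$ is of order $\tau_{n,p}^{2/3}(n\nu_{\bk*})^{1/3}$ (Lemma~\ref{lemma:Qratio}), not $\sqrt{\tau_{n,p}n}$---still enough to beat $\log p$, but your stated rate is too optimistic.
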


\begin{corollary}\label{cor2}
Suppose the conditions of the preceding Theorem apply, but that $\pi(\bt\mid{\bf y})$ now denotes the posterior probability of the true model obtained under a piMoM prior density. Then $\pi(\bt\mid{\bf y})$ converges to one in probability as $n$ goes to $\infty$.
\end{corollary}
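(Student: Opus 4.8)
The plan is to transcribe the proof of Theorem~\ref{theo2} almost verbatim, replacing the peMoM marginal likelihood by the piMoM marginal likelihood coming from \eqref{eq:mod_piMoM}, exactly as Corollary~\ref{cor1} follows from Theorem~\ref{theo1} in the known-$\sigma^2$ case. First I would write $m_{\bk}(y) = (2\pi\sigma^2)^{-n/2}\, C^{*-|\bk|}\, Q_{\bk}^*\, \exp\{-R_{\bk}^*/(2\sigma^2)\}$, with $Q_{\bk}^*$ as in \eqref{eq:Q_k2} and $R_{\bk}^*$ the residual sum of squares from the ordinary least squares fit under $\bk$, and then integrate $\sigma^2$ against the inverse gamma $(a_0,b_0)$ density to obtain $m_{\bk}(y)$. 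The key simplification relative to the peMoM calculation is that $C^{*}=\tau_{n,p}^{-r+1/2}\Gamma(r-1/2)$ does not depend on $\sigma^2$, so it factors out of the $\sigma^2$-integral; what remains is a joint $(\beta_{\bk},\sigma^2)$ integral of exactly the type already controlled by the Laplace-type argument in the proof of Theorem~\ref{theo2}.

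There are only three structural differences between the piMoM and peMoM marginal likelihoods: the constant $C^{*-|\bk|}$ in place of $C_{n,p}^{-|\bk|}$; the extra polynomial factor $\prod_{j=1}^{|\bk|}\beta_{\bk,j}^{-2r}$ inside $Q_{\bk}^*$; and the ordinary projection $\mr{P}_{\bk}=X_{\bk}(X_{\bk}^\T X_{\bk})^{-1}X_{\bk}^\T$ in $R_{\bk}^*$ versus the ridge-adjusted projection $\widetilde{\mr{P}}_{\bk}$ in the peMoM residual $\widetilde{R}_{\bk}$. I would show that each of these contributes only lower-order terms. For the constants, $\log\big(C^{*\,|\bt|-|\bk|}\big)=O(q_n\log\tau_{n,p})$, and the $\sigma^2$-dependent factor $\exp\{-(2/\sigma^2)^{1/2}(|\bt|-|\bk|)\}$ dropped from the peMoM normalizer is $\exp\{O(q_n)\}$ on the region of $\sigma^2$ where the joint integrand is non-negligible; both are negligible against the per-spurious-variable nonlocal penalty described below. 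For the projection, Assumption~4 gives $n\nu_{\bk*}\succ\tau_{n,p}$, so $(X_{\bk}^\T X_{\bk}+\tau_{n,p}^{-1}\mr{I}_{\bk})^{-1}$ and $(X_{\bk}^\T X_{\bk})^{-1}$ differ in operator norm by a factor of order $1/(\tau_{n,p}\,n\nu_{\bk*})=o(1/n)$ uniformly over $|\bk|\le q_n$, whence $R_{\bk}^*$ and $\widetilde{R}_{\bk}$ differ by $o_P(1)$ after division by $2\sigma^2$; this is precisely the comparison that links Corollary~\ref{cor1} to Theorem~\ref{theo1}.

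The polynomial factor $\prod_j\beta_{\bk,j}^{-2r}$ is handled by noting that in the Laplace expansion of $Q_{\bk}^*$ the integrand concentrates near the least squares estimate $\widehat\beta_{\bk}$, whose coordinates are bounded away from $0$ with probability tending to one for every model $\bk\supseteq\bt$ by consistency of ordinary least squares under Assumptions~3--5; hence $\prod_j\widehat\beta_{\bk,j}^{-2r}=\exp\{O(q_n\log n)\}$ at the mode, again absorbed into the dominant penalty. For models $\bk$ that omit some true variable, $R_{\bk}^*$ exceeds $R_{\bt}^*$ by an amount of order $n$ by Assumption~5 and the signal strength of $\beta_{\bt}^0$, which overwhelms all polynomial corrections. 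Combining these estimates with the bound established in the proof of Theorem~\ref{theo2} yields $\sum_{\bk\neq\bt}\pi(\bk\mid y)/\pi(\bt\mid y)\to 0$ in probability, equivalently $\pi(\bt\mid y)\to 1$.

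I expect the main obstacle to be making the treatment of $\prod_j\beta_{\bk,j}^{-2r}$ rigorous \emph{uniformly} over the exponentially many submodels, and in particular ruling out that the $\beta^{-2r}$ singularity at the origin combines with a near-zero least squares coordinate --- which genuinely occurs for an overfitted model whose spurious coefficient is estimated at order $n^{-1/2}$ --- to spoil the one-dimensional integral over that coordinate. This is resolved by observing that in precisely that regime the competing factor $\exp\{-\tau_{n,p}/\beta_{\bk,j}^2\}$ is most strongly suppressive: minimizing $n\nu_{\bk*}\beta^2/(2\sigma^2)+\tau_{n,p}/\beta^2$ in $\beta$ shows that $\int\beta^{-2r}\exp\{-n\nu_{\bk*}(\beta-\widehat\beta_{\bk,j})^2/(2\sigma^2)-\tau_{n,p}/\beta^2\}\,d\beta = O\big(\exp\{-c\,(n\nu_{\bk*}\tau_{n,p})^{1/2}\}\big)$ for some $c>0$, and under Assumptions~2--4 this decays faster than every factor of the form $\exp\{O(q_n\log p)\}$ encountered above. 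Once this uniform nonlocal penalty is in hand, the rest is a routine transcription of the proof of Theorem~\ref{theo2}.
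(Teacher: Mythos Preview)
Your proposal is correct in outline, but it takes a noticeably different and more laborious route than the paper. The paper's proof of Corollary~\ref{cor2} is literally one sentence: it invokes the asymptotic equivalence of $Q_{\bk}^*$ and $Q_{\bk}$ already established in the proof of Corollary~\ref{cor1}, and then says the proof of Theorem~\ref{theo2} carries over verbatim. The key device you are missing is the \emph{absorption trick} from that proof: for any fixed $\epsilon>0$, the function $\beta\mapsto \beta^{-2r}\exp\{-\epsilon\,\tau_{n,p}/\beta^2\}$ is uniformly bounded, so the troublesome polynomial $\prod_j\beta_{\bk,j}^{-2r}$ can be swallowed by sacrificing an $\epsilon$-fraction of the exponential kernel, yielding $Q_{\bk}^*$ sandwiched between constants times the peMoM-type integral with penalty parameter $(1-\epsilon)\tau_{n,p}$. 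Since all of the Lemma~\ref{LemmaQ} and Lemma~\ref{lemma:Qratio} bounds depend on $\tau_{n,p}$ only through rates that are unaffected by a constant multiple, Theorem~\ref{theo2}'s argument applies unchanged.

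By contrast, you treat the three structural differences (normalizing constant, projection, polynomial factor) separately and re-derive the one-dimensional integral bound $O\big(\exp\{-c(n\nu_{\bk*}\tau_{n,p})^{1/2}\}\big)$ from scratch. This works, and your final ``main obstacle'' paragraph essentially rediscovers the absorption trick in a weaker form via direct minimization. Two small cautions: first, your earlier claim that for $\bk\supseteq\bt$ the coordinates of $\widehat\beta_{\bk}$ are ``bounded away from $0$ with probability tending to one'' is false for the spurious coordinates (as you yourself note later), so that sentence should be deleted rather than patched. Second, your discussion of $\widetilde R_{\bk}$ versus $R_{\bk}^*$ is largely unnecessary here, since the piMoM marginal likelihood already involves the OLS residual $R_{\bk}^*$; the ridge-versus-OLS comparison is a peMoM artifact, not something you need to reconcile. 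The absorption trick makes all of these separate case analyses disappear.
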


\section{Connections between nonlocal priors and reciprocal lasso}\label{sec:rlasso}

In this section, we highlight the connection between the rlasso of \citet{song2015high} and Bayesian variable selection procedures based on our nonlocal priors.  We begin by noting that the objective function $g(\beta_\bk ; \bk)$ of rlasso on a model {\bk}  can be expressed as follows:
\begin{eqnarray}\label{eq:rLASSO} 
g( \beta_\bk ; \bk) = \norm y - X_\bk \beta_\bk\norm_2^2 +\sum_{j=1}^{|\bk|} \tau_{n,p}/|\beta_{\bk,j}| .
\end{eqnarray}
{The optimal model is selected by minimizing this objective function with respect to $\beta_\bk$ and \bk. It is clear that the penalty function $\sum_{j=1}^{|\bk|}\tau_{n,p}/|\beta_{\bk,j}| $ in \eqref{eq:rLASSO} is similar to the negative log-density of piMoM nonlocal priors as proposed in \citet[pp 659]{Johnson2012} and \citet[pp 149]{johnson2010use}.  The main difference between the nonlocal prior version of rlasso and the piMoM-type prior densities proposed in the previous section is the power of $\beta$ in the exponential kernels.  For the rlasso penalty this power is 1, while for piMoM-type prior densities it is 2.   The implications of this difference are apparent from the following proposition.
  }
 \begin{proposition}\label{lemma:rLASSO}
 For a given model $\bk$, suppose that $\widetilde\beta_\bk^*$ is the minimizer of the objective function \eqref{eq:rLASSO}, and again let $\widehat\beta_\bk$ denote the least square estimator of $\beta$ under model $\bk$. Assume that $\tau_{n,p}\prec n$, and there exist strictly  positive contants $C_L$ and $C_U$ such that $C_L<\nu_{\bk*} \leq \nu_{\bk}^*<C_U$. Then, for any $\epsilon_n^*\succ (\tau_{n,p}/n)^{1/3}$, 
 \[
 P\left[ \widetilde\beta_\bk^* \notin {R}\big(\widehat\beta_\bk; \epsilon_n^* \big)\right] \rightarrow 0,
 \]
 where  ${R}({u} ; \epsilon)=\{ {\bf x} \in \mathbb{R}^{|\bk|}: |x_j-u_j| \leq \epsilon, j=1,\dots,|\bk| \}$. 
 \end{proposition}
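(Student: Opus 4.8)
The plan is to establish the \emph{deterministic} inequality
\[
\|\widetilde\beta_\bk^* - \widehat\beta_\bk\|_2 \;\le\; c\,(\tau_{n,p}/n)^{1/3}, \qquad c = \{(C_U+1)\,|\bk|/C_L\}^{1/2},
\]
valid whenever the stated eigenvalue bounds $C_L<\nu_{\bk*}\le\nu_{\bk}^*<C_U$ hold; since $\bk$ is fixed, $c$ is a fixed constant, and because $\|\cdot\|_\infty\le\|\cdot\|_2$ and $\epsilon_n^*\succ(\tau_{n,p}/n)^{1/3}$ this forces $\widetilde\beta_\bk^*\in R(\widehat\beta_\bk;\epsilon_n^*)$ for all large $n$, which gives the claim (and is in fact stronger than a statement in probability; if one prefers to treat the eigenvalue condition as holding only with high probability, the displayed bound holds on that event and one adds its complement to the probability in question). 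At the outset I would note that $g(\cdot\,;\bk)$ does attain its minimum --- it is smooth on each orthant of $\{\beta:\prod_j\beta_j\neq0\}$ and diverges both at the coordinate hyperplanes and at infinity --- so that ``the minimizer'' $\widetilde\beta_\bk^*$ is meaningful.

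The first ingredient is the decomposition coming from the normal equations $X_\bk^\T(y-X_\bk\widehat\beta_\bk)=0$: for every $\beta\in\mathbb{R}^{|\bk|}$,
\[
g(\beta;\bk) = \mr{RSS}_\bk + \|X_\bk(\beta-\widehat\beta_\bk)\|_2^2 + \sum_{j=1}^{|\bk|}\tau_{n,p}/|\beta_{j}|, \qquad \mr{RSS}_\bk:=\|y-X_\bk\widehat\beta_\bk\|_2^2,
\]
together with the Courant--Fischer bounds $C_L n\|u\|_2^2\le\|X_\bk u\|_2^2\le C_U n\|u\|_2^2$ for all $u$ (a consequence of the assumed control of the eigenvalues of $X_\bk^\T X_\bk/n$, which also guarantees $\widehat\beta_\bk$ is well defined). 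Dropping the nonnegative penalty gives the lower bound $g(\widetilde\beta_\bk^*;\bk)\ge \mr{RSS}_\bk + C_L n\|\widetilde\beta_\bk^*-\widehat\beta_\bk\|_2^2$.

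The crux is the matching upper bound on $g(\widetilde\beta_\bk^*;\bk)$, obtained via a well-chosen competitor. Using $\widehat\beta_\bk$ itself is useless, because $\sum_j\tau_{n,p}/|\widehat\beta_{\bk,j}|$ can be large --- indeed of infinite expectation --- when $\bk$ contains coordinates whose least-squares estimate sits near $0$. Instead I would ``floor'' $\widehat\beta_\bk$ at level $\theta_n=(\tau_{n,p}/n)^{1/3}$: set $\beta^\circ_j=\widehat\beta_{\bk,j}$ if $|\widehat\beta_{\bk,j}|>\theta_n$ and $\beta^\circ_j=\theta_n\,\mr{sign}(\widehat\beta_{\bk,j})$ otherwise, so that $|\beta^\circ_j-\widehat\beta_{\bk,j}|\le\theta_n$ and $|\beta^\circ_j|\ge\theta_n$ for every $j$. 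Then $\|X_\bk(\beta^\circ-\widehat\beta_\bk)\|_2^2\le C_U n|\bk|\theta_n^2$ and $\sum_j\tau_{n,p}/|\beta^\circ_j|\le\tau_{n,p}|\bk|/\theta_n$, and the choice of $\theta_n$ makes $n\theta_n^2=\tau_{n,p}/\theta_n=n(\tau_{n,p}/n)^{2/3}$; hence $g(\beta^\circ;\bk)\le \mr{RSS}_\bk+(C_U+1)|\bk|\,n(\tau_{n,p}/n)^{2/3}$. The exponent $1/3$ is precisely the one balancing the ``bias'' term $n\theta^2$ against the reciprocal penalty $\tau_{n,p}/\theta$. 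Since $\widetilde\beta_\bk^*$ minimizes $g(\cdot\,;\bk)$ we have $g(\widetilde\beta_\bk^*;\bk)\le g(\beta^\circ;\bk)$, and combining this with the lower bound of the previous paragraph yields the displayed inequality for $\|\widetilde\beta_\bk^*-\widehat\beta_\bk\|_2$.

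The one genuinely nontrivial step is the construction of the floored competitor $\beta^\circ$ together with the realization that naive competitors (such as $\widehat\beta_\bk$) fail because the reciprocal penalty is not integrable at the origin; once the correct threshold $\theta_n\asymp(\tau_{n,p}/n)^{1/3}$ is identified, the rest is routine bookkeeping with the eigenvalue bounds and the normal equations.
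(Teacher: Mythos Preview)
Your proof is correct and takes a genuinely different, considerably simpler route than the paper's. The paper argues probabilistically: it fixes a specific competitor inside the box $R(\widehat\beta_\bk;\epsilon_n^*)$ and then, for points $\alpha_\bk$ outside the box, uses the distribution of $\widehat\beta_\bk$ under the true data-generating model (via chi-square tail bounds from its Lemma~\ref{lemma:cher}) to show that $g(\alpha_\bk;\bk)$ exceeds the competitor's value with high probability. Your argument instead extracts a \emph{deterministic} bound $\|\widetilde\beta_\bk^*-\widehat\beta_\bk\|_2\le c(\tau_{n,p}/n)^{1/3}$ by comparing the minimizer to the floored competitor $\beta^\circ$; the only inputs are the normal-equations decomposition and the eigenvalue bounds, and no distributional assumption on $y$ is needed at all. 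This buys you a stronger conclusion (the event holds surely for large $n$, not merely with probability tending to one), a cleaner dependence on the problem parameters, and an argument that sidesteps the paper's need to reference the true model $\bt$ and to control probabilities coordinate by coordinate. The paper's approach, by contrast, is tailored to the stochastic setting and could in principle yield finer information when one does want to exploit the law of $\widehat\beta_\bk$, but for the bare statement of the proposition your competitor trick is both shorter and sharper.
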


{The proposition shows that under standard conditions on the eigenvalues of the Gram matrix $X_\bk^TX_\bk/n$, the estimator derived from \eqref{eq:rLASSO} is asymptotically within $(\tau_{n,p}/n)^{1/3}$ distance of the least squares estimator $\widehat\beta_\bk$. 
On the other hand, results cited in the previous section show that maximum a posteriori  estimators obtained from the piMoM-type prior densities reside at an asymptotic distance of $(\tau_{n,p}/n)^{1/4}$ from the least squares estimator.  Variable selection procedures based on both forms of piMoM priors thus achieve adaptive penalties on the regression coefficients in the sense described in \citet{song2015high}.
 }
 
  Although rlasso is proposed as a penalized likelihood approach, the computational procedure  to optimize its objective function is quite different from the other penalized likelihood methods. The resulting computational complexity of this optimization procedure, which contains a discontinuous penalty function, is NP-hard.  This suggests that the formulation of this nonlocal penalty in a penalized likelihood framework is unlikely to provide significant computational advantages over related Bayesian model selection procedures, even though the inferential advantages of the Bayesian framework are lost.  
\section{Asymptotic behavior of marginal likelihoods based on nonlocal priors}
From Lemma \ref{LemmaQ} in the Supplemental Materials, it follows that the asymptotic log-marginal likelihood of a model {\bk} based on a peMoM or piMoM prior density can be expressed as
\begin{eqnarray*}
\log\pi(\bk\mid y)&=&l(\widehat\beta_{\bk}) +\log Q_{\bk}-|\bk|\log C_{n,p}\\
&\approx& l(\widehat{\beta}_{\bk}) - \sum_{j=1}^{|\bk|} p_{\tau_{n,p}}\big(\widehat\beta_{\bk,j}\big)+C, 
\end{eqnarray*}
for some constant $C$, $\widehat{\beta}_{\bk}$ is the maximum likelihood  estimator under the model $\bk$,  {\it i.e.} $\widehat{\beta}_{\bk}=(X_{\bk}^TX_{\bk})^{-1}X_{\bk}^T y$, and
\begin{eqnarray}\label{eq:penalty}
p_{\tau_{n, p}}\big(\widehat{\beta}_{\bk,j}\big) \approx \begin{cases} 
(n\tau_{n, p}u_\bk)^{1/2}, &\mbox{if }|\widehat{\beta}_{\bk,j}| < c\big( \frac{nu_\bk}{\tau_{n, p}}\big)^{-1/4}\\
\tau_{n, p}/\widehat{\beta}_{\bk,j}^2, &\mbox{if }|\widehat{\beta}_{\bk,j}| \geq c\big( \frac{nu_\bk}{\tau_{n, p}}\big)^{-1/4},
\end{cases}
\end{eqnarray}
 for some constant $c$ and some arbitrary sequence $u_\bk$ with $\nu_{\bk*}\leq u_\bk\leq \nu_{\bk}^*$. We note that the strength of the correlation between the variables in the model $\bk$ affects the behavior of $u_\bk$, and $(nu_\bk/\tau_{n,p})^{-1/4}$ converges to zero as $n$ tends to infinity due to Assumption 4 described in Section  \ref{sec:theory}.

 On the other hand, the penalty term in the other Bayesian model selection approaches is quite different from that of the nonlocal priors as in \eqref{eq:penalty}. The marginal likelihood based on the $g$-prior when $\sigma^2$ is known can be expressed as 
  \[
l(\widehat{\beta}_\bk) - |\bk|\log (1+g)/2.
\]

Narisetty and He(2014) demonstrated that BASAD achieves the  strong model selection consistency.  This consistency follows from that the fact that the BASAD ``penalty'' is asymptotically equivalent to
\begin{eqnarray}\label{eq:basad}
l(\widehat{\beta}_\bk) - c|\bk|\log (p),
\end{eqnarray}
where $c$ is some constant. \cite{yang2015computational} and \cite{castillo2012needles} also considered a similar penalty term on the model space, which implies that the posterior probability for their procedures can be expressed in the same form as  \eqref{eq:basad}. When $g = p^{2c}$, the marginal likelihood based on a $g$-prior is asymptotically equivalent to   \eqref{eq:basad}.

The asymptotic term of the marginal likelihoods is quite different from that of the  nonlocal priors, since the penalty terms in the other Bayesian approaches only focus on the model size without considering the different weights on variables in the model. The marginal likelihoods based on nonlocal priors, however, impose different penalties on each predictor in the given model. When the MLE of the regression coefficient in the model is asymptotically close to zero ($|\widehat{\beta}_{\bk,j}|< c( nu_\bk/\tau_{n, p})^{-1/4}$), the model that contains the corresponding variable would be strongly penalized by $(n\tau_{n,p}u_\bk)^{1/2}$. In contrast, when the MLE is asymptotically significant ($|\widehat{\beta}_{\bk,j}|\geq c( nu_\bk/\tau_{n, p})^{-1/4}$), the penalty attains a different weight based on the MLE ($p_{\tau_{n, p}}(\widehat{\beta}_{\bk,j}) \approx \tau_{n, p}/\widehat{\beta}_{\bk,j}^2$).  

This analysis highlights the fact that the nonlocal priors are able to adapt their penalty for the inclusion of covariates based on the observed data, whereas the local priors must instead rely on a prior penalty on non-sparse models.  

\section{Computational strategy}\label{sec:comp}
In $p \gg n$ settings, full posterior sampling using existing Markov chain Monte Carlo (MCMC)  algorithms is highly inefficient and often not feasible from a practical perspective.  To overcome this problem, we propose a scalable stochastic search algorithm aimed at rapidly identifying regions of high posterior probability and finding the maximum a posteriori (MAP)  model. Our main innovation is to develop a stochastic search algorithm combining isis-like screening techniques \citep{fan2008sure} and temperature control that is commonly used in global optimization algorithms such as simulated annealing \citep{kirkpatrick1983optimization}. 

To describe our proposed algorithm, consider  the MAP model $\widehat{\bk}$ that can be expressed as
\begin{eqnarray}\label{MAP}
\widehat{\bk} =\underset{\bk\in\Gamma^*} {\mathrm{argmax}} \{\pi(\bk\mid y)\},
\end{eqnarray} 
where $\Gamma^*$ is the set of all models assigned non-zero prior probability.  

\subsection{Shotgun stochastic search algorithm}
\citet{hans2007shotgun} proposed the  shotgun stochastic search (SSS)  algorithm in an attempt to efficiently navigate through very large model spaces and identify global maxima. Letting $\mbox{nbd}(\bk)=\{ \Gamma^+,\Gamma^-,\Gamma^0 \}$, where $\Gamma^+= \{ \bk\cup \{j\} : j\in \bk^c \}$, $\Gamma^-= \{ \bk \setminus \{j\} : j \in \bk \}$, and $\Gamma^0=\{ [\bk \setminus \{j \}] \cup \{l\} : l \in \bk^c, j \in \bk \}$, the SSS procedure is described in {\bf Algorithm 1}.

\begin{algorithm}[!h]
\caption{Shotgun Stochastic Search (SSS)}

\bigskip
\vspace*{-12pt}
\begin{tabbing}
\enspace Choose an initial model $\bk^{(1)}$\\
\enspace For $i=1$ to $i=N-1$\\
\qquad Compute $\pi(\bk\mid y)$ for all $\bk \in \mbox{\mbox{nbd}}(\bk^{(i)})$\\
\qquad Sample $\bk^+$, $\bk^-$, and $\bk^0$, from $\Gamma^+$, $\Gamma^-$, and $\Gamma^0$, with probabilities proportional to $\pi(\bk\mid y)$\\
\qquad Sample $\bk^{(i+1)}$ from $\{\bk^+, \bk^-, \bk^0\}$, with probability proportional to\\
 \qquad \enspace$\{\pi(\bk^+\mid  y), \pi(\bk^-\mid y),\pi(\bk^0\mid y)\}$
\end{tabbing}
\end{algorithm}
The MAP model can be identified by the model that achieves the largest (unnormalized) posterior probability among those models searched by SSS.

\subsection{Simplified shotgun stochastic search algorithm with screening (S5)}
SSS is effective in exploring regions of high posterior model probability, but its computational cost is still expensive because it requires the evaluation of marginal probabilities for models in $\Gamma^+$, $\Gamma^-$, and $\Gamma^0$ at each iteration. The largest computational burden occurs for the evaluation of marginal likelihood for models in $\Gamma^0$, since $|\Gamma^0| = |\bk|(p-|\bk|)$. 
To improve the computational efficiency of SSS, we propose a modified version which only examines models in $\Gamma^+$ and $\Gamma^-$, which have cardinality $p-|\bk|$ and $|\bk|$, respectively. However, by ignoring $\Gamma^0$ in the sampling updates we make the algorithm less likely to explore ``interesting'' regions of high posterior model probability, and therefore more likely to get stuck in local maxima. To counter this problem, we introduce a ``temperature parameter'' analogous to simulated annealing which allows our algorithm to explore a broader spectrum of models. 

Even though ignoring models in $\Gamma^0$ reduces the computational burden of the SSS algorithm, the calculation of $p$ posterior model probabilities in every iteration is still computationally prohibitive when $p$ is very large. To further reduce the computational burden, we borrow ideas from the  Iterative Sure Independence Screening (isis; \cite{fan2008sure}) and consider only those variables which have a large correlation with the residuals of the current model. More precisely, we examine the products $|r_{\bk}^TX_j|$, where $r_\bk$ is the residual of the model $\bk$, for $j=1,\ldots,p$, after every iteration of the modified shotgun stochastic search algorithm, and then restrict attention to variables for which $\{|r_{\bk}^TX_j|:j=1,\ldots,p\}$ is large (we assume that the columns of ${ X}$ have been standardized). This yields a scalable algorithm even when the number of variables $p$ is large.

With these ingredients, we propose a new stochastic model search algorithm called Simplified Shotgun Stochastic Search with Screening (S5), which is described in {\bf Algorithm 2}.
\begin{algorithm}[!h]
\vspace*{-12pt}
\caption{ Simplified Shotgun Stochastic Search with Screening (S5)}
\bigskip
\begin{tabbing}
\enspace Set a temperature schedule $t_1>t_2>\ldots>t_L>0$\\
\enspace Choose an initial model $\bk^{(1,1)}$ and a set of variables after screening ${\bf S}_{\bk^{(1,1)}}$ based on $\bk^{(1,1)}$\\

\enspace For $l=1$ in $l=L$\\

\qquad For $i$ in $1,\ldots,J-1$\\

\qquad\enspace Compute all $\pi(\bk\mid y)$ for all $\bk \in \mbox{nbd}_{scr}(\bk^{(i,l)})$\\
\qquad\enspace Sample $\bk^+$ and $\bk^-$, from $\Gamma_{scr}^+$ and $\Gamma^-$, with probabilities proportional to $\pi(\bk\mid y)^{1/t_l}$\\
\qquad\enspace Sample $\bk^{(i+1,l)}$ from $\{\bk^+, \bk^-\}$, with probability proportional to $\{\pi(\bk^+\mid y)^{1/t_l}, \pi(\bk^-\mid y)^{1/t_l}\}$\\
\qquad\enspace Update the set of considered variables ${\bf S}_{{\bk}^{(i+1,l)}}$ to be the union of variables in ${\bk}^{(i+1,l)}$ and \\
\qquad\enspace the top $M_n$ variables according to $\{ |r_{\bk^{(i+1,l)}}^TX_j|:j=1,\ldots,p \}$
\end{tabbing}
\end{algorithm}

In S5, ${\bf S}_\bk$ is the union of variables in ${\bk}$ and the top $M_n$ variables, obtained by screening using the residuals from the model \bk. The screened neighborhood of the model {\bk} can be defined as $\mbox{nbd}_{scr}(\bk)=\{ \Gamma_{scr}^+,\Gamma^-\}$, where $\Gamma_{scr}^+= \{ \bk\cup \{j\} : j\in \bk^c\cap {\bf S}_\bk \}$. 

Even though this algorithm is designed to identify the MAP model, it also provides an approximation to the posterior model probability of each model.  The uncertainty of the model space can be measured by approximating the normalizing constant from the  (unnormalized) posterior probabilities of the models explored by the algorithm. 


Denoting  the computational complexity of the evaluation of the unnormalized posterior model probability of the largest model among searched models by $E_n$, the computational complexity of the SSS algorithm can be expressed as the product of the number of explored models by the algorithm and $E_n$, which is  $[O\{Np\}+O\{Nq_n\}+O\{N(p-q_n)q_n\}]\times E_n$, where $q_n$ is the maximum size of model among searched models and $q_n < n \ll p$. 

 On the other hand,  the S5  only considers  $M_n$ variables after the screening step in each iteration, which dramatically reduces the number of models to be considered in constructing the neighborhood, $O\{JL(M_n-q_n)\}+O(JLM_n)$. Therefore, the resulting computational complexity is
\be
\left[O\{JL(M_n-q_n)\}+O(JLM_n)\right] \times E_n + O(JLnp),
\ee  
where $q_n<M_n$. When the computational complexity for screening steps, $O(JLnp)$, is dominated by the other terms, the computational complexity is almost independent of $p$.  As a result, the proposed algorithm is scalable in the sense that the resulting computational complexity is typically robust to the size of $p$.  
    
\subsection{Performance comparisons between S5 and SSS}
We examined the computational performance of S5 to SSS in identifying the MAP model under a piMOM prior with $\tau_{n,p}=\log n \log p$ and $r=1$. We generated data according to Case (1) in Section \ref{sec:num}, with a fixed sample size ($n=200$), and a varying number of covariates $p$. We set $M_n = 20$, $L=20$, and $J=20$ for S5. To match the total number of iterations between S5 and SSS, we set $N=400$ for SSS. All computations were implemented in \texttt{R}. 
   
Figure \ref{fig:s5} shows the average computation time and the number of models searched before hitting the MAP model for the first time for the S5 and SSS algorithms.  All averages were based on 100 simulated datasets, and both algorithms obtained the same MAP model for all data sets.  Panel (a) shows that the computation time of SSS increases roughly at a $p^2$ rate, but that the computation time for S5 was nearly independent of the number of covariates $p$ (about 4 seconds). For example when $p=2,000$, SSS first found the MAP model in an average of 1,360 seconds (about 23 minutes), whereas S5 hit the MAP model after about only 4 seconds.  Interestingly, panel (b) of Figure \ref{fig:s5} also illustrates that the S5 algorithms explored only 181 models on average to hit the MAP model, whereas SSS typically visited slightly more than 38,000 models. Thus, not only is S5 much faster than SSS in identifying the MAP model, but it also visited far fewer models before visiting the MAP model. 
   
 \begin{figure}
    \centering
    \begin{subfigure}[b]{0.49\textwidth}
        \includegraphics[width=\textwidth]{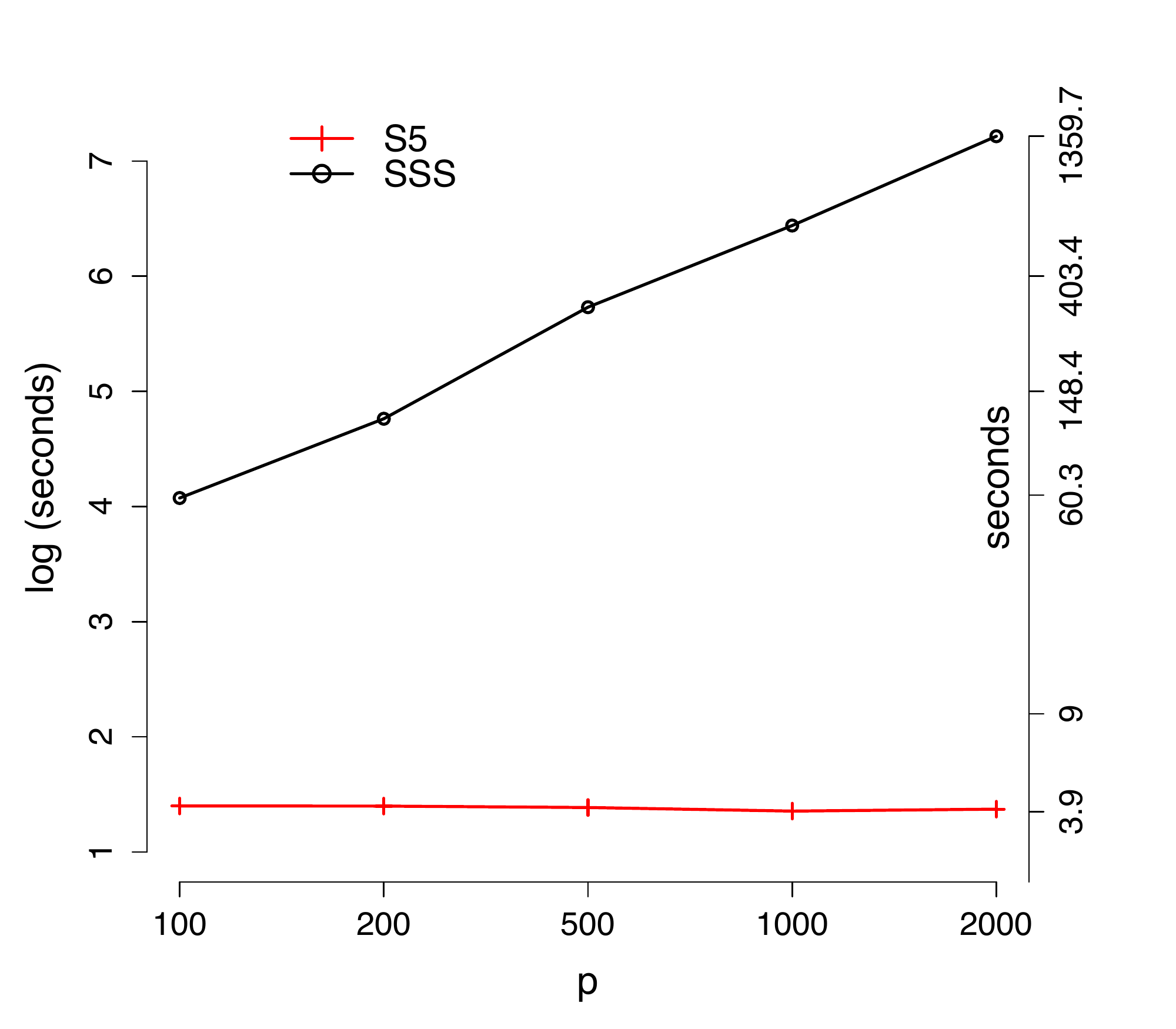}
        \caption{}
        \label{fig:gull}
    \end{subfigure}
    \begin{subfigure}[b]{0.49\textwidth}
        \includegraphics[width=\textwidth]{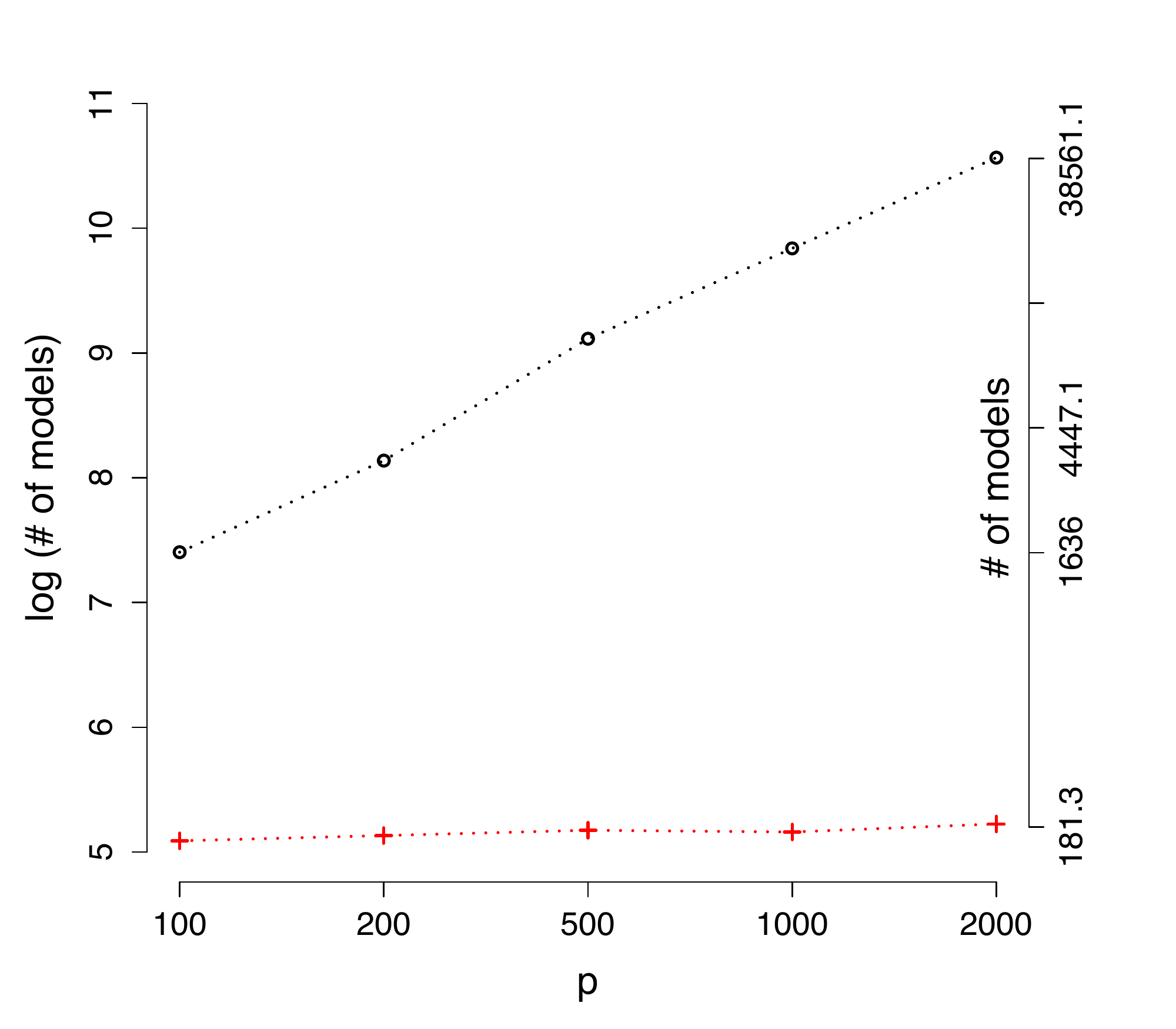}
        \caption{}
        \label{fig:mouse}
    \end{subfigure}
    \caption{(a) the average computation time to first hit the MAP model; (b) the average number of models searched before hitting the MAP model.  The left $y$-axis is in a logarithmic scale and the right $y$-axis is in the raw scale.}\label{fig:s5}
\end{figure}

\section{Real data analysis}
\subsection{Analysis of polymerase chain reaction (PCR) data}
\cite{lan2006combined} studied coordinated regulation of gene expression levels on 31 female and 29 male mice ($n=60$). 
A number of psychological phenotypes, including numbers of stearoyl-CoA desaturase 1 (SCD1), glycerol-3-phosphate acyltransferase (GPAT) and phos- phoenopyruvate carboxykinase (PEPCK), were measured by quantitative real-time RT-PCR, along with 22,575 gene expression values. The resulting data set is publicly available at \url{http://www.ncbi.nlm.nih.gov/geo} (accession number GSE3330). 

\cite{zhang2009penalized} used penalized orthogonal components regression to predict the three phenotypes mentioned above based on the high-dimensional gene expression data. \cite{bondell2012consistent} also used the same data set to examine their model selection procedure based on penalizing regression coefficients within a (marginal or joint) credible interval obtained from a ridge-type prior. For brevity, we restrict attention here to SCD1 as the response variable. 

 
Since the ground truth regarding the true significant variables is not known for this data, we compared our approach with a host of competitors on predictive accuracy and parsimony of the selected model. 

Prior to analyses, we standardized the covariates and randomly split the data set into 5 test samples and 55 training samples to evaluate the out-of-sample mean square prediction  error (MSPE)  
\[
\mbox{MSPE} = \sum_{i \in T_{test}} (y_i - X_i^T\widehat\beta_{\widehat\bk}^{tr})^2/|T_{test}|,
\]
where $T_{test}$ is the index set of the test samples and $\widehat\beta_{\widehat\bk}^{tr}$ is the least square estimator under the estimated model $\widehat \bk$ based on the training samples. To avoid sensitivity to a particular split, we considered 100 replications of the training and test sample generation. 
To measure the stability of model selection, we considered the number of variables that were (i) selected at least 95 times, and (ii) at least once, out of the 100 replicates. 

Due to the high-computational burden of the penalized credible interval \citep{bondell2012consistent} approach, we followed the pre-processing step suggested in their article to marginally screen variables to reduce to 2000 variables (1999 genes and gender). For all the other approaches, all 22,575 genes were used. For the nonlocal priors, we considered both the MAP estimator and the least squares (LS) estimator from the MAP model. For the $g$-prior, we set $g=p^2$ as recommended in \cite{george2000calibration}. For the penalized likelihood procedures, we used ten-fold cross validation to choose the tuning parameter. 

{ To choose the hyperparameter $\tau_{n,p}$ for the nonlocal priors, we used a procedure proposed by \cite{nikooienejad2016bayesian}. That procedure sets the hyperparameter so that the $L_1$ distance between the posterior distribution on the regression parameters under the null distribution (i.e., $\beta = 0$) and the nonlocal prior distributions on these parameters is constrained to be less than a specified value (e.g., $p^{-1/2}$). The average value of the hyperparameter values chosen by this procedure were  $\tau_{n,p}= 1.12$ and $\tau_{n,p}= 1.16$ for piMoM and peMoM priors, respectively. }



To make the comparison between the nonlocal priors and the $g$-prior more transparent, we used the same beta-binomial prior on the model space in both models, rather than the uniform prior on the model space described previously.  The form of the beta-binomial prior was given by 

\begin{eqnarray}\label{eq:model}
\pi({\bf k}) \propto \rho^{|{\bf k }|}(1-\rho)^{p-{|{\bf k }|}}I(|{\bf k}|\leq q_n),
\end{eqnarray}
with a uniform prior on $\rho$ and $q_n=40$.  We note that this prior does not strongly induce sparsity as does, for example, the prior obtained by imposing a $Beta(1,p^u)$, $u>1$ prior on $\rho$, as suggested in \cite{castillo2015bayesian}.

\begin{table}[ht]
\centering
\begin{tabular}{c|cccc}%
  \hline
  \hline
Method & MSPE &  MS & FS & TS  \\ 
  \hline

piMoM(MAP) & 0.283 (0.17)& 1.00 (0.00)& 1 & 1  \\ 
  piMoM(LS) & {\bf 0.282} (0.17)& 1.00 (0.00)& 1 & 1  \\ 
  peMoM(MAP) & 0.291 (0.18)& 1.02 (0.14)& 1 & 2  \\ 
  peMoM(LS) & 0.287 (0.17)& 1.02 (0.14)& 1 & 2  \\ 
  g-prior & 0.368 (0.20)& 4.07 (0.56)& 1 & 133  \\ 
  lasso & 0.542 (0.39) & 17.97 (8.62)& 1 & 211  \\ 
  scad & 0.308 (0.23)& 12.66 (7.62)& 2 & 163 \\ 
  mcp & 0.308 (0.21)& 2.20 (0.94)& 0 & 29 \\ 
  Marginal($p=2000$) & 0.456 (0.40)& 17.47 (11.16)& 0 & 273 \\ 
  Joint($p=2000$)& 0.440 (0.40)& 16.42 (11.06) & 1 & 185  \\ 
\hline
\end{tabular}
\caption{Analysis of the PCR data. Marginal and Joint refer to the variable selection procedures \citep{bondell2012consistent} based on Bayesian marginal credible set and Bayesian joint credible set, respectively. MS is the average size of the selected model. FS is the number of frequently selected variables, i.e., that were selected at least 95 times in 100 repetitions. TS refers to the total number of  variables selected at least once from 100 repetitions. Standard errors are provided in parenthesis.}
\label{tab:gene}
\end{table}

Table \ref{tab:gene} summarizes the results from the analysis of the gene expression data set. 
On average, the nonlocal priors simultaneously produced the lowest MSPE and the most parsimonious model.  
The other model selection methods selected a wide array of  different variables for different splits of the data set. In particular, lasso and the penalized credible region approach selected more than 180 different variables from 100 repeated splits, while the average size of the selected model was less than 20 and the number of frequently selected variables was only zero or one, indicating a potentially large number of false positives picked up by these methods.


\subsection{A simulation study based on the Boston housing data}
 We next examined the Boston housing data set that contains the median value of owner-occupied homes in the Boston area, together with several variables that might be associated with their median value. There were $n=506$ median values in the data set, and we considered 10 continuous variables as the predictor variables: \texttt{crim}, \texttt{indus}, \texttt{nox}, \texttt{rm}, \texttt{age}, \texttt{dis}, \texttt{tax}, \texttt{ptratio}, \texttt{b}, and \texttt{lstat}. This data set has been used to validate a variety of approaches; some recent examples relevant to variable selection include \cite{radchenko2011improved}, \cite{yuan2012efficient}, and \cite{rovckova2014emvs}.
 
To examine the model selection performance in high-dimensional settings, we added 1,000 noise variables that were generated independently from a standard Gaussian distribution ($p=1,010$). The same competitors from the previous subsection were used with the aforementioned choice of hyperparameters. For  nonlocal priors, the hyperparameter was chosen by the aforementioned procedure \citep{nikooienejad2016bayesian}; the average of the chosen hyperparameter values were $\tau_{n,p}= 2.01$ and $\tau_{n,p}= 0.47$ for piMoM and peMoM priors, respectively.} Prior to analyses, we standardized the covariates and considered a simulation test size of 100 samples. 

\begin{table}[ht]
\centering
\begin{tabular}{cccccc}
  \hline
Methods & MSPE & MS-O & MS-N & FS-O & TS-O \\ 
  \hline
piMoM(MAP) & 24.281 (9.01) & 5.05 (0.22)& 0.01 (0.10)& 5 & 6 \\ 
  piMoM(LS) & 24.265 (9.04) & 5.05 (0.22)& 0.01 (0.10)& 5 & 6 \\ 
  peMoM(MAP) & {\bf 24.156} (9.02) & 5.02 (0.14)& {\bf 0.00} (0.00) & 5 & 6 \\ 
  peMoM(LS) & 24.165 (9.00)& 5.02 (0.14)& {\bf 0.00} (0.00)& 5 & 6 \\ 
  g-prior & 26.314 (9.87)& 3.10 (0.44)& {\bf 0.00} (0.00)& 3 & 5 \\ 
  lasso & 30.243 (11.82) & 5.07 (0.87)& 7.77 (11.16)& 4 & 8 \\ 
  scad & 33.993 (10.66) & 5.39 (0.57) & 31.60 (28.28)& 5 & 7 \\ 
  mcp & 26.191 (9.87) & 4.66 (0.74) & 0.54 (1.04)& 3 & 6 \\ 
  Marginal & 26.612 (10.16) & 3.74 (0.88) & 0.41 (0.72)& 3 & 7 \\ 
  Joint & 26.385 (10.25) &  3.77 (0.94)& 0.02 (0.20)& 3 & 6 \\ 
   \hline
\end{tabular}
\caption{The Boston Housing data set:  MS-O and MS-N refer to the average number of selected original variables and selected noise variables, respectively. FS-O is the number of  original variables that are frequently selected at least 95 times out of 100 repetitions. TS-O refers to the number of original variables selected at least once from 100 repetitions.}
\label{tab:boston}
\end{table}
The results of are analysis are summarized in Table \ref{tab:boston}. The conclusions are similar to those reported in Section 8.1; the nonlocal priors consistently choose more parsimonious models and had better predictive performance.  The model selection procedure resulting from the nonlocal prior selects almost the same variables across the 100 repetitions. The average number of the original variables selected more than 95 times over 100 repetitions is 5, which is close to the average model size. It is also reliable in the sense that the average number of the original variables that are selected at least once across the repetitions is only 6. This means that model selection based on the nonlocal prior selects the same model in most data splits. On the other hand, penalized likelihood methods such as lasso and scad tend to select a large number of noise variables.


\section{Conclusion}
This article describes theoretical properties of peMoM and piMoM priors for variable selection in ultrahigh-dimensional linear model settings.  In terms of identifying a ``true'' model, selection procedures based on  peMoM priors are asymptotically equivalent to piMoM priors in \citet{Johnson2012} because they share the same kernel, $\exp\{ -\tau_{n, p}/\beta^2\}$.  We demonstrated that model selection procedures based on peMoM priors and piMoM priors achieve strong model selection consistency in $p\gg n$ settings. 

{In Section \ref{sim}, precision-recall curves were used to show that the model selection procedure based on a $g$-prior can achieve nearly the same performance in identifying the MAP model as nonlocal priors when an optimal value for the hyperparameter $g$ is chosen. However, as shown in Section \ref{zellner}, the value of the hyperparameter that maximizes the posterior probability of the true model is very large and has high variability, which may limit the practical application of this method. To overcome this problem, one can consider mixtures of $g$-prior as in \cite{liang2008mixtures}, but the asymptotic behavior of Bayes factor and model selection consistency in ultrahigh-dimensional settings have not been examined for hyper-$g$ priors, and they are difficult to implement computationally.}

In Section  \ref{sec:comp}, we proposed an efficient and scalable model selection algorithm called S5. By incorporating the SSS with a screening idea and a temperature control, S5 was able to accelerate the computation speed without losing the capacity to explore the interesting region in the model space.  Under some simulation settings, it outperformed the SSS in a sense that not only did S5 search the MAP model much faster than the SSS, but it also found exactly the same MAP model that was searched by the SSS.  

  Because the explicit form of the marginal likelihood of the nonlocal priors is not available, we used the Laplace approximation throughout the paper, and \cite{barber2016laplace} studied  the accuracy of the approximation in Bayesian high-dimensional variable selection, especially when the dimension of the approximation (which is $q_n$) and $n$ are both increasing. However, their results do not apply to the case of the nonlocal priors, since the nonlocal priors violate their regularity condition (nonzero density at the origin).  While empirical results in this paper and \cite{Johnson2012} suggest that the use of the Laplace approximation is reasonable, in future work it is still worth paying attention to the approximation error of the Laplace approximation to the marginal likelihood of the nonlocal priors.  

The close connection between our methods and the reduced rlasso procedures provides a useful contrast between Bayesian and penalized likelihood methods for variable selection procedures.  According to the evaluation criteria proposed in Section \ref{sec:rlasso}, the two classes of methods appear to perform quite similarly.  A potential advantage of the reduced rlasso procedure, and to the lesser extent the rlasso procedure, is reduced computation cost.  This advantage accrues primarily because the reduced rlasso can be computed from the least squares estimate of each model's regression parameter, whereas the Bayesian procedures require numerical optimization to obtain the maximum a posteriori  estimate used in the evaluation of the Laplace approximation to the marginal density of each model visited.  However, the procedures used to search the model space, given the value of a marginal density or objective function, are approximately equally complex for both classes of procedures.  There are also potential advantages of the Bayesian methods.  For example, it is possible to approximate the normalizing constant of the posterior model probability from the models visited by S5 algorithm, and to use this normalizing constant to obtain an approximation to the posterior probability assigned to each model.  In so doing, the Bayesian procedures provide a natural estimate of uncertainty associated with model selection.  These posterior model probabilities can also be used in Bayesian modeling averaging procedures, which have been demonstrated to improve prediction accuracy (e.g., \cite{raftery1997bayesian}) over prediction procedures based on maximum a posteriori  estimates.  Finally, the availability of prior densities may prove useful in setting model hyperparameters (i.e., $\tau_{n,p}$) in actual applications, where scientific knowledge is typically available to guide the definition of the magnitude of substantively important regression parameters.  
 
 We also developed an \verb R ~package {\it BayesS5} that provides all computational functions used in this paper, including a support of parallel computing environments. It is available on the author's website and on \texttt{CRAN}.

\section{Supplemental Materials}
The supplemental material contains proofs of the technical results stated in the paper and the Laplace approximations to evaluate the marginal likelihoods based on the nonlocal priors.
\bibliography{ref_nonlocal_jrssb,shrinkage_refs}

\pagebreak
\begin{center}
\textbf{\Large Supplemental Materials to ``Scalable Bayesian Variable Selection Using Nonlocal Prior Densities in Ultrahigh-dimensional Settings"}
\end{center}
\medskip
\setcounter{equation}{0}
\setcounter{figure}{0}
\setcounter{table}{0}
\setcounter{page}{1}
\setcounter{section}{0}

\makeatletter
\renewcommand{\theequation}{S\arabic{equation}}
\renewcommand{\thefigure}{S\arabic{figure}}
\renewcommand{\bibnumfmt}[1]{[S#1]}
\renewcommand{\citenumfont}[1]{S#1}
\section{Preliminary Results}
 \begin{lemma}
\label{LemmaQ}
For $Q_\bk$ defined in \eqref{eq:Q_k}, $\prod_{j=1}^{k}Q_{\bk,j}^L\leq Q_\bk \leq \prod_{j=1}^{k}Q_{\bk,j}^U$,

where
\be 
Q_{\bk,j}^L&=&c_1(\sigma^2)^{1/2}(n\nu_\bk^*+1/\tau_{n,p})^{-1/2}\exp\{ -\tau_{n, p}/\widetilde{\beta}_{\bk,j}^{*2} \} ,\\ 
Q_{\bk,j}^U &=&c_2(\sigma^2)^{1/2}(n\nu_{\bk*}+1/\tau_{n,p})^{-1/2}\exp\{ -\tau_{n, p}/(|\widetilde{\beta}_{\bk,j}|+\widetilde{\epsilon}_n)^2 \},
\ee
and  $\widetilde{\epsilon}_n \asymp (n\nu_{\bk*}/\tau_{n, p})^{-1/4}$, with $\widetilde{\beta}_{\bk,j}^{*}\in[\widetilde{\beta}_{\bk,j}-\widetilde{\epsilon}_n,\widetilde{\beta}_{\bk,j}+\widetilde{\epsilon}_n]\setminus(-\widetilde\epsilon_n,\widetilde\epsilon_n)^c$  for some positive constants $c_1$ and $c_2$.
\end{lemma}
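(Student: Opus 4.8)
The plan is to collapse the $|\bk|$-dimensional Gaussian integral in \eqref{eq:Q_k} into a product of one-dimensional integrals and then to bound each factor by a Laplace-type argument. First I would exploit the eigenvalue bounds on $X_\bk^\T X_\bk/n$: since all its nonzero eigenvalues lie in $[\nu_{\bk*},\nu_\bk^*]$ and $|\bk|\le q_n<n$, the precision matrix $\widetilde\Sigma_\bk^{-1}=X_\bk^\T X_\bk+(1/\tau_{n,p})\mr{I}_\bk$ is squeezed, in the positive semidefinite order, between $(n\nu_{\bk*}+1/\tau_{n,p})\mr{I}_\bk$ and $(n\nu_\bk^*+1/\tau_{n,p})\mr{I}_\bk$. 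Replacing the quadratic form $(\beta_\bk-\widetilde\beta_\bk)^\T\widetilde\Sigma_\bk^{-1}(\beta_\bk-\widetilde\beta_\bk)$ by $a\|\beta_\bk-\widetilde\beta_\bk\|_2^2$ — with $a=n\nu_\bk^*+1/\tau_{n,p}$ to lower-bound the integrand and $a=n\nu_{\bk*}+1/\tau_{n,p}$ to upper-bound it — and using that the factor $\exp\{-\sum_j\tau_{n,p}/\beta_{\bk,j}^2\}$ already separates over coordinates, turns $Q_\bk$ into a product over $j$ of the one-dimensional integral $g_a(\mu)=\int_{\mathbb R}\exp\{-a(t-\mu)^2/(2\sigma^2)-\tau_{n,p}/t^2\}\,dt$ evaluated at $\mu=\widetilde\beta_{\bk,j}$. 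It then suffices to show $g_{\,n\nu_\bk^*+1/\tau_{n,p}}(\widetilde\beta_{\bk,j})\ge Q_{\bk,j}^L$ and $g_{\,n\nu_{\bk*}+1/\tau_{n,p}}(\widetilde\beta_{\bk,j})\le Q_{\bk,j}^U$, and finally to take the product over $j=1,\dots,|\bk|$.

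For the upper bound I would fix $\widetilde\epsilon_n\asymp(n\nu_{\bk*}/\tau_{n,p})^{-1/4}$ and split $g_a(\mu)$ at $|t-\mu|=\widetilde\epsilon_n$. On the central piece, $|t|\le|\mu|+\widetilde\epsilon_n$ gives $\exp\{-\tau_{n,p}/t^2\}\le\exp\{-\tau_{n,p}/(|\mu|+\widetilde\epsilon_n)^2\}$, and bounding the Gaussian factor by its integral over all of $\mathbb R$ yields a contribution of the form $c\,\sigma\,a^{-1/2}\exp\{-\tau_{n,p}/(|\mu|+\widetilde\epsilon_n)^2\}$, which matches $Q_{\bk,j}^U$; on the tail I would use $\exp\{-\tau_{n,p}/t^2\}\le1$ with the standard Gaussian tail estimate, producing a remainder of order $(a\widetilde\epsilon_n)^{-1}\exp\{-a\widetilde\epsilon_n^2/(2\sigma^2)\}$. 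For the lower bound I would discard everything except a short interval: pick a nonzero point $\widetilde\beta_{\bk,j}^*$ within $\widetilde\epsilon_n$ of $\widetilde\beta_{\bk,j}$ but not too close to the origin (one may take $\widetilde\beta_{\bk,j}^*=\widetilde\beta_{\bk,j}$ whenever $|\widetilde\beta_{\bk,j}|$ itself exceeds order $\widetilde\epsilon_n$, and $|\widetilde\beta_{\bk,j}^*|$ of order $\widetilde\epsilon_n$ otherwise), restrict the integral to an interval lying on the side of $\widetilde\beta_{\bk,j}^*$ away from $0$ so that $\exp\{-\tau_{n,p}/t^2\}\ge\exp\{-\tau_{n,p}/\widetilde\beta_{\bk,j}^{*2}\}$ there, and bound the remaining integral from below; the interval has length at least of order $a^{-1/2}$, which recovers $Q_{\bk,j}^L$.

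The main obstacle, in both directions, is the regime where $\widetilde\beta_{\bk,j}$ is of order $\widetilde\epsilon_n$ or smaller, since then $\tau_{n,p}/\widetilde\beta_{\bk,j}^2$ cannot serve as a proxy for the integrand's decay. The resolution rests on the fact that the scale $\widetilde\epsilon_n\asymp(n\nu_{\bk*}/\tau_{n,p})^{-1/4}$ is precisely the one for which the Gaussian exponent $a\widetilde\epsilon_n^2$ and the inverse-moment exponent $\tau_{n,p}/\widetilde\epsilon_n^2$ are of the same order — namely $(n\nu_{\bk*}\tau_{n,p})^{1/2}$, up to the eigenvalue ratio $\nu_\bk^*/\nu_{\bk*}$ — so that, in the upper bound, choosing the proportionality constant in $\widetilde\epsilon_n$ large enough (relative to $\sigma^2$) makes the tail remainder $(a\widetilde\epsilon_n)^{-1}\exp\{-a\widetilde\epsilon_n^2/(2\sigma^2)\}$ negligible compared with $\sigma a^{-1/2}\exp\{-\tau_{n,p}/(|\mu|+\widetilde\epsilon_n)^2\}$, and, in the lower bound, the Gaussian cost of shifting $\widetilde\beta_{\bk,j}^*$ away from $\widetilde\beta_{\bk,j}$ and the penalty $\tau_{n,p}/\widetilde\beta_{\bk,j}^{*2}$ can be absorbed together into the single constant $c_1$. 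That this common rate $(n\nu_{\bk*}\tau_{n,p})^{1/2}$ diverges — which is what makes all these remainders genuinely lower order, uniformly over models with $|\bk|\le q_n$ as needed when the lemma feeds into the consistency proofs — is exactly Assumption~4 ($n\nu_{\bk*}\succ\tau_{n,p}$) combined with $\tau_{n,p}\succ\log p$.
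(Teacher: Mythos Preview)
Your proposal is correct and follows essentially the same route as the paper. Both arguments first use the eigenvalue squeeze on $\widetilde\Sigma_\bk^{-1}$ to factor $Q_\bk$ into a product of one-dimensional integrals $g_i(\widetilde\beta_{\bk,j})$, then handle each factor by a Laplace-type localisation at the scale $\widetilde\epsilon_n\asymp(n\nu_{\bk*}/\tau_{n,p})^{-1/4}$; the lower bound in both cases is obtained by restricting to a short interval near $\widetilde\beta_{\bk,j}$ that avoids the origin. The only real difference is in the upper bound decomposition: the paper splits the line into three pieces $(-\infty,0)$, $(0,\widetilde\beta_{\bk,j}+\widetilde\epsilon_n)$, $(\widetilde\beta_{\bk,j}+\widetilde\epsilon_n,\infty)$ and handles the negative half-line via the closed form $\int e^{-\mu/t^2-\zeta t^2}\,dt=(\pi/\zeta)^{1/2}e^{-2\sqrt{\mu\zeta}}$, whereas you split into the central window $|t-\widetilde\beta_{\bk,j}|\le\widetilde\epsilon_n$ and its complement and simply discard the factor $e^{-\tau_{n,p}/t^2}$ on the tails. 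Your two-piece argument is a bit more economical and avoids the explicit integral identity; the paper's three-piece split gives slightly sharper control of the negative half-line but ends up with the same bound. One small caution: your phrase ``absorbed together into the single constant $c_1$'' for the small-$|\widetilde\beta_{\bk,j}|$ lower bound is loose---the Gaussian cost of shifting by $\widetilde\epsilon_n$ is $\exp\{-c(n\nu_{\bk*}\tau_{n,p})^{1/2}\}$, not $O(1)$; what actually happens (and what the paper also relies on) is that this cost has the same order as $\tau_{n,p}/\widetilde\beta_{\bk,j}^{*2}$ at $|\widetilde\beta_{\bk,j}^*|\asymp\widetilde\epsilon_n$, so it is absorbed into the \emph{exponent} of $Q_{\bk,j}^L$ by the freedom in choosing $\widetilde\beta_{\bk,j}^*$, not into the prefactor.
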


\begin{proof}
Recall $\widetilde{\Sigma}_{\bk} = (X_{\bk }^TX_{\bk}+1/\tau_{n,p} \mr I_{\bk})^{-1}$. From \eqref{eq:nukstar}, all eigenvalues of $(\widetilde{\Sigma}_{\bk})^{-1}$ are bounded between $n \nu_{\bk*}+1/\tau_{n,p}$ and $n\nu_{\bk }^*+1/\tau_{n,p}$, which implies for all $x \in \mathbb{R}^{|\bk|}$, $(n \nu_{\bk*}+1/\tau_{n,p}) x^T x \le x^T (\widetilde{\Sigma}_{\bk})^{-1} x \le (n\nu_{\bk }^*+1/\tau_{n,p}) x^T x$. Let $T_{1n} = \{(n \nu_{\bk}^* + 1/\tau_{n,p})/\sigma^2\}^{1/2}$ and $T_{2n} = \{(n \nu_{\bk*} + 1/\tau_{n,p})/\sigma^2\}^{1/2}$. Substituting the above inequality in the expression for $Q_{\bk}$, we have 
\begin{align}\label{eq:Qk_basicbd}
\prod_{j = 1}^{|\bk| } g_1(\widetilde{\beta}_{\bk, j}) \le Q_{\bk} \le \prod_{j = 1}^{|\bk| } g_2(\widetilde{\beta}_{\bk, j}),
\end{align}
where 
\begin{align}\label{eq:gi}
g_i(\widetilde{\beta}_{\bk, j}) = \int_{-\infty}^{\infty} \exp \{ -T_{in}^2 (\beta_{\bk, j} - \widetilde{\beta}_{\bk, j} )^2/2 - \tau_{n, p}/\beta_{\bk, j}^2 \} d\beta_{\bk, j}, 
\end{align}
for $ i = 1, 2.$
We establish the lower bound first by showing that $g_1(\widetilde{\beta}_{\bk, j}) \ge Q_{\bk, j}^L$ for all $j = 1, \ldots, |\bk|$. Recall $\widetilde{\epsilon}_n \asymp (n \nu_{\bf k*}/\tau_{n, p})^{-1/4}$ from the statement of the Lemma. We have 
\begin{eqnarray*}
g_1(\widetilde{\beta}_{\bk, j}) & \ge & \int_{[\widetilde{\beta}_{\bk,j}-\widetilde{\epsilon}_n,\widetilde{\beta}_{\bk,j}+\widetilde{\epsilon}_n]\setminus(-\widetilde\epsilon_n,\widetilde\epsilon_n)^c} \exp\{ -T_{1n}^2 (\beta_{\bk, j}-\widetilde{\beta}_{\bk, j})^2/2- \tau_{n, p}/\beta_{\bk,j}^2 \} d\beta_{\bk,j} \\
& \geq& \exp\{- \tau_{n, p}/\widetilde\beta_{\bk,j}^{*2} \} \int_{[\widetilde{\beta}_{\bk,j}-\widetilde{\epsilon}_n,\widetilde{\beta}_{\bk,j}+\widetilde{\epsilon}_n]\setminus(-\widetilde\epsilon_n,\widetilde\epsilon_n)^c}\exp\{ -T_{1n}^2 (\beta_{\bk, j}-\widetilde{\beta}_{\bk, j})^2/2 \} d\beta_{\bk,j},
\end{eqnarray*}
for some $\widetilde\beta_{\bk,j}^*\in[\widetilde{\beta}_{\bk,j}-\widetilde{\epsilon}_n,\widetilde{\beta}_{\bk,j}+\widetilde{\epsilon}_n]\setminus(-\widetilde\epsilon_n,\widetilde\epsilon_n)^c$. Then, the integral in the last line of the above display is equivalent to 
\begin{eqnarray*}
\int_{[-\widetilde{\epsilon}_n,\widetilde{\epsilon}_n]\setminus(-\widetilde\beta_{\bk,j}-\widetilde\epsilon_n,-\widetilde\beta_{\bk,j}+\widetilde\epsilon_n)^c} e^{-T_{1n}^2 t^2/2} d t  
\geq c_1T_{1n}^{-1} \int_0^{T_{1n} \widetilde{\epsilon}_n} e^{-z^2/2} dz \geq c_2 T_{1n}^{-1},
\end{eqnarray*}
where $c_1$ and $c_2$ are some positive constants and the last inequality in the above display follows since $T_{1n} \widetilde{\epsilon}_n \geq 1 $ for large $n$. Substituting back in the previous display, $g_1(\widetilde \beta_{\bf k, j}) \ge c_1 T_{1n}^{-1} \exp\{ -\tau_{n, p}/ \widetilde \beta_{\bf k, j}^{*2}\}$ for some constant $c_1 > 0$, completing the proof of the lower bound.

We now establish the upper bound by showing that $g_2(\widetilde \beta_{\bk, j}) \le Q_{\bk, j}^U$ for all $j = 1, \ldots, |\bk|$. It is straightforward to see that $g_2$ is a symmetric function (i.e, $g_2(\widetilde{\beta}_{\bk, j}) = g_2(| \widetilde{\beta}_{\bk, j} |)$), so that it is enough to establish the bound for $\widetilde{\beta}_{\bk, j} > 0$; without loss of generality we assume that $\widetilde{\beta}_{\bk, j} > 0$. We have
\begin{eqnarray*}
&& \int_{-\infty}^{\infty} \exp \{ -T_{2n}^2 (\beta_{\bk, j} - \widetilde{\beta}_{\bk, j} )^2 /2 - \tau_{n, p}\beta_{\bk, j}^2 \} d\beta_{\bk, j} \\
&=& \int_{-\infty}^{0} \exp\{ -T_{2n}^2(\beta_{\bk,j}-\widetilde{\beta}_{\bk,j})^2/2- \tau_{n, p}/\beta_{\bk,j}^2\}d\beta_{\bk,j}\\
&+& \int_{0}^{\widetilde{\beta}_{\bk,j}+\widetilde\epsilon_n}\exp\{ -T_{2n}^2(\beta_{\bk,j}-\widetilde{\beta}_{\bk,j})^2/2- \tau_{n, p}/\beta_{\bk,j}^2\}d\beta_{\bk,j}\\
&+& \int_{\widetilde{\beta}_{\bk,j}+\widetilde\epsilon_n}^{\infty}\exp\{ -T_{2n}^2(\beta_{\bk,j}-\widetilde{\beta}_{\bk,j})^2/2- \tau_{n, p}/\beta_{\bk,j}^2\}d\beta_{\bk,j}.
\end{eqnarray*}
Define the first term of the above as $W_1$, the second as $W_2$, and the third term as $W_3$. First, we shall show that $W_1 \leq c T_{2n}^{-1} \exp\{ - T_{2n} (2\tau_{n, p})^{1/2} \}$ for  some positive constant $c$. By transforming the  variable $t=\beta_{\bk,j}-\widetilde\beta_{\bk,j}$,
\be
W_1 &=& \int_{-\infty}^{0} \exp\{ -T_{2n}^2 t^2 / 2 + T_{2n}^2 t \widetilde{\beta}_{\bk,j} -T_{2n}^2 \widetilde{\beta}_{\bk,j}^2/2- \tau_{n, p}/t^2\}d t\\
&\leq& \int_{-\infty}^{0} \exp\{ -T_{2n}^2 t^2 / 2 - \tau_{n, p}/t^2\} d t\\
&\leq& c_3 T_{2n}^{-1} \exp\{ -T_{2n} (2\tau_{n, p})^{1/2}\},
\ee
for some constant $c_3$, since $\int \exp\{- \mu / t^2 - \zeta t^2 \}dt = ( \pi/ \zeta )^{ -1/2 } \exp\{ -2 (\mu \zeta)^{ 1/2 } \} $ for $\mu>0$ and $\zeta>0$.

Second, by changing the variable $z = t- \widetilde{\epsilon}$,
\be
W_2 &=& \int_{-\widetilde\epsilon_n}^{ \widetilde{ \beta}_{\bk,j} } \exp \{ -T_{2n}^2 (z-\widetilde{\beta}_{\bk,j}+\widetilde\epsilon_n)^2/2- \tau_{n, p}/(z+\widetilde\epsilon_n)^2\} d z \\
&\leq& \exp\{ - \tau_{n, p} /( \widetilde{\beta}_{\bk,j} + \widetilde\epsilon_n )^2 \} \int_{-\infty}^{\infty} \exp \{ -T_{2n}^2 (z-\widetilde{\beta}_{\bk,j}+\widetilde\epsilon_n)^2 / 2 \} \\
&\leq& c_4T_{2n}^{-1} \exp\{ - \tau_{n, p} /( \widetilde{\beta}_{\bk,j} + \widetilde{\epsilon}_n )^2 \} ,
\ee
for some positive constant $c_4$.

Third, by changing the variable $z = t- \widetilde{ \beta}_{\bk,j}$, there exists some positive constant $c$ such that 
\be
W_3 &=& \int_{\widetilde{\epsilon}_n}^{\infty} \exp\{ -T_{2n}^2 z^2 /2 -\tau_{n, p} / ( z + \widetilde{\beta}_{\bk,j} )^2 \} d z \\
&\leq& \exp \{ -T_{2n}^2\widetilde{\epsilon}_n^2/4 \} \int_{-\infty}^{\infty} \exp \{ -T_{2n}^2 z^2 / 4 \} d z \\
&\leq& c_5T_{2n}^{-1} \exp \{ -c_6 T_{ 2n } \tau_{n, p}^{1/2} \},
\ee
for some constants $c_5$and $c_6$, since $\widetilde\epsilon_n \asymp (n\nu_{\bk*}/\tau_{n, p})^{-1/4}$.
Then, 
\begin{eqnarray*}
g_2( \widetilde\beta_{\bk,j} ) \leq c_3 T_{2n}^{-1} \exp \{ -T_{2n} (2\tau_{n, p})^{1/2}\} + c_4 T_{2n}^{-1} \exp \{ - \tau_{n, p}/(\widetilde{\beta}_{\bk,j}+\widetilde{\epsilon}_n)^2 \} \\
+c_5 T_{2n}^{-1} \exp \{ - c_6 T_{2n} \tau_{n, p}^{1/2} \}.
\end{eqnarray*}
  Since $\widetilde\epsilon_n \asymp (n\nu_{\bk*}/\tau_{n,p})^{-1/4} $, when $\widetilde{\beta}_{\bk,j} < \widetilde\epsilon_n$, $\tau_{n,p}/(\widetilde{\beta}_{\bk,j}+\widetilde{\epsilon}_n)^2< \tau_{n,p}/(4\widetilde{\epsilon}_n^2) \asymp T_{2n}\tau_{n,p}^{1/2}$, and when $\widetilde{\beta}_{\bk,j} \geq \widetilde\epsilon_n$, $\tau_{n,p}/(\widetilde{\beta}_{\bk,j}+\widetilde{\epsilon}_n)^2 \leq\tau_{n,p}/(4\widetilde{\beta}_{\bk,j}^2) < T_{2n}\tau_{n,p}^{1/2}$. In overall, the right-hand side of the above display would be dominated by the second term, which { shows} that  $g_2( \widetilde\beta_{\bk,j} ) \leq  c T_{2n}^{-1}\exp\{ -\tau_{n,p}/(\widetilde\beta_{\bk,j}+\widetilde\epsilon_n)^2  \}$ for some constant $c$. When $\widetilde\beta_{\bk,j}<0$, we can show the same result by following exactly the same steps explained above.
\end{proof} 
  
\noindent We now present some auxiliary results that are used to prove Theorems 1 and 2. We make use of the following simple union bound multiple times: for non-negative random variables $V_1, \ldots, V_m$ and $a > 0$, 
\begin{align}\label{eq:union_bd}
P(\sum_{l=1}^m V_l > a) \le \sum_{l=1}^m P(V_l > a/m) \le m \max_{1 \le l \le m} P(V_l > a/m).
\end{align}


We define some notations that are used in the subsequent proofs.  Let $\bt$ denote the true data generating model, and let $\beta_\bt^0$ denote the true regression coefficient corresponding to $\bt$.
Let ${\bf c_t} = \bt \setminus \bk$, ${\bf c_k} = \bk \setminus \bt$,  and ${\bf u} = \bk \cup \bt$. Also, we define the cardinality of a model $\bk$ as $k$ and in the same spirit, denote $c_k = | { \bf c_k } |$, $c_t = | { \bf c_t } |$, and $t = | \bt |$. 
$\{ x \}_j$ denotes the {\it j}-th element of the vector $x$, and $diag\{A\}_j$ refers to the {\it j}-th diagonal element in the square matrix $A$. We denote $\chi^2_m(\lambda)$ a non-central chi-square distribution with the degrees of freedom $m$ and  non-centrality parameter $\lambda$; a central chi-square distribution is simply denoted by $\chi^2_m$.  

An important property that is used in the subsequent proofs concerns the distribution of the marginal ridge estimator. Let $\widetilde\beta_{\bk} = (X_\bk^\T X+1/\tau_{n,p}I_\bk)^{-1}X_\bk^\T  y$ and $\widetilde\beta_{\bk,j}=\{\widetilde\beta_{\bk}\}_j$. Then, 
\begin{eqnarray}\label{eq:marg_beta}
  \widetilde\beta_{\bk,j} \sim N(\beta_{\bk,j}^*,\sigma_{\bk,j}^{2*}), 
 \end{eqnarray}
where $\beta_{\bk,j}^*=\{(X_\bk^\T X+1/\tau_{n,p}I_\bk)^{-1}X_\bk^\T X_\bt\beta_\bt^*\}_j$ and $\sigma_{\bk,j}^{2*} = \sigma^2diag\{(X_\bk^\T X_\bk + 1/\tau_{n,p}I_\bk)^{-1}\}_j$. It is also evident that $(\widetilde\beta_{\bk,j}-\beta_{\bk,j}^*)^2/\sigma_{\bk,j}^{2*}\sim \chi^2_1$.\\

\noindent A set of technical results follow that are used in the proof of the main results.  Define 
\begin{align}\label{eq:H12}
H_{1n} = \sum_{\substack{\bk : \bt \subsetneq \bk,\\ |\bk| \le q_n } } \frac{ m_{\bk}(y) \pi(\bk) }{ m_{\bt}(y) \pi(\bt) } = \sum_{\substack{\bk : \bt \subsetneq \bk,\\ |\bk| \le q_n } } \frac{\pi(\bk \mid y)}{\pi(\bt \mid y) }, \quad 
H_{2n} = \sum_{\substack{\bk : \bt \nsubseteq \bk,\\ |\bk| \le q_n } } \frac{ m_{\bk}(y) \pi(\bk) }{ m_{\bt}(y) \pi(\bt) } = \sum_{\substack{\bk : \bt \nsubseteq \bk,\\ |\bk| \le q_n } } \frac{\pi(\bk \mid y)}{\pi(\bt \mid y) }.
\end{align}
\begin{lemma}\label{Lemma:over}
Fix $\epsilon > 0$. Let $\Gamma_d = \{ \bk : |\bk| \le q_n, \bt \subsetneq \bk, \, {|\bk| - |\bt|} = d \}$ for $d =1, \ldots, q_n - |\bt|$. Suppose there exist constants $c, \delta > 0$ such that $\max_{\bk \in \Gamma_d} P\big\{\pi(\bk \mid y)/\pi(\bt \mid y) > \epsilon p^{-d}/q_n \big\} \le cp^{-d(1+\delta)}$ for $d = 1, \ldots, q_n - |\bt|$. 
Then, $H_{1n}$ converges to zero in probability as $n$ tends to $\infty$, where $H_{1n}$ is as in \eqref{eq:H12}.
\end{lemma}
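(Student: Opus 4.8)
The plan is to fix an arbitrary $\epsilon > 0$ and show $P(H_{1n} > \epsilon) \to 0$ by a union bound over all strict overmodels of $\bt$, organized according to the excess size $d = |\bk| - |\bt|$.

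First I would record two elementary facts: the cardinality bound $|\Gamma_d| = \binom{p - |\bt|}{d} \le p^d$, and the observation that $d$ ranges over at most $q_n - |\bt| \le q_n$ values. Then I would isolate the ``good'' event
\[
\mathcal{E} \;=\; \bigcap_{d=1}^{q_n - |\bt|}\ \bigcap_{\bk \in \Gamma_d} \Big\{ \tfrac{\pi(\bk \mid y)}{\pi(\bt \mid y)} \le \tfrac{\epsilon\, p^{-d}}{q_n} \Big\},
\]
on which, summing the per-model bounds,
\[
H_{1n} \;=\; \sum_{d=1}^{q_n - |\bt|} \sum_{\bk \in \Gamma_d} \frac{\pi(\bk \mid y)}{\pi(\bt \mid y)} \;\le\; \sum_{d=1}^{q_n - |\bt|} |\Gamma_d|\, \frac{\epsilon\, p^{-d}}{q_n} \;\le\; \sum_{d=1}^{q_n - |\bt|} \frac{\epsilon}{q_n} \;\le\; \epsilon .
\]
Hence $\{H_{1n} > \epsilon\} \subseteq \mathcal{E}^c$, and it suffices to bound $P(\mathcal{E}^c)$.

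Next I would apply the union bound \eqref{eq:union_bd} together with the hypothesis of the lemma:
\[
P(\mathcal{E}^c) \;\le\; \sum_{d=1}^{q_n - |\bt|} \sum_{\bk \in \Gamma_d} P\Big( \tfrac{\pi(\bk \mid y)}{\pi(\bt \mid y)} > \tfrac{\epsilon\, p^{-d}}{q_n} \Big) \;\le\; \sum_{d=1}^{q_n - |\bt|} |\Gamma_d|\, c\, p^{-d(1+\delta)} \;\le\; c \sum_{d=1}^{q_n - |\bt|} p^{-d\delta} \;\le\; \frac{c\, p^{-\delta}}{1 - p^{-\delta}} .
\]
Since $p = p_n \to \infty$ as $n \to \infty$ in the ultrahigh-dimensional regime under study, the right-hand side vanishes, so $P(H_{1n} > \epsilon) \le P(\mathcal{E}^c) \to 0$. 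As $\epsilon$ was arbitrary, $H_{1n}$ converges to $0$ in probability.

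The argument involves no real obstacle; the only delicate point is the bookkeeping of constants in the choice of the per-model threshold $\epsilon p^{-d}/q_n$. It must be small enough that $\sum_d |\Gamma_d| \cdot (\text{threshold}) \le \epsilon$ --- which forces the $p^{-d}$ factor to cancel the $|\Gamma_d| \le p^d$ combinatorial count --- yet the extra decay $p^{-d\delta}$ furnished by the hypothesis, after multiplication by $|\Gamma_d|$, still yields a series $\sum_{d \ge 1} p^{-d\delta}$ that is summable and $o(1)$. Crucially this bound is uniform in $q_n$, which is what makes the conclusion robust to $q_n$ (and hence the number of overmodels) growing with $n$.
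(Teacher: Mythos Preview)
Your proof is correct and follows essentially the same route as the paper's: partition the overmodels by excess size $d$, use $|\Gamma_d|\le p^d$, apply a union bound to reduce to the per-model tail hypothesis, and sum $\sum_d p^{-d\delta}$. The only cosmetic differences are that the paper applies the union bound \eqref{eq:union_bd} directly in two stages rather than introducing the good event $\mathcal{E}$, and that the paper bounds $\sum_{d\ge 1} c\,p^{-d\delta}$ crudely by $c\,q_n p^{-\delta}$ whereas you sum the geometric series to get $c\,p^{-\delta}/(1-p^{-\delta})$; your bound is slightly sharper and, as you note, uniform in $q_n$.
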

\begin{proof} Clearly, $|\Gamma_d| = {p - |\bt| \choose d}$. Using \eqref{eq:union_bd}, we bound
\begin{eqnarray*}
P\Big\{\sum_{\bk:\bt\subsetneq\bk}\frac{\pi(\bk\mid y)}{\pi(\bt\mid y)}>\epsilon\Big\} &=& P\Big\{\sum_{d=1}^{q_n-|\bt|}\sum\limits_{\bk\in \Gamma_{d}}\frac{\pi(\bk\mid y)}{\pi(\bt\mid y)}>\epsilon \Big\} \\
& \le & \sum_{d = 1}^{q_n - |\bt| } P \Big\{\sum_{\bk \in \Gamma_d} \frac{\pi(\bk\mid y)}{\pi(\bt\mid y)} > \epsilon/q_n \Big\} \\
&\leq&\sum\limits_{d=1}^{q_n-|\bt|}\binom{p-|\bt|}{d} \, \smash{\displaystyle\max_{ \bk\in \Gamma_{d}}} P\Big\{\frac{\pi(\bk\mid y)}{\pi(\bt\mid y)}>\epsilon p^{-d}/q_n\Big\} 
\le \sum\limits_{d=1}^{q_n-|\bt|}c p^{-d\delta}.
\end{eqnarray*}
Finally, $\sum_{d=1}^{q_n - |\bt|} c p^{-d \delta} \le c q_n p^{-\delta} \to 0$ as $n \to \infty$.
\end{proof}


\begin{lemma}
\label{Lemma:under}
Fix $\epsilon > 0$ and let $t = |\bt|$. Define $\Gamma_{k,c_k,c_t}=\{ \bk : |\bk|\leq q_n, |\bk| = k,\: |\bk \backslash \bt| = c_k, \: \: |\bt \backslash \bk|=c_t\}$ for $k = 0, \ldots, q_n; \ c_k = 0, \ldots, k; \ c_t = 1, \ldots, t$.
Suppose 
$$
\max_{ \bk \in \Gamma_{k,c_k,c_t}} P \Big[ \frac{\pi(\bk\mid y)}{\pi(\bt\mid y)}>\epsilon n^{-3}p^{-k}n^{-c_k}t^{-t} \Big] \le cp^{-k(1+\delta)},
$$
with some postive constants $c$ and $\delta$.
Then, $H_{2n}$ converges to zero as $n$ tends to $\infty$, where $H_{2n}$ is as in \eqref{eq:H12}.
\end{lemma}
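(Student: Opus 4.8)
The plan is to follow the template of the proof of Lemma~\ref{Lemma:over}, the new feature being that underfitted models are naturally indexed by \emph{two} counts: the number $c_k$ of spurious covariates in $\bk$ and the number $c_t\ge 1$ of true covariates that $\bk$ omits, linked to the model size through $k=c_k+(t-c_t)$. First I would split the sum defining $H_{2n}$ in \eqref{eq:H12} over the strata $\Gamma_{k,c_k,c_t}$ and bound the number of nonempty strata crudely by $n^3$, using $k\le q_n<n$, $c_k\le q_n<n$, and $c_t\le t<n$. The union bound \eqref{eq:union_bd} then reduces the statement to
\[
\sum_{k,c_k,c_t} P\Big\{\sum_{\bk\in\Gamma_{k,c_k,c_t}}\frac{\pi(\bk\mid y)}{\pi(\bt\mid y)}>\frac{\epsilon}{n^3}\Big\}\longrightarrow 0 .
\]

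Second, inside a fixed stratum I would apply \eqref{eq:union_bd} a second time, this time over its $|\Gamma_{k,c_k,c_t}|=\binom{p-t}{c_k}\binom{t}{c_t}$ members. The key bookkeeping is that $\binom{p-t}{c_k}\le p^{c_k}\le p^{k}$ and $\binom{t}{c_t}\le 2^{t}\le t^{t}$, so the per-model threshold $\epsilon n^{-3}/|\Gamma_{k,c_k,c_t}|$ is at least $\epsilon\,n^{-3}p^{-k}n^{-c_k}t^{-t}$, which is exactly the threshold appearing in the hypothesis. Applying the hypothesized inequality without modification, each stratum contributes at most
\[
|\Gamma_{k,c_k,c_t}|\,c\,p^{-k(1+\delta)}\ \le\ c\,t^{t}\,p^{\,c_k-k(1+\delta)}\ =\ c\,t^{t}\,p^{-(t-c_t)}\,p^{-k\delta},
\]
where I substituted $c_k-k=-(k-c_k)=-(t-c_t)$.

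Third, I would sum these bounds over all strata. Contributions retaining at least one true covariate ($c_t<t$) carry the extra geometric factor $p^{-(t-c_t)}$ and force $k\ge t-c_t\ge 1$, so they are dominated by $c\,t^{t}\sum_{j\ge 1}p^{-j(1+\delta)}$; contributions with $c_t=t$ and $k\ge 1$ are dominated by $c\,t^{t}\sum_{k\ge 1}p^{-k\delta}$. Both vanish since $t$ is fixed and $p\to\infty$ under the standing assumptions on $p$ and $q_n$. The sole remaining term is the empty model ($k=0$, $c_t=t$), for which the polynomial bound yields no geometric gain; this one I would control separately via the exponential-in-$n$ domination of the true model's marginal likelihood over that of the null model, a consequence of the presence of true signal. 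Combining, $H_{2n}\to 0$ in probability.

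The main obstacle is organizational rather than analytic: one must keep the three indices consistent through $k=c_k+(t-c_t)$, choose binomial estimates so that the thresholds generated by the two nested union bounds dominate the one assumed in the hypothesis (so that inequality can be invoked verbatim), and then verify that the resulting triple sum collapses to a convergent geometric series in $p^{-1}$ with vanishing total. The one place where the polynomial per-stratum bound is not self-sufficient is the handful of low-dimensional strata with $c_t=t$ --- in particular the empty model, which enjoys neither a $p^{-(t-c_t)}$ nor a $p^{-k\delta}$ gain --- and there one should fall back on the likelihood-level estimate rather than on the hypothesized polynomial bound.
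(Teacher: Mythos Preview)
Your approach is essentially the paper's: stratify $\{\bk:\bt\nsubseteq\bk\}$ by $(k,c_k,c_t)$, apply the union bound \eqref{eq:union_bd} twice (first over the at most $n^3$ strata, then over the models in a stratum), invoke the hypothesis, and sum. Two minor bookkeeping differences are worth noting. First, your cardinality $|\Gamma_{k,c_k,c_t}|=\binom{p-t}{c_k}\binom{t}{c_t}$ is the correct one; the paper writes $\binom{p}{k}\binom{k}{c_k}\binom{t}{c_t}$ as an equality, which is false but serves as an upper bound and is all that is needed. Second, because you bound $|\Gamma_{k,c_k,c_t}|$ by $p^{c_k}t^t$ rather than the looser $p^{k}n^{c_k}t^{t}$ the paper uses, your final sum collapses to a geometric series in $p^{-1}$ and converges from $p\to\infty$ alone; the paper's residual sum $\sum_{k,c_k,c_t} n^{c_k}t^{t}p^{-k\delta}$ tacitly requires $n/p^{\delta}\to 0$.

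Your concern about $k=0$ is legitimate: at $k=0$ the hypothesis reads $P[\cdot]\le c\,p^{0}=c$, which is vacuous, so the lemma as stated cannot control the null model from its hypothesis alone. The paper simply begins the outer sum at $k=1$, silently dropping the empty model; your suggestion to fall back on a direct likelihood comparison for that single term is the natural patch, and is exactly what happens when the lemma is applied inside Theorem~\ref{theo1}. Other than this edge case, your argument is correct and matches the paper's.
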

\begin{proof} 
Clearly, $|\Gamma_{k,c_k,c_t}|=\binom{p}{k}\binom{k}{c_k}\binom{t}{c_t}$. 
\begin{eqnarray*}
P\Big\{\sum\limits_{\bk:\bt\nsubseteq\bk}\frac{\pi(\bk\mid y)}{\pi(\bt\mid y)}>\epsilon\Big\} &\leq&P\Big\{\sum_{k=1}^{q_n}\sum_{c_k=0}^{k}\sum_{c_t=1}^{t} \sum_{\bk\in \Gamma_{k,c_k,c_t}}\frac{\pi(\bk\mid y)}{\pi(\bt\mid y)}>\epsilon\Big\}\\
&\leq& P\Big\{\sum_{k=1}^{q_n}\sum_{c_k=0}^{k}\sum_{c_t=1}^{t} \sum_{\bk\in \Gamma_{k,c_k,c_t}}\frac{\pi(\bk \mid y)}{\pi(\bt\mid y)}>\epsilon\Big\}\\
&\leq& \sum_{k=1}^{q_n}\sum_{c_k=0}^{k}\sum_{c_t=1}^{t} P\Big\{\sum_{\bk\in \Gamma_{k,c_k,c_t}}\frac{\pi(\bk\mid y)}{\pi(\bt\mid y)}>\epsilon n^{-3}\Big\}\\
&\leq&\sum_{k=1}^{q_n}\sum_{c_k=0}^{k}\sum_{c_t=1}^{t} p^kn^{c_k}t^t\max\limits_{\bk\in \Gamma_{k,c_k,c_t}}P\Big\{\frac{\pi(\bk\mid y)}{\pi(\bt\mid y)}>\epsilon n^{-3}p^{-k}n^{-c_k}t^{-t}\Big\}\\
&\leq& \sum_{k=1}^{q_n}\sum_{c_k=0}^{k}\sum_{c_t=1}^{t} p^kn^{c_k}t^t p^{-k(1+\delta)} \rightarrow 0,
\end{eqnarray*}
as $n \rightarrow \infty$.
\end{proof}



\begin{lemma}
\label{lemma:cher}
Suppose $W$ follows a non-central chi-square distribution with the degree of freedom $m_n$ that is a positive integer and the  non-central parameter $\lambda_n\geq0$, i.e, $W \sim \chi^2_{m_n}(\lambda_n)$. Also, consider $w_n$ and $t_n$ such that $w_n\to0$ and $t_n\to\infty$ as $n$ tends to $\infty$. Also, assume that $m_n \prec t_n$. Then,
\begin{equation}\label{eq:che}
P(W\leq \lambda_n w_n) \leq c_1 \lambda_n^{-1}\exp\{  - \lambda_n (1-w_n)^2 \},
\end{equation}
And 
\begin{equation}
P(W > \lambda_n +  t_n) \leq c_2\left(\frac{t_n}{2m_n}\right)^{m_n/2}\exp\left\{ m_n/2-t_n/2  \right\} +c_3\lambda_n^{1/2}t_n^{-1}\exp\left\{  -\frac{t_n^2}{32\lambda_n} \right\} ,\label{eq:che2}
\end{equation}
where $c_1$, $c_2$, and $c_3$ are some positive constants.
\end{lemma}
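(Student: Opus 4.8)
The two displayed bounds are Chernoff-type tail estimates for a non-central chi-square, and the plan is to derive each from the Laplace transform of $W$ together with an elementary decomposition.

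For the lower-tail bound \eqref{eq:che}, one natural route is to start from the moment generating function $E[e^{-sW}]=(1+2s)^{-m_n/2}\exp\{-\lambda_n s/(1+2s)\}$, finite for every $s>0$, and apply Markov's inequality: $P(W\leq\lambda_n w_n)\leq e^{s\lambda_n w_n}E[e^{-sW}]$ for all $s>0$, then optimize over $s$. A more transparent route, which I would actually pursue, is geometric: write $W=\|Z+\mu\|^2$ with $Z\sim N_{m_n}(0,\mr I_{m_n})$ and $\|\mu\|^2=\lambda_n$; the triangle inequality $\|Z+\mu\|\geq\|\mu\|-\|Z\|$ gives $\{W\leq\lambda_n w_n\}\subseteq\{\|Z\|\geq\sqrt{\lambda_n}(1-\sqrt{w_n})\}$, hence $P(W\leq\lambda_n w_n)\leq P(\chi^2_{m_n}\geq\lambda_n(1-\sqrt{w_n})^2)$. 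I would then finish with the standard central chi-square tail bound $P(\chi^2_{m_n}\geq x)\lesssim x^{m_n/2-1}e^{-x/2}$, using $m_n\prec t_n$ and $w_n\to0$ to absorb the resulting polynomial-in-$\lambda_n$ prefactor.

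For the upper-tail bound \eqref{eq:che2}, I would use the decomposition $W\stackrel{d}{=}V+(Z+\sqrt{\lambda_n})^2$ with $V\sim\chi^2_{m_n-1}$ independent of $Z\sim N(0,1)$, expand $(Z+\sqrt{\lambda_n})^2=Z^2+2\sqrt{\lambda_n}Z+\lambda_n$, so $W-\lambda_n=(V+Z^2)+2\sqrt{\lambda_n}Z$, and split the excess. For an appropriate split point,
\[
\{W>\lambda_n+t_n\}\subseteq\{V+Z^2>t_n/2\}\cup\{2\sqrt{\lambda_n}Z>t_n/2\},
\]
so a union bound reduces matters to two classical estimates: the Chernoff bound for a central chi-square, $P(\chi^2_k\geq x)\leq(x/k)^{k/2}e^{(k-x)/2}$ for $x\geq k$, applied to $V+Z^2\sim\chi^2_{m_n}$, which (after adjusting constants and the split point) produces the first term $c_2(t_n/2m_n)^{m_n/2}e^{m_n/2-t_n/2}$; and the Gaussian tail $P(Z>u)\leq(2\pi)^{-1/2}u^{-1}e^{-u^2/2}$ with $u$ of order $t_n/\sqrt{\lambda_n}$, which produces the second term $c_3\lambda_n^{1/2}t_n^{-1}e^{-t_n^2/(32\lambda_n)}$. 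The assumption $m_n\prec t_n$ is precisely what forces the exponent $m_n/2-t_n/2$ to $-\infty$ while keeping its prefactor subdominant.

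The main obstacle is the lower tail \eqref{eq:che}: the naive Chernoff optimum for $P(W\leq\lambda_n w_n)$ yields an exponent of order $-\tfrac12\lambda_n(1-\sqrt{w_n})^2$ rather than the $-\lambda_n(1-w_n)^2$ written in \eqref{eq:che}, so matching the stated form requires a sharper splitting of the event (exploiting that $m_n$ is small, so the polynomial-in-$\lambda_n$ prefactor can be paid for) or a more careful choice of the Chernoff parameter, and one should also observe that the estimate is only meaningful when $\lambda_n$ is bounded away from $0$ and that $c_1$ is to be chosen uniformly in $n$. By comparison, the work in \eqref{eq:che2} is only the bookkeeping of constants (the $32$, the factors of $2$, and the replacement of $m_n-1$ by $m_n$) and the verification that both error terms are summable along the sequence under $m_n\prec t_n$.
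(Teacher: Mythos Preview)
Your upper-tail argument for \eqref{eq:che2} is essentially the paper's: the authors use the representation $W=\sum_{i=1}^{m_n}\{Z_i+(\lambda_n/m_n)^{1/2}\}^2$ and expand to get $W-\lambda_n=\sum Z_i^2+2(\lambda_n/m_n)^{1/2}\sum Z_i$, then split at $t_n/2$ just as you do; your $(m_n-1)+1$ decomposition is an equivalent parametrization that lands on the same central $\chi^2_{m_n}$ tail plus a single Gaussian tail. No real difference there.

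The lower tail is where you diverge, and your route has a gap. Your triangle-inequality reduction gives $P(W\le\lambda_n w_n)\le P(\chi^2_{m_n}\ge\lambda_n(1-\sqrt{w_n})^2)$, and the standard bound for the right-hand side carries a prefactor of order $\lambda_n^{m_n/2-1}$. You propose to absorb this ``using $m_n\prec t_n$,'' but $t_n$ does not appear anywhere in \eqref{eq:che}; there is nothing in the hypotheses relating $m_n$ to $\lambda_n$, so that polynomial cannot be reduced to the stated $c_1\lambda_n^{-1}$. The paper sidesteps this entirely with a simpler algebraic move: from
\[
\sum_{i=1}^{m_n} Z_i^2 + 2(\lambda_n/m_n)^{1/2}\sum_{i=1}^{m_n} Z_i + \lambda_n \le \lambda_n w_n
\]
it drops the nonnegative term $\sum Z_i^2$ to obtain $m_n^{-1/2}\sum_i Z_i \le -\lambda_n^{1/2}(1-w_n)/2$, which is a tail event for a \emph{single} standard normal. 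The Mills-ratio bound $P(Z>a)\le(2\pi)^{-1/2}a^{-1}e^{-a^2/2}$ then yields the $(1-w_n)^2$ exponent directly (not $(1-\sqrt{w_n})^2$) with an $m_n$-free prefactor of order $\lambda_n^{-1/2}$, which is comfortably dominated by $c_1\lambda_n^{-1}$. This is the ``sharper splitting'' you were searching for; it is not a refinement of the Chernoff parameter but a different inequality altogether.
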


\begin{proof}
 $W$ can be expressed as $W = \sum_{i=1}^{m_n}\{Z_i+(\lambda_n/m_n)^{1/2}\}^2$, where $Z_i \stackrel{i.i.d}{\sim} N(0,1)$ for $i = 1, \ldots, m$. Then, by the fact that $P(Z>a) \leq (2\pi)^{-1/2}a^{-1}\exp\{-a^2/2\}$ for any $a>0$,  we can show that there exist some positive constants $c_1$ such that  
\be
P\left(  W \leq \lambda_n w_n \right)&=& P\big\{ \sum_{i=1}^{m_n} Z_i^2 + 2(\lambda_n/m_n)^{1/2}\sum_{i=1}^{m_n} Z_i + \lambda_n \leq \lambda_n w_n \big\}\\
&\leq& P\big\{  m_n^{-1/2}\sum_{i=1}^{m_n} Z_i  \leq - \lambda_n^{1/2} (1-w_n)/2 \big\}\\
&=& P\big\{ |Z_1|  \geq \lambda_n^{1/2} (1-w_n)/2 \big\}/2\\
&\leq& c_1\lambda_n^{-1}\exp\{ -  \lambda_n (1-w_n)^2/2\},
\ee
since $Z_1$ follows a standard normal distribution.

 Also,  by using Chernoffs's bound and the fact that $P(Z>a) \leq (2\pi)^{-1/2}a^{-1}\exp\{-a^2/2\}$ for any $a>0$, one can show that 
 \begin{eqnarray*}
&& P(W > \lambda_n +  t_n) =P\left\{\sum_{i=1}^{m_n} Z_i^2 + 2 (\lambda_n/m_n)^{1/2}\sum_{i=1}^{m_n} Z_i > t_n\right\}\\
 &\leq& P\left(\sum_{i=1}^{m_n} Z_i^2 > t_n/2\right) + P\left\{m_n^{-1/2}\sum_{i=1}^{m_n} Z_i > \lambda_n^{-1/2}t_n/4\right\}\\
 &\leq& c_2\left(\frac{t_n}{2m_n}\right)^{m_n/2} \exp\left\{m_n/2-t_n/2\right\}+c_3\lambda_n^{1/2}t_n^{-1}\exp\left\{  -\frac{t_n^2}{32\lambda_n} \right\},
 \end{eqnarray*}
 where $c_2$ and $c_3$ are some positive constants.
\end{proof}

\begin{lemma}\label{lemma:Qratio}
Consider $Q_\bk$ defined in  \eqref{eq:Q_k} for an arbitrary model $\bk$.  Fix any $\delta>0$. For any $\bk$ with $ \bt \subsetneq \bk$,
 \begin{eqnarray}\label{eq:Q1}
 P\left[ Q_\bk/Q_\bt>\exp\left\{ -|\bk\setminus \bt| \tau_{n,p}^{2/3}(n\nu_{\bk*})^{1/3} +|\bt|\tau_{n,p}^{1-\delta/8}(n\nu_{\bk*})^{\delta/8}\right\} \right]\leq p^{-|\bk\setminus\bt|(1+\delta)},
  \end{eqnarray}
 and for $\bk$ such that $ \bt \nsubseteq \bk$, 
  \begin{eqnarray}\label{eq:Q2}
 P\left[ Q_\bk/Q_\bt>\exp\left\{ \norm \beta_\bt^0\norm^2_2 n\nu_{{\bf u}*} /\{2\log(\tau_{n,p}/\log p )\} \right\} \right]\leq p^{-|\bk|(1+\delta)}.
  \end{eqnarray}
  \begin{proof}
  
By Lemma \ref{LemmaQ}, it is sufficient to show that 
\begin{eqnarray}
&& P\left[\prod_{j\in\bt}(Q_{\bk,j}^U/Q_{\bt,j}^L) > \exp\{ |\bt|\tau_{n,p}^{1-\delta/8}(n\nu_{\bk*})^{\delta/8} \} \right]+P\left[ \prod_{j \in \bk\setminus\bt}Q_{\bk,j}^U >\exp\{-|\bk\setminus\bt|\tau_{n,p}^{2/3}(n\nu_{\bk*})^{1/3}\}\right]\nonumber\\
&\leq& p^{-|\bk\setminus\bt|(1+\delta)}.\label{eq:Q11}
\end{eqnarray}

We first shall show that the first term in the left-hand side of \eqref{eq:Q11} is bounded above by $\exp\{-cn\nu_{\bk*}\}$ for some constant $c$. 
\begin{eqnarray}
&&P\left[ \prod_{j\in\bt}\frac{Q_{\bk,j}^U}{Q_{\bt,j}^L} > \exp\left\{ |\bt|\tau_{n,p}^{1-\delta/8}(n\nu_{\bk*})^{\delta/8} \right\} \right] \leq \sum_{j\in\bt}P\left[ \frac{Q_{\bk,j}^U}{Q_{\bt,j}^L} > \exp\left\{ \tau_{n,p}^{1-\delta/8}(n\nu_{\bk*})^{\delta/8} \right\} \right]\nonumber\\
&=& \sum_{j\in\bt}P\left[c'\left( \frac{n\nu_{\bk*}+1/\tau_{n,p}}{n\nu_{\bt}^*+1/\tau_{n,p}} \right)^{-1/2}\exp\left\{ -\tau_{n,p}\left( 1/(|\widetilde\beta_{\bk,j}|+\widetilde\epsilon_n)^2 - 1/\widetilde\beta_{\bk,j}^{*2} \right)  \right\} > \exp\left\{ \tau_{n,p}^{1-\delta/8}(n\nu_{\bk*})^{\delta/8} \right\}  \right]\nonumber\\
&\leq& \sum_{j\in\bt}P[|\widetilde\beta_{\bk,j} - \beta_{\bk,j}^*|> \epsilon'] + \sum_{j\in\bt}P[|\widetilde\beta_{\bt,j} - \beta_{\bt,j}^*|> \epsilon'] , \label{eq:betas}
\end{eqnarray}
for some small enough $\epsilon'>0$ and some positive constant $c'$ and  $\widetilde\beta_{\bk,j}^*\in [\widetilde\beta_{\bk,j}-\widetilde\epsilon_n, \widetilde\beta_{\bk,j}+\widetilde\epsilon_n]\setminus(-\widetilde\epsilon_n,\widetilde\epsilon_n)^c$ as defined in Lemma \ref{LemmaQ}, and $\widetilde\beta_{\bk,j}$ and $\beta_{\bk,j}^*$ defined in \eqref{eq:marg_beta}.  The last inequality in the above display asymptotically holds, since 
\[
\tau_{n,p}^{1-\delta/8}(n\nu_{\bk*})^{\delta/8} \succ \tau_{n,p}/(|\beta_{\bk,j}^*|-\epsilon'-\widetilde\epsilon_n)^2,
\] 
for any $\delta>0$. 

 Since $(\widetilde\beta_{\bk,j} - \beta_{\bk,j}^*)^2/\sigma_{\bk,j}^{*2}\sim \chi^2_1$ and $\sigma_{\bk,j}^{*2}\geq(n\nu_{\bk*}+1/\tau_{n,p})^{-1}$, by using Lemma \ref{lemma:cher}, one can show that the first term in \eqref{eq:betas} bounded above by $\exp\{-c_1\epsilon'^2n\nu_{\bk*}\}$ for some constant $c_1$. Similarly,  the second term in \eqref{eq:betas} is bounded above by $\exp\{-c_2\epsilon'^2n\}$ for  some constant $c_2$, since {\it Assumption 5} states that $X_\bt^\T X_\bt/n$ is asymptically isotropic. Therefore, \eqref{eq:betas} is asymptotically bounded by $ p^{-q_n(1+\delta)}$ by {\it Assumption 3}.

Next, we shall show that the second term in the left-hand side of \eqref{eq:Q11} is bounded above by  $\exp\{-c\tau_{n,p}^{1/3}(n\nu_{\bk*})^{2/3}\}$ for some positive constant $c$. Since when $j\in\bk\setminus\bt$ and $\bt\subsetneq\bk$, $\beta^*_{\bk,j}\asymp n^{-1}$, 
\begin{eqnarray*}
&&P\left[ \prod_{j \in \bk\setminus\bt} Q_{\bk,j}^U >\exp\{-|\bk\setminus\bt|\tau_{n,p}^{2/3}(n\nu_{\bk*})^{1/3}\}\right] \\
&\leq& \sum_{j\in\bk\setminus\bt}P\left[ c' (n\nu_{\bk,j}+1/\tau_{n,p})^{-1/2}\exp\left\{-\frac{\tau_{n,p}}{(|\widetilde\beta_{\bk,j}|+\widetilde\epsilon_n)^2}\right\}>\exp\{-\tau_{n,p}^{2/3}(n\nu_{\bk*})^{1/3}\} \right]\\
&=&\sum_{j\in\bk\setminus\bt}P\left[\widetilde\beta_{\bk,j}^2>\left\{\tau_{n,p}^{1/2}\left((n\nu_{\bk*})^{1/3}\tau_{n.p}^{2/3}- \log(n\nu_{\bk*}+1/\tau_{n,p})/2+\log c'\right)^{-1/2} -\widetilde\epsilon_n \right\}^2 \right] \\
&\leq& \sum_{j\in\bk\setminus\bt} P\left[(\widetilde\beta_{\bk,j}-\beta^*_{\bk,j})^2/\sigma_{\bk,j}^* > c''\left(\frac{\tau_{n,p}}{n\nu_{\bk*}}\right)^{1/3}(n\nu_{\bk*}+1/\tau_{n,p})/\sigma^2  \right],
\end{eqnarray*}
for some positive contant $c'$ and $c''$. Since $(\widetilde\beta_{\bk,j}-\beta^*_{\bk,j})^2/\sigma_{\bk,j}^*\sim \chi_1^2$, by Lemma \ref{lemma:cher} the last quantity in the above display can be bounded by $\exp\{ -c \tau_{n,p}^{1/3}(n\nu_{\bk*})^{2/3} \}$ for some contant $c$. By {\it Assumption 3}, $\exp\{ -c \tau_{n,p}^{1/3}(n\nu_{\bk*})^{2/3} \}\prec p^{-q_n(1+\delta)}\leq p^{|\bk\setminus\bt|(1+\delta)|}$, which proves the statement \eqref{eq:Q11}. 

We now shall show that the equation \eqref{eq:Q2} holds for any $\delta>0$. The left-hand side of \eqref{eq:Q2} can be bounded above by  
\begin{eqnarray}
&&P\left[   \prod_{j \in {\bf k}}Q_{\bk,j}^U\Big(\prod_{j \in {\bt}} Q_{\bt,j}^L \Big)^{-1}  > \exp\left\{   \norm\beta_\bt^0\norm^2_2n\nu_{\bf u*}/\{2\log(\tau_{n,p}/\log p )\} \right\}\right]\nonumber\\
&\leq& \sum_{j\in\bk}P\left[ c(n\nu_{\bk*}+1/\tau_{n,p})^{-1/2}\exp\left\{ -\tau_{n,p}/(|\widetilde\beta_{\bk,j}|+\widetilde\epsilon_n)^2 \right\} > \exp\left\{ \norm\beta_\bt^0\norm^2_2n\nu_{\bf u *}/\{4|\bk|\log(\tau_{n,p}/\log p )\}  \right\} \right]\nonumber\\
&&+ \sum_{j\in\bt}P\left[ c'(n\nu_{\bt*}+1/\tau_{n,p})^{1/2}\exp\left\{ \tau_{n,p}/(\widetilde\beta_{\bt,j}^{*2}) \right\} > \exp\left\{ \norm\beta_\bt^0\norm^2_2n\nu_{\bf u *}/\{4|\bt|\log(\tau_{n,p}/\log p )\}  \right\} \right]\nonumber\\
&\leq& \sum_{j\in\bk} P\left[ -\frac{\tau_{n,p}}{(|\widetilde\beta_{\bk,j}|+\widetilde\epsilon_n)^2} >  \norm\beta_\bt^0\norm^2_2n\nu_{\bf u *}/\{4|\bk|\log(\tau_{n,p}/\log p )\} +\log c\right]\label{eq:QQ1}\\
&& + \sum_{j \in \bt} P\left[ |\widetilde\beta_{\bt,j}^*| < c'' \norm\beta_\bt^0\norm^{-1}_2(n\nu_{\bf u *})^{-1/2}\{4|\bt|\log(\tau_{n,p}/\log p )\}^{1/2}\tau_{n,p}^{1/2}  \right],\label{eq:QQ2}
\end{eqnarray}
where $c$, $c'$, and $c''$ are some positive constants.

\eqref{eq:QQ1} is always zero since the left-hand side in the probability is always negative and the right-hand side in the probability operator is always positive. So, we focus on \eqref{eq:QQ2} as below:

Since $\widetilde\beta_{\bt,j} -\widetilde\epsilon_n\leq \widetilde\beta_{\bt,j}^*\leq \widetilde\beta_{\bt,j} + \widetilde\epsilon_n$ implies $|\widetilde\beta_{\bt,j}| -\widetilde\epsilon_n \leq |\widetilde\beta_{\bt,j}^*|\leq |\widetilde\beta_{\bt,j}| + \widetilde\epsilon_n$, \eqref{eq:QQ2} can be bounded above by 
\begin{eqnarray*}
&&\sum_{j \in \bt} P\left[ |\widetilde\beta_{\bt,j}^*| < c'' \norm\beta_\bt^0\norm^{-1}_2(n\nu_{\bf u *})^{-1/2}\{4|\bt|\log(\tau_{n,p}/\log p )\}^{1/2}\tau_{n,p}^{1/2}  \right] \\
&\leq& \sum_{j \in \bt} P\left[ |\widetilde\beta_{\bt,j}|< c'' \norm\beta_\bt^0\norm^{-1}_2(n\nu_{\bf u *})^{-1/2}\{4|\bt|\log(\tau_{n,p}/\log p )\}^{1/2}\tau_{n,p}^{1/2}+\widetilde\epsilon_n  \right],
\end{eqnarray*}
where $\beta^*_{\bt,j}$ is defined in \eqref{eq:marg_beta}. Since $\widetilde\beta_{\bt,j}^2/\sigma_{\bt,j}^2\sim \chi^2_1(\beta_{\bt,j}^{*2}/\sigma_{\bt,j}^2)$ and $\sigma_{\bt,j}^2\asymp \sigma^2/n$ for $j\in\bt$, by using Lemma \ref{lemma:cher} and {\it Assumption 5}, one can show that the probability is bounded by $\exp\{-c n\}$ for some constant $c$, and it is evident that $\exp\{-c n\}\prec p^{-|\bk|(1+\delta)}$, which completes the proof of the Lemma.
  \end{proof}
\end{lemma}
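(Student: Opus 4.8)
The plan is to deduce both inequalities from the coordinatewise sandwich estimate of Lemma~\ref{LemmaQ}, combined with the exact Gaussian law \eqref{eq:marg_beta} of the marginal ridge estimators and the chi-square deviation inequalities of Lemma~\ref{lemma:cher}. The common first move is to write
\[
\frac{Q_\bk}{Q_\bt}\;\le\;\frac{\prod_{j\in\bk}Q_{\bk,j}^U}{\prod_{j\in\bt}Q_{\bt,j}^L}
\;=\;\Big(\prod_{j\in\bt}\frac{Q_{\bk,j}^U}{Q_{\bt,j}^L}\Big)\Big(\prod_{j\in\bk\setminus\bt}Q_{\bk,j}^U\Big),
\]
and then, via the union bound \eqref{eq:union_bd}, to reduce the probability in question to a sum over the individual coordinates in the two groups. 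The split is essential because the coordinates $j\in\bt$ are precisely those on which $\widetilde\beta_{\bk,j}$ concentrates near a fixed nonzero true coefficient, whereas for $j\in\bk\setminus\bt$ (relevant when $\bt\subsetneq\bk$) the mean $\beta_{\bk,j}^*$ of $\widetilde\beta_{\bk,j}$ is of order $n^{-1}$.

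For \eqref{eq:Q1} with $\bt\subsetneq\bk$ I would proceed as follows. On a coordinate $j\in\bt$, $Q_{\bk,j}^U/Q_{\bt,j}^L$ equals, up to a polynomial eigenvalue ratio, a quantity of the form $\exp\{\tau_{n,p}/\widetilde\beta_{\bt,j}^{*2}-\tau_{n,p}/(|\widetilde\beta_{\bk,j}|+\widetilde\epsilon_n)^2\}$; since $\widetilde\beta_{\bk,j}$ and $\widetilde\beta_{\bt,j}$ are Gaussian with variances that are $o(1)$ and means bounded away from zero (Assumption~5 for the $\bt$-coordinates, and $\bt\subseteq\bk$ forcing $\beta_{\bk,j}^*$ near the fixed $\beta_{\bt,j}^0$), the $\chi^2_1$ bounds of Lemma~\ref{lemma:cher} make the events $\{|\widetilde\beta_{\bk,j}-\beta_{\bk,j}^*|>\epsilon'\}$ and $\{|\widetilde\beta_{\bt,j}-\beta_{\bt,j}^*|>\epsilon'\}$ exponentially small (of probability $\exp\{-c\,n\nu_{\bk*}\}$ and $\exp\{-cn\}$ respectively), and off them the factor is at most $\exp\{C\tau_{n,p}\}\le\exp\{\tau_{n,p}^{1-\delta/8}(n\nu_{\bk*})^{\delta/8}\}$ because $n\nu_{\bk*}\succ\tau_{n,p}$ (Assumption~4). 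On a coordinate $j\in\bk\setminus\bt$, the target $Q_{\bk,j}^U\le\exp\{-\tau_{n,p}^{2/3}(n\nu_{\bk*})^{1/3}\}$ fails only when $(|\widetilde\beta_{\bk,j}|+\widetilde\epsilon_n)^2\gtrsim\tau_{n,p}^{1/3}(n\nu_{\bk*})^{-1/3}$; since $\widetilde\epsilon_n^2\asymp\tau_{n,p}^{1/2}(n\nu_{\bk*})^{-1/2}$ is of strictly smaller order and $\beta_{\bk,j}^*=O(n^{-1})$ with $\sigma_{\bk,j}^{2*}=O((n\nu_{\bk*})^{-1})$, this is contained in the central $\chi^2_1$ deviation $\{(\widetilde\beta_{\bk,j}-\beta_{\bk,j}^*)^2/\sigma_{\bk,j}^{2*}\gtrsim\tau_{n,p}^{1/3}(n\nu_{\bk*})^{2/3}\}$, of probability at most $\exp\{-c\,\tau_{n,p}^{1/3}(n\nu_{\bk*})^{2/3}\}$ by Lemma~\ref{lemma:cher}. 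Summing the at most $q_n$ coordinatewise exceptional probabilities and invoking Assumption~3 (which gives $q_n(1+\delta)\log p\prec\tau_{n,p}\prec\tau_{n,p}^{1/3}(n\nu_{\bk*})^{2/3}$), both aggregate probabilities are $o(p^{-|\bk\setminus\bt|(1+\delta)})$, establishing \eqref{eq:Q1}.

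For \eqref{eq:Q2} with $\bt\nsubseteq\bk$, only a crude upper bound on $Q_\bk/Q_\bt$ is needed, since this factor is dominated by the residual-sum-of-squares term in the full Bayes factor. I would discard the sub-unit exponential in each $Q_{\bk,j}^U$ and bound each $Q_{\bt,j}^L$ from below by $\exp\{-C\tau_{n,p}\}$ on the high-probability event that $|\widetilde\beta_{\bt,j}|$ stays near the fixed nonzero $\beta_{\bt,j}^0$ (Assumption~5 together with the noncentral $\chi^2_1$ lower-tail bound \eqref{eq:che}), which yields $Q_\bk/Q_\bt\le\exp\{O(\tau_{n,p})\}$ with probability $1-\exp\{-cn\}$; as $|\bt|$ is fixed and $\tau_{n,p}\log(\tau_{n,p}/\log p)=o(n\nu_{{\bf u}*})$ under Assumptions~2--4, this sits below the stated threshold, while $\exp\{-cn\}=o(p^{-|\bk|(1+\delta)})$ by Assumption~1. (Following the numerator/denominator split instead, the numerator-coordinate events are empty --- a negative exponent never exceeds a positive target --- so only the events $\{|\widetilde\beta_{\bt,j}|\text{ too small}\}$ for $j\in\bt$ contribute, again controlled by \eqref{eq:che}.)

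The step I expect to be the main obstacle is the extra-coordinate estimate in \eqref{eq:Q1}: one has to reconcile the three competing scales $\widetilde\epsilon_n\asymp\tau_{n,p}^{1/4}(n\nu_{\bk*})^{-1/4}$, the admissible width $\asymp\tau_{n,p}^{1/6}(n\nu_{\bk*})^{-1/6}$ of $|\widetilde\beta_{\bk,j}|$, and the resulting $\chi^2_1$ level $\asymp\tau_{n,p}^{1/3}(n\nu_{\bk*})^{2/3}$, and then verify uniformly over all $\binom{p}{k}$ models in a shell that $\exp\{-c\tau_{n,p}^{1/3}(n\nu_{\bk*})^{2/3}\}$ genuinely dominates $p^{-q_n(1+\delta)}$ --- which is exactly where Assumptions~3 and~4 are used at full strength. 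Everything else is bookkeeping with the bounds of Lemma~\ref{LemmaQ} and the elementary union bound \eqref{eq:union_bd}.
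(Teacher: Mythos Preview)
Your proposal is correct and follows essentially the same approach as the paper: both start from the coordinatewise sandwich of Lemma~\ref{LemmaQ}, split the product over $j\in\bt$ versus $j\in\bk\setminus\bt$, and control each piece by the Gaussian law \eqref{eq:marg_beta} together with the $\chi^2$ deviation bounds of Lemma~\ref{lemma:cher}, with Assumptions~3--4 converting the exponential rates into $p^{-q_n(1+\delta)}$. For \eqref{eq:Q2}, your parenthetical route (numerator events are vacuous since a negative exponent cannot exceed a positive threshold; only the $j\in\bt$ denominator events $\{|\widetilde\beta_{\bt,j}|$ small$\}$ survive) is exactly what the paper does; your primary ``crude'' variant would also work but needs the extra check that $\tau_{n,p}\log(\tau_{n,p}/\log p)=o(n\nu_{{\bf u}*})$, which is not an immediate consequence of Assumption~4 as written.
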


\section{Proofs of Main Results}
\begin{proof}[{\bf Proof of Theorem~\ref{theo1}.}]
We have $\pi(\bf t \mid \bf y) = m_{\bt}(y) \pi(\bt)/\{ \sum_{\bk : |\bk| \le q_n} m_{\bk}(y) \pi(\bk) \}$, since $\pi(\bk) = 0$ for any $\bk$ with $|\bk| > q_n$. Recall $H_{1n}$ and $H_{2n}$ from \eqref{eq:H12} and note that $\pi(\bt \mid y) = (1 + H_{1n} + H_{2n})^{-1}$. Hence to show that $\pi(\bt \mid y)$ converges to one in probability, it is sufficient to establish that $H_{1n}$  and $H_{2n}$ both converge in probability to zero as $n$ tends to $\infty$. We shall prove the Theorem by showing: 

\noindent For any $\delta\in(0,8/3)$ and any model $\bk\in\Gamma_d$ (defined in Lemma  \ref{Lemma:over}),
\begin{eqnarray}
P\left[\frac{\pi(\bk\mid y)}{\pi(\bt\mid y)}>\epsilon p^{-d}q_n^{-1}\right] \leq p^{-d(1+\delta)},\label{eq:Hn1}
\end{eqnarray}
and for any model  $\bk\in\Gamma_{k,c_k,c_t}$ (defined in Lemma \ref{Lemma:under}),
\begin{eqnarray}
P \left[ \frac{\pi(\bk\mid y)}{\pi(\bt\mid y)}>\epsilon n^{-3}p^{-k}n^{-c_k}t^{-t} \right] \le cp^{-k(1+\delta)}.\label{eq:Hn2}
\end{eqnarray}
  Then, it is evident that $H_{1n}$ and $H_{2n}$ both converge to zero in probability by Lemma \ref{Lemma:over} and \ref{Lemma:under} respectively.
 
 First, we shall show that \eqref{eq:Hn1} holds. For any $\bk\in \Gamma_d$, recall that
\begin{eqnarray*}
&{}&P\Big[\frac{\pi(\bk\mid y)}{\pi(\bt\mid y)}>\epsilon p^{-d}q_n^{-1}\Big] \leq P\Big[ C_{n,p}^{-d}\frac{Q_\bk}{Q_\bt}\exp\Big\{-\frac{1}{2\sigma^2}\big(\widetilde{R}_\bk-\widetilde{R}_\bt\big) \Big\} >\epsilon p^{-d}/q_n \Big].
\end{eqnarray*}

Since $\widetilde{R}_\bk>R_\bk^*$ and $\widetilde{R}_\bt<R_\bt^*+\eta$,
where $\eta = d_1\widehat{\beta}_\bt^T\widehat{\beta}_\bt/\tau_{n,p}$ for some constant $d_1$ and $\widehat{\beta}_\bt$ is the ordinary least square estimator of $\beta_\bt$ in the true model \bt, by using \eqref{eq:union_bd}, the term in the last display can be bounded above by 
\begin{eqnarray}
&& P\Big[ C_{n,p}^{-d}\frac{Q_\bk}{Q_\bt}\exp\big\{-\big({R}^*_\bk-{R}^*_\bt\big)/(2\sigma^2) +\eta/(2\sigma^2)\big\} >\epsilon p^{-d}/q_n \Big]\nonumber \\
&\leq& 
P\Big[ C_{n,p}^{-d}\frac{Q_\bk}{Q_\bt}p^{d(1+\delta)+\delta} >\epsilon p^{-d}/q_n \Big]\label{eq:Qq1}\\ 
&{}&+ P \left[ R_\bt^*-R_\bk^*>2\sigma^2 d (1+\delta) \log p \right]\label{eq:Qq2} \\
&{}&+ P\left[ \exp\{\eta/(2\sigma^2)\} >\epsilon p^{\delta}\right]\label{eq:Qq3}.
\end{eqnarray}
By using Lemma \ref{lemma:Qratio}, \eqref{eq:Qq1} is less than $p^{-d(1+\delta)}$ when $\delta<8/3$. Since $(R_\bt^*-R_\bk^*)/\sigma^2 \sim \chi^2_{|\bk\setminus\bt|}$, by using  \eqref{eq:che} in Lemma \ref{lemma:cher}, 
 we can show that \eqref{eq:Qq2} is bounded by $cp^{-d(1+\delta)}$ for some positive constant $c$.
Since $\tau_{n,p} n\nu_{\bf t *}\eta/d_1 \sigma^2 \leq \widehat{\beta}_\bt^TX_\bt^TX_\bt\widehat{\beta}_\bt /\sigma^2\sim \chi_{|\bt|}^2 \left( {\beta}_\bt^{0T}X_\bt^TX_\bt{\beta}_\bt^0 \right)$, by using the inequality \eqref{eq:che2} in  Lemma \ref{lemma:cher}, \eqref{eq:Qq3} can be expressed as 
\begin{eqnarray}\nonumber
P\left[ \exp\left\{\eta/2\sigma^2 \right\} >\epsilon p^{\delta}\right] &\leq& P\left[ \tau_{n,p} n\nu_{\bf t*}\eta /d_1\sigma^2 > 2\tau_{n,p} n\nu_{\bf t *} (\log\epsilon + \delta\log p )/d_1 \right]\\ \nonumber
&\leq&P\big[ \widehat{\beta}_\bt^TX_\bt^TX_\bt\widehat{\beta}_\bt/\sigma^2> 2\tau_{n,p} n\nu_{\bf t *} (\log\epsilon + \delta\log p )/d_1 \big]\\ 
&\leq& (n\delta\log p )^{|\bt|/2}\exp\{-c_1\delta(n\log p)\} + n^{-1/2}(\delta\log p)^{-1}\exp\{ -c_2(n\log p)^2/n \}\nonumber \\
&\leq&c_3 p^{-|\bk|(1+\delta)}, \label{eq:eta} 
\end{eqnarray} 
for some positive constant $c_1$, $c_2$, and $c_3$, which proves that \eqref{eq:Hn1} holds.

Next, we consider \eqref{eq:Hn2}. Recall that $\bf u = \bk \cup \bt$. By using (\ref{eq:union_bd}), it can be shown that
\begin{eqnarray}
&&P\Big[\frac{\pi(\bk\mid y)}{\pi(\bt\mid y)}>\epsilon n^{-3}p^{-|\bk|}n^{-|\bk\setminus\bt|}|\bt|^{-|\bt|}\Big]\nonumber\\
&\leq& P\Big[C_{n,p}^{-(|\bk|-|\bt|)} \frac{Q_\bk}{Q_\bt}\exp\left\{-(\widetilde{R}_\bk-\widetilde{R}_\bt)/(2\sigma^2) \right\} > \epsilon n^{-3}p^{-|\bk|}n^{-|\bk\setminus\bt|}|\bt|^{-|\bt|} \Big]\nonumber\\
&\leq& P\Big[ C_{n,p}^{-|\bk|-|\bt|)}\frac{Q_\bk}{Q_\bt} \exp\left\{ -(R_\bk^* -  R_{\bf u}^*  )/(2\sigma^2) \right\} > n^{-3-|\bk\setminus\bt|}|\bt|^{-|\bt|}p^{ - |\bk|(2+\delta) +\delta} \Big]\nonumber\\
&{}& +P\big[ \exp\big\{\big( R_\bt^* - R_{\bf u}^* \big)/(2\sigma^2) \big\} \geq \epsilon p^{ |\bk|(1+\delta) } \big] + P\big[ \exp\big( \eta/(2\sigma^2) \big)>p^{ \delta } \big]\nonumber\\
&\leq&P\big[ \exp\big\{\big( R_\bt^* - R_{\bf u}^* \big)/2\sigma^2 \big\} > \epsilon p^{ |\bk|(1+\delta) } \big]\label{eq:QQQ1}\\
&{}& + P\big[ \exp\big( \eta/2\sigma^2 \big)>p^{ \delta } \big]\label{eq:QQQ2}\\
&{}& + P\left[ R^*_\bk-R_{\bf u}^* < 2\sigma^2\norm\beta_\bt^0\norm^2_2n\nu_{\bf u*}/\log(\tau_{n,p}/\log p )\right]\label{eq:QQQ3}\\
&{}& + P\left[   Q_\bk/Q_\bt  > \exp\left\{   \norm\beta_\bt^0\norm^2_2n\nu_{\bf u*}/\{2\log(\tau_{n,p}/\log p )\} \right\}\right].\label{eq:QQQ4}
\end{eqnarray}
 Since $( R_\bt^*-R_{\bf u}^*)/\sigma^2$ follows a $\chi^2_{|{\bf u}\setminus\bt|}$ distribution, \eqref{eq:QQQ1} is also bounded by $c_1p^{-|\bk|(1+\delta)}$ with some constant $c_1$. By following the same steps regarding \eqref{eq:eta}, one can show that \eqref{eq:QQQ2} is bounded by $c_2p^{-|\bk|(1+\delta)}$ for some constant $c_2$.
We note that  $(R_\bk^* - R_{\bf u}^*) / \sigma^2 \sim \chi_{|{\bf u}\setminus\bk|}^2(\lambda_n)$ with $\lambda_n=\beta_{\bt}^{0T} X_\bt^T(P_{\bf u}-{P}_\bk)X_\bt \beta^0_{\bt}$,  where  ${ P}_\bk$ is the projection matrix of $X_\bk$.  As discussed in \citet{Narisetty2014}, $\lambda_n\geq n\nu_{\bf u* } \norm \beta^0_\bt \norm^2_2$. Hence, by using {Lemma \ref{lemma:cher}}, one can show that \eqref{eq:QQQ3} is bounded by $\exp\{ - c_3\norm \beta_\bt^0 \norm^2_2n\nu_{\bf u*}/\log(\tau_{n,p}/\log p ) \}$ for some constant $c_3$.  Lemma \ref{lemma:Qratio} states that \eqref{eq:QQQ4} is bounded by $p^{-|\bk|(1+\delta)}$. 
In summary, since $q_n \prec \tau_{n,p}/\log p$ by {\it Assumption 3}, there exists some positive  constant $c_4$ such that $P[\pi(\bk\mid y)/\pi(\bt\mid y)>\epsilon n^{-3}p^{-|\bk|}n^{-|\bk\setminus\bt|}|\bt|^{-|\bt|}]\leq c_4 p^{-|\bk|(1+\delta)}$.  
which completes the proof of Theorem \ref{theo1}.
\end{proof}

\begin{proof}[{\bf Proof of Corollary~\ref{cor1}}.]

 Recall the penalty term of a model {\bk},  $Q_\bk^*$, based on the piMoM priors is 
\[Q_\bk^* = \int \exp\big\{-(\beta_\bk-\widehat{\beta}_\bk)^T{\Sigma}_\bk^{*-1}(\beta_\bk-\widehat{\beta}_\bk)/(2\sigma^2) - \sum_{j=1}^{|\bk|} \tau_{n, p}/\beta_{\bk,j}^2  - r \sum_{j=1}^{|\bk|}\log(\beta_{\bk,j}^{2}) \big\} d\beta_\bk,
\] 
in \eqref{eq:Q_k2}. Since, for any $\epsilon>0$, $\exp\big[ -\sum_{j=1}^{|\bk|}\{ \epsilon \tau_{n,p}/\beta_{\bk,j}^2+r\log (\beta_{\bk,j}^2)\} \big]$ is bounded above with respect to $\beta_{\bk,j}$, $Q_\bk^* \leq C \int  \exp\{-(\beta_\bk-\widehat{\beta}_\bk)^T{\Sigma}_\bk^{*-1}(\beta_\bk-\widehat{\beta}_\bk)/(2\sigma^2) - \sum_{j=1}^{|\bk|} (1-\epsilon)\tau_{n, p}/\beta_{\bk,j}^2 \} d\beta_\bk$ for some constant $C$. Following the exactly same steps in {Lemma \ref{LemmaQ}}, $Q_\bk^* \leq C'(n\nu_\bk^*)^{-1/2}\prod_{j=1}^{|\bk|} \exp\{ -(1-\epsilon)\tau_{n, p}/(|\widehat{\beta}_{\bk,j}| + \widetilde\epsilon_n)^2 \} $ for some constant $C'>0$.

We shall show that  the model selection procedure based on piMoM priors as in \eqref{eq:mod_piMoM} assures consistency by proving that $Q_\bk^*$ and $Q_\bk$ are asymptotically equivalent. 
 
 Next, we shall show that $Q_\bk^*$ is bounded below by   $C(n\nu_\bk^*)^{-1/2}\prod_{j=1}^{|\bk|} \exp\{ -(1-\epsilon)\tau_{n, p}/\widehat{\beta}_{\bk,j}^{*2} \} $ for some constant $C>0$ and $\widehat{\beta}_{\bk,j}^{*} \in [ \widehat{\beta}_{\bk,j} -\widetilde\epsilon_n , \widehat{\beta}_{\bk,j} + \widetilde\epsilon_n ]$. Since $\exp\big\{ - \epsilon \tau_{n,p}/\beta_{\bk,j}^2+r\log (\beta_{\bk,j}^2)\big\}$ can be minimized in $ [ \widehat{\beta}_{\bk,j} -\widetilde\epsilon_n , \widehat{\beta}_{\bk,j} + \widetilde\epsilon_n ]$, by following the proof of {Lemma \ref{LemmaQ}},
 \begin{eqnarray*}
 &&\int_{-\infty}^{\infty} \exp\{ -n\nu_{\bk}^* (\beta - \widehat\beta_{\bk,j})^2/(2\sigma^2) -\tau_{n,p}/\beta^2 - r\log( \beta^2 ) \} d \beta \\
 &\geq& \int_{\widehat\beta_{\bk,j}-\widetilde\epsilon_n}^{\widehat\beta_{\bk,j}+\widetilde\epsilon_n}\exp\{ -n\nu_{\bk}^* (\beta - \widehat\beta_{\bk,j})^2/(2\sigma^2)  -  (1-\epsilon) \tau_{n,p}/\beta^2\}  \exp\{  - \epsilon \tau_{n,p}/\beta^2 - r\log( \beta^2 ) \} d \beta \\
 &\geq& C(n\nu_\bk^*)^{-1/2} \exp\left\{ -(1-\epsilon)\tau_{n, p}/\widehat{\beta}_{\bk,j}^{*2}  \right\},
\end{eqnarray*}
  where $C$ is some constant and $\widehat{\beta}_{\bk,j}^{*} \in [ \widehat{\beta}_{\bk,j} -\widetilde\epsilon_n , \widehat{\beta}_{\bk,j} + \widetilde\epsilon_n ]\setminus(-\widetilde\epsilon_n,\widetilde\epsilon_n)^c$. 
  
  Therefore, due to the asymptotic similarity between the ridge estimator and the least square estimator, the lower and upper bounds of $Q_\bk^*$ are asymptotically equivalent to those of $Q_\bk$ with the penalty parameter $(1-\epsilon)\tau_{n,p}$, which assures the strong consistency of the model selection based on the piMoM priors. 
\end{proof}

\begin{proof}[{\bf Proof of Theorem~\ref{theo2}.}]
Under a situation where $\sigma^2$ is unknown, it is clear that 
\begin{eqnarray*}
m_\bk(y) = \tau_{n,p}^{-\frac{|\bk|}{2}} \int (2\pi\sigma^2)^{-\frac{n+|\bk|}{2}} \int\exp\left\{ |\bk|\left(\frac{2}{\sigma^2}\right)^{1/2}  -\frac{(\beta_\bk-\widetilde\beta_\bk)^T \widetilde\Sigma_\bk^{-1}(\beta_\bk-\widetilde\beta_\bk)}{2\sigma^2} - \sum_{j=1}^{|\bk|} \frac{\tau_{n,p}}{\beta_{\bk,j}^2} \right\}\pi(\sigma^2)d \beta_\bk d\sigma^2,
\end{eqnarray*}
where $\pi(\sigma^2)$ is the prior for $\sigma^2$ (Inverse-gamma density with hyperparameters $a_0$ and $b_0$). 

First, we shall show that the ratio between marginal likelihoods of a model $\bk$ and the true model $\bt$ can be bounded as
\begin{eqnarray}
\frac{m_\bk(y)}{m_\bt(y)} \leq c^{\frac{|\bk|-|\bt|}{2}}\left(  \frac{ \widetilde R_\bk +2b_0  }{\widetilde R_\bt +2b_0} \right)^{-n/2-a_0}\exp\left\{ - \sum_{j=1}^{|\bk|} \frac{\tau_{n,p}}{(|\widetilde\beta_{\bk,j}| + \widetilde\epsilon_n)^2} +\sum_{j=1}^{|\bt|} \frac{\tau_{n,p}}{\widetilde\beta_{\bt,j}^{*2}} \right\}\frac{(n\nu_{\bk*}\tau_{n,p}+1)^{-|\bk|/2}}{(n\nu_{\bt}^*\tau_{n,p}+1)^{-|\bt|/2}},\nonumber\\
\label{eq:sigma}
\end{eqnarray}
where $\widetilde{\beta}_{\bt,j}^{*}\in[\widetilde{\beta}_{\bt,j}-\widetilde{\epsilon}_n,\widetilde{\beta}_{\bt,j}+\widetilde{\epsilon}_n]\setminus(-\widetilde\epsilon_n,\widetilde\epsilon_n)^c$ for $j\in 1,\dots,|\bt|$ and $c$ is some constant. Next, we shall show that $\{ (\widetilde R_\bk +2b_0 ) /(\widetilde R_\bt +2b_0) \}^{-n/2-a_0}\leq\exp\{ - (\widetilde R_\bk - \widetilde R_{\bt})/(2\sigma^2_0(1+u_n)) \}$, where $\sigma^2_0$ is the true regression variance that involves in the data-generating process, and $u_n$ is some random variable that is concentrated around a finite value with at least probability $1-\exp\{-cn\}$ for some constant $c$. Then, by following the same steps in the proof of Theorem \ref{theo1}, the proof of Corollary \ref{cor1} is completed.  

By Lemma \ref{LemmaQ}, the marginal likelihood of a model $\bk$ can be bounded by
\begin{eqnarray*}
m_\bk(y) &\leq& \{c_1(n\nu_{\bk*}\tau_{n,p}+1)\}^{-\frac{|\bk|}{2}}\int (\sigma^2)^{-\frac{n+2a_0}{2}-1}\exp\left\{ |\bk|\left( \frac{2}{\sigma^2} \right)^{1/2} - \sum_{j=1}^{|\bk|} \frac{\tau_{n,p}}{(|\widetilde\beta_{\bk,j}| + \widetilde\epsilon_n)^2}-\frac{\widetilde R_\bk + 2b_0}{2\sigma^2} \right\} d\sigma^2\\
&\leq&\{c_1(n\nu_{\bk*}\tau_{n,p}+1)\}^{-\frac{|\bk|}{2}}\exp\left\{ - \sum_{j=1}^{|\bk|} \frac{\tau_{n,p}}{(|\widetilde\beta_{\bk,j}| + \widetilde\epsilon_n)^2}\right\} (1+\exp\{ 2|\bk| \})\left( \widetilde R_\bk + 2b_0 \right)^{-\frac{n+2a_0}{2}},
\end{eqnarray*} 
for some constant $c_1$.

Also, by using Lemma \ref{LemmaQ}, one can show that
\begin{eqnarray*}
m_\bk(y) &\geq& \{c_2(n\nu_{\bk*}\tau_{n,p}+1)\}^{-\frac{|\bk|}{2}}\int (\sigma^2)^{-\frac{n+2a_0}{2}-1}\exp\left\{ |\bk|\left( \frac{2}{\sigma^2} \right)^{1/2} - \sum_{j=1}^{|\bk|} \frac{\tau_{n,p}}{\widetilde\beta_{\bk,j}^{*2}}-\frac{\widetilde R_\bk + 2b_0}{2\sigma^2} \right\} d\sigma^2\\
&\geq&\{c_2(n\nu_{\bk*}\tau_{n,p}+1)\}^{-\frac{|\bk|}{2}}\exp\left\{ - \sum_{j=1}^{|\bk|} \frac{\tau_{n,p}}{\widetilde\beta_{\bk,j}^{*2} }\right\} \left( \widetilde R_\bk + 2b_0 \right)^{-\frac{n+2a_0}{2}},
\end{eqnarray*} 
where $c_2$ is some constant and $\widetilde{\beta}_{\bk,j}^{*}\in[\widetilde{\beta}_{\bk,j}-\widetilde{\epsilon}_n,\widetilde{\beta}_{\bk,j}+\widetilde{\epsilon}_n]\setminus(-\widetilde\epsilon_n,\widetilde\epsilon_n)^c$ for $j\in 1,\dots,|\bk|$. These results shows that \eqref{eq:sigma} holds.

Next, we consider the asymptotic behavior of $\{ (\widetilde R_\bk +2b_0 ) /(\widetilde R_\bt +2b_0) \}^{-n/2-a_0}$ in \eqref{eq:sigma}.
Define $\rho_n$ as the follows:
\[
\rho_n = (\widetilde R_\bt+2b_0)/(n\sigma_0^2)-1.
\] 
Since $-\log(1-u)<u/(1-u)$ for $u\in\mathbb{R}$,
\begin{eqnarray*}
- \log\{(\widetilde R_\bk+2b_0)/(\widetilde R_\bt+2b_0)\}&=&-\log[ 1+(\widetilde R_\bk-\widetilde R_\bt)/\{n(1+\rho_n)\sigma_0^2\} ]\\
&\leq&(\widetilde R_\bt-\widetilde R_\bk)/\{n\sigma_0^2(1+u_n)\}, 
\end{eqnarray*}
where $u_n = \rho_n+(\widetilde R_\bk-\widetilde R_\bt)/(n\sigma_0^2)$. 

 Since $(R^*_\bk- R^*_{\bf u})/\sigma_0^2\sim \chi_{|{\bf u}\setminus\bk|}(\lambda_n)$ with  $\lambda_n=\beta_{\bt}^{0T}X_{\bt}^T(P_{\bf u}-P_\bk)X_{\bt}\beta_{\bt}^0/\sigma^2_0$,  by using Lemma \ref{lemma:cher} one can show that 
\begin{eqnarray*}
P\left( \left| u_n- \lambda_n/n \right|>\epsilon\right)&\leq& P\left( |\rho_n|>\epsilon/4 \right) +P\left\{ (R_\bt^*-R_{\bf u}^*)/(n\sigma_0^2)>\epsilon/4\right\} \\&&+ P\left\{ \left| (R^*_\bk- R^*_{\bf u})/(n\sigma_0^2)-\lambda_n/n\right|>\epsilon/4\right\}+P\left( \eta/2n\sigma_0^2 > \epsilon/4 \right)\\
&\leq& \exp\{-c'n\}+P\left\{ \left|(R^*_\bk- R^*_{\bf u})/(n\sigma_0^2)-\lambda_n/n\right|>\epsilon/4\right\}\\
&\leq&\exp\{-c^{\prime\prime}n\},
\end{eqnarray*}
for some constant $c'$ and $c^{\prime\prime}$, and $\eta$ is defined in the proof of { Theorem \ref{theo1}}. Also, by Assumption 5, $\lambda_n/n$ will be bounded below and above.
\end{proof}

\begin{proof}[{\bf Proof of Corollary~\ref{cor2}.}]
 Since we showed that the asymptotic equivalence between $Q_\bk$ and $Q_\bk^*$ in the proof of Corollary \ref{cor1}, by following exactly same steps in the proof of Theorem \ref{theo2} we can prove the model selection consistency under piMoM prior densities.
\end{proof}

\begin{proof}[{\bf Proof of Proposition~\ref{lemma:rLASSO}}.]

  We shall show that for any $\alpha_\bk = \widehat\beta_\bk+\bige_n$ with $\bige_n=\{\epsilon_{n,j}\}_{j=1,\dots,|\bk|}$ and $|\epsilon_{n,j}| \succ \epsilon_n^*$ for at least one $j\in\{1,\dots,|\bk|\}$, $P\{g(\alpha_\bk ; \bk) < g(\widetilde\beta_\bk^* ; \bk)\}\to 0$ as $n$ tends to $\infty$, where $\widetilde\beta_\bk^* \in {B}(\widehat\beta_\bk ; \epsilon_n^*)$ with $\epsilon_n^*\asymp (\tau_{n,p}/n)^{1/3}$. More specifically, we set $\widetilde\beta_{\bk,j}^* = \widehat\beta_{\bk,j}+\epsilon_n^*$ for $j\in \bt$ and $\widetilde\beta_{\bk,j}^* = \widehat\beta_{\bk,j}$ for $j\in \bt^c$. 
  Without loss of generality, we assume that $X_j^\T X_j = n$ for $j=1,\dots,p$.
  
    Note that
  \begin{eqnarray*}\nonumber 
  g(\alpha_\bk ; \bk) &=&  || X_\bk\alpha_\bk -  X_\bk\widehat\beta_\bk||^2_2 + \sum_{j=1}^{|\bk|} \tau_{n,p}/|\alpha_{\bk,j}| + D_n\\ \nonumber 
  &=&    \sum_{j=1}^{|\bk|} \{ c_jn \epsilon_{n,j}^2  + \tau_{n,p}/|\widehat\beta_{\bk,j}+\epsilon_{n,j}|  \}+ D_n, \\ 
  \end{eqnarray*}
  for some constants $c_j$ such that $C_L<c_j<C_U$ for $j=1,\dots,|\bk|$, and some randome variable $D_n$ that are not relevant to $\alpha_\bk$. 
Then,
\begin{eqnarray}
&&P\{g(\alpha_\bk ; \bk) < g(\widetilde\beta_\bk^* ; \bk)\}\nonumber\\
&\leq&P\left[  \sum_{j=1}^{|\bk|} \left\{ c_jn \epsilon_{n,j}^2  + \frac{\tau_{n,p}}{|\widehat\beta_{\bk,j}+\epsilon_{n,j}|}  \right\} <  \sum_{j=1}^{|\bk|} \left\{ c_jn \epsilon_n^{*2}  +\frac{\tau_{n,p}}{|\widetilde\beta_{\bk,j}^*|} \right\} \right]\nonumber\\ 
&\leq& P\left[\sum_{j\in S^*\cap S_{\bk,n}}\left\{ c_jn \epsilon_{n,j}^2  + \frac{\tau_{n,p}}{|\widehat\beta_{\bk,j}|+|\epsilon_{n,j}|}-t_{n,j} \right\}< \sum_{j\in S^*\cap S_{\bk,n}} \left\{c_jn\epsilon_n^{*2}  + \frac{\tau_{n,p}}{|\widetilde\beta_{\bk,j}^*|} \right\}\right]\label{eq:rlasso1}\\
&&+P\left[ \sum_{j\in S^{*}\cap S_{\bk,n}^c} \left\{ c_jn \epsilon_{n,j}^2  + \frac{\tau_{n,p}}{|\widehat\beta_{\bk,j}|+|\epsilon_{n,j}|}- t_{n,j}\right\} < \sum_{j\in S^{*}\cap S_{\bk,n}^c}\left\{ c_jn\epsilon_n^{*2}  +\frac{\tau_{n,p}}{|\widetilde\beta_{\bk,j}^*|}\right\} \right]\label{eq:rlasso2}\\
&&+P\left[\sum_{j\in S^{*c}}\left\{ c_jn \epsilon_{n,j}^2  + \frac{\tau_{n,p}}{|\widehat\beta_{\bk,j}|+|\epsilon_{n,j}|} + \sum_{j \in S^*} \frac{t_{n,j}}{|S^{*c}|}\right\} <  \sum_{j\in S^{*c}}\left\{c_jn\epsilon_n^{*2}  +\frac{\tau_{n,p}}{|\widetilde\beta_{\bk,j}^*|} \right\}\right],
\label{eq:rlasso3}
\end{eqnarray}
where $t_n$ is an arbitrary sequence such that $t_{n,j} = n^{2/3}\tau_{n,p}^{1/3}\epsilon_{n,j}$, and $S^*=\{j\in\{1,\dots,p\}:|\epsilon_{n,j}| \succ \epsilon_n^*\}$, and $S_{\bk,n} = \{ j\in \bk : |\widehat\beta_{\bk,j}| < \epsilon_n^* \}$. Then, to complete the proof, it is sufficient to show that each of \eqref{eq:rlasso1}, \eqref{eq:rlasso2}, and \eqref{eq:rlasso3} converges to zero.

Since $n(\widehat\beta_{\bk,j}-\beta_{\bt,j}^0)^2/\sigma^{2} \sim \chi^2_1$ for $j= 1,\dots,|\bk|$,
 \[
 P(|\widehat\beta_{\bt,j}-\beta_{\bt,j}^0|>\zeta_n)\leq(\pi n\zeta_n^2/2)^{-1/2}\exp\{ -n\zeta_n^2/(2\sigma^2) \},
 \] for any $\zeta_n>0$. This implies that 
 $S_{\bk,n} = \bt$ at least probability $1-|\bt^c|(\pi n\epsilon_n^{*2}/2)^{-1/2}\exp\{ -n\epsilon_n^{*2}/(2\sigma^2) \}$. Therefore,  the equation \eqref{eq:rlasso1} can be asymptotically bounded by
 \begin{eqnarray*}
 &&\sum_{j\in S^*\cap \bt}P\left[ c_jn \epsilon_{n,j}^2  + \frac{\tau_{n,p}}{2|\epsilon_{n,j}|}-t_{n,j} <  c_jn\epsilon_n^{*2}  +  \frac{\tau_{n,p}}{|\widehat\beta_{\bk,j}+\epsilon_n^*|}\right]\\
 &\leq& \sum_{j\in S^*\cap \bt}P\left[ |\widehat\beta_{\bk,j}+\epsilon_n^*|<c\tau_{n,p}(n\epsilon_{n,j}^2-t_{n,j}+\tau_{n,p}/|\epsilon_{n,j}|)^{-1} \right],
 \end{eqnarray*}
 for some positive constant $c$. Consider Lemma \ref{lemma:cher} with $\lambda_n = n\epsilon_n^{*2}/\sigma^2$ and $w_n = c^2\tau_{n,p}^2/\{\epsilon^{*2}_n(n\epsilon_{n,j}^2-t_{n,j}+\tau_{n,p}/|\epsilon_{n,j}|)^2\}$ for $j\in S^*\cap \bt$. Since $n\epsilon_{n,j}^2 \succ n^{1/3}\tau_{n,p}^{2/3}$ for $j\in S^*$ implies $w_n\to 0$, Lemma \ref{lemma:cher} guarantees that the last display is bounded by $c'|S^*\cap \bt|\lambda_n^{-1}\exp\{-\lambda_n(1-w_n)^2\}$ for some constant $c'$, which means that \eqref{eq:rlasso1} converges to zero as $n$ tends to $0$. By following the same steps, one can show that \eqref{eq:rlasso2} converges to zero.
 
 Also, \eqref{eq:rlasso3} can be asymptotically bounded by
 \begin{eqnarray*}
 &&\sum_{j\in S^{*c}\cap\bt}P\left[ c_jn \epsilon_{n,j}^2  +  \frac{\tau_{n,p}}{2|\epsilon_{n,j}|} +c\min_{j\in S^*} t_{n,j}<  c_jn\epsilon^{*2}+  \frac{\tau_{n,p}}{|\widehat\beta_{\bk,j}+\epsilon_n^*|}\right]\\
 &&+\sum_{j\in S^{*c}\cap\bt^c}P\left[ c_jn \epsilon_{n,j}^2  +  \frac{\tau_{n,p}}{2|\widehat\beta_{\bk,j}+\epsilon_n^*|} +c\min_{j\in S^*} t_{n,j} <  c_jn\epsilon^{*2}+  \frac{\tau_{n,p}}{|\widehat\beta_{\bk,j}+\epsilon_n^*|}\right]\\
 &\leq& \sum_{j\in S^{*c}\cap \bt}P\left[ |\widehat\beta_{\bk,j}+\epsilon_n^*|<c'\tau_{n,p}(n\epsilon_{n,j}^2-n\epsilon_{n}^{*2}+c\min_{j\in S^*} t_{n,j}+\tau_{n,p}/|\epsilon_{n,j}|)^{-1} \right]\\
 &&+\sum_{j\in S^{*c}\cap \bt^c}P\left[ |\widehat\beta_{\bk,j}+\epsilon_n^*|<c''\tau_{n,p}(n\epsilon_{n,j}^2-n\epsilon_{n}^{*2}+c\min_{j\in S^*} t_{n,j}+\tau_{n,p}/|\epsilon_{n,j}|)^{-1}/2 \right],
\end{eqnarray*} 
 where $c$, $c'$, and $c''$ are some positive constants. For the first term in the last line of the above display, by setting $\lambda_n = n\epsilon^{*2}/\sigma^2$ and $w_n =c^2\tau_{n,p}^2/\{\epsilon^{*2}_n(n\epsilon_{n,j}^2 - n\epsilon_n^* +c\min_{j\in S^*} t_{n,j}+\tau_{n,p}/|\epsilon_{n,j}|)^2\}$, we can apply Lemma \ref{lemma:cher}. Since  $w_n \prec \tau_{n,p}^2(\epsilon_n^*\min_{j\in S^*} t_{n,j})^{-2}$ implies  $w_n\to 0$, the first term in the above display converges to zero by Lemma \ref{lemma:cher}. Similarly, the second term also converges to zero.   
\end{proof}
 
\section{Laplace Approximations of Marginal Likelihoods}
In this section, we provide the Laplace approximation of the marginal likelihoods based on the nonlocal priors. Because closed form expressions for posterior model probabilities based on modified peMoM priors and modified piMoM priors are not available, we estimate the posterior model probabilities using Laplace approximations.  For posterior probabilities based on the peMoM priors, an inverse-Gamma density with parameters $(a_0,b_0)$ on  $\sigma^2$ the Laplace approximation to the marginal density of the data for model $\bk$ can be expressed as
\begin{eqnarray}\label{laplace}
\pi(\bk\mid y) \propto (2\pi)^{|\bk|/2} \left| V(\beta^*_\bk,\sigma^{2*}) \right| ^{-1/2}\exp\{ f(\beta^*_\bk,\sigma^{2*}) \}p(\bk),
\end{eqnarray}
where
\begin{eqnarray*}
(\beta^*_\bk,\sigma^{2*}) &=& \underset{(\beta_\bk,\sigma^2)} {\mathrm{argmax}}f(\beta_\bk,\sigma^2)\\
f(\beta_\bk,\sigma^2) &=& -\left(n/2+|\bk|/2+a_0+1 \right)\log \sigma^2 -(y-X_\bk\beta_\bk)^T(y-X_\bk\beta_\bk)/(2\sigma^2)-\beta_\bk^T\beta_\bk/(2\sigma^2\tau_{n,p})\\
&&-\sum_{j=1}^{|\bk|}\tau_{n, p}/\beta_{\bk,j}^2 + |\bk|(2/\sigma^2)^{1/2} - b_0/\sigma^2+|\bk|(\log\tau_{n,p})/2,
\end{eqnarray*}
and $ V(\beta_\bk,\sigma^{2})$ is a $(|\bk|+1)\times(|\bk|+1)$ matrix with the following blocks:
\begin{eqnarray*}
V_{11} &=& X_\bk^T X_\bk/\sigma^2+ I_\bk/\sigma^2\tau_{n,p}+diag\left\{ 6\tau_{n, p}/\beta_{\bk,j}^4 \right\}_{j=1,\ldots,|\bk|}\\
V_{12}&=& X_\bk^T(y-X_\bk\beta_\bk)/\sigma^4-\beta_\bk/\{\sigma^4\tau_{n,p}\}\\
V_{22}&=& -(n/2+|\bk|/2+a_0+1)/\sigma^4+ (y-X_\bk\beta_\bk)^T(y-X_\bk\beta_\bk)/\sigma^6-\beta_\bk^T\beta_\bk/\tau_{n,p} \\
&&-3|\bk|2^{1/2}\sigma^{-5}/4+2b_0/\sigma^6.
\end{eqnarray*}
For the piMoM priors on $\beta_\bk$,  the Laplace approximation of the posterior model probability can be expressed as in \eqref{laplace}, but with
\begin{eqnarray*}
f(\beta_\bk,\sigma^2) &=& -\left(n/2 + a_0+1 \right)\log \sigma^2 -(y-X_\bk\beta_\bk)^T(y-X_\bk\beta_\bk)/(2\sigma^2)\\
&&-\sum_{j=1}^{|\bk|}\big\{ r\log(\beta_{\bk,j}^2)+ \tau_{n, p}/\beta_{\bk,j}^2  \big\} + |\bk|\big\{ (r-1/2)\log \tau_{n,p} -\log \Gamma(r-1/2) \big\} - b_0/\sigma^2,
\end{eqnarray*}
and $ V(\beta_\bk,\sigma^{2})$ a $(|\bk|+1)\times(|\bk|+1)$ matrix with the following blocks:
\begin{eqnarray*}
V_{11} &=&X_\bk^TX_\bk/\sigma^2 +diag\left\{ 6\tau_{n, p}/\beta_{\bk,j}^4 - 2r/\beta_{\bk,j}^2 \right\}_{j=1,\ldots,|\bk|}\\
V_{12}&=& X_\bk^T(y-X_\bk\beta_\bk)/\sigma^4\\
V_{22}&=& -(n/2+a_0+1)/\sigma^4+ (y-X_\bk\beta_\bk)^T(y-X_\bk\beta_\bk)/\sigma^6
+2b_0/\sigma^6.
\end{eqnarray*}

\end{document}